\definecolor{MyDarkBlue}{rgb}{0.15,0.25,0.45}
\let\fn\footnote
\renewcommand{\footnote}[1]{\linespread{1.1}\fn{#1}\linespread{1.29}}
\makeatletter\renewcommand{\section}{\@startsection
{section}{1}{\z@}{-3.5ex plus -1ex minus
    -.2ex}{2.3ex plus .2ex}{\bf }}
\makeatletter\renewcommand{\subsection}{\@startsection{subsection}{2}{\z@}{-3.25ex
plus -1ex minus
   -.2ex}{1.5ex plus .2ex}{\bf }}
\makeatletter\renewcommand{\subsubsection}{\@startsection{subsubsection}{3}{-2.45ex}{-3.25ex
plus -1ex minus -.2ex}{1.5ex plus .2ex}{\it }}
\renewcommand{\thesection}{\arabic{section}}
\renewcommand{\thesubsection}{\arabic{section}.\arabic{subsection}}
\renewcommand{\@seccntformat}[1]{\@nameuse{the#1}.~~}
\renewcommand{\theequation}{\thesection.\arabic{equation}}
\makeatletter \@addtoreset{equation}{section}
\newtheorem{thm}{Theorem}[section]
\renewcommand{\thethm}{\thesection.\arabic{thm}}
\newtheorem{lemma}[thm]{Lemma}
\newtheorem{definition}[thm]{Definition}
\newtheorem{theorem}[thm]{Theorem}
\newtheorem{proposition}[thm]{Proposition}
\newtheorem{corollary}[thm]{Corollary}
\newcommand{\myxymatrix}[1]{\vcenter{\vbox{\xymatrix{#1}}}}
\renewcommand{\appendices}{
\section*{Appendix}\label{appendices}\setcounter{subsection}{0}
\addcontentsline{toc}{section}{Appendix}
\setcounter{equation}{0}
\makeatletter
\renewcommand{\theequation}{\Alph{subsection}.\arabic{equation}}
\renewcommand{\thesubsection}{\Alph{subsection}}
\renewcommand{\thethm}{\Alph{subsection}.\arabic{thm}}
\@addtoreset{equation}{subsection}
\@addtoreset{thm}{subsection}
\makeatother
}
\def\slasha#1{\setbox0=\hbox{$#1$}#1\hskip-\wd0\hbox to\wd0{\hss\sl/\/\hss}}
\def\periodb#1{\setbox0=\hbox{$#1$}#1\hskip-\wd0\hbox to\wd0{-}}
\newcommand{\unit}{\mathbbm{1}}   			
\newcommand{\id}{\mathrm{id}}   			
\newcommand{\CC}{\mathcal{C}}
\newcommand{\CCC}{\mathscr{C}}
\newcommand{\CH}{\mathcal{H}}
\newcommand{\CL}{\mathcal{L}}
\newcommand{\CM}{\mathcal{M}}
\newcommand{\CQ}{\mathcal{Q}}
\newcommand{\CR}{\mathcal{R}}
\newcommand{\CT}{\mathcal{T}}
\newcommand{\CCT}{\mathscr{T}}
\newcommand{\CV}{\mathcal{V}}
\newcommand{\CCX}{\mathscr{X}}
\newcommand{\CE}{\mathcal{E}}
\newcommand{\frg}{\mathfrak{g}}				
\newcommand{\frU}{\mathfrak{U}}
\newcommand{\frX}{\mathfrak{X}}
\newcommand{\FR}{\mathbbm{R}}     			
\newcommand{\NN}{\mathbbm{N}}     			
\newcommand{\RZ}{\mathbbm{Z}}     			
\newcommand{\dd}{\mathrm{d}}     			
\newcommand{\dpar}{\partial}     			
\newcommand{\embd}{{\hookrightarrow}}     		
\newcommand*{\longembd}{\ensuremath{\lhook\joinrel\relbar\joinrel\rightarrow}}
\newcommand{\de}{\mathrm{e}}     			
\newcommand{\di}{\mathrm{i}}     			
\newcommand{\eand}{{\qquad\mbox{and}\qquad}}     		
\newcommand{\ewith}{{\qquad\mbox{with}\qquad}}
\newcommand{\der}[1]{\frac{\dpar}{\dpar #1}}   		
\newcommand{\derr}[2]{\frac{\dpar #1}{\dpar #2}}   	
\newcommand{\au}{\mathfrak{u}}
\newcommand{\ao}{\mathfrak{o}}
\newcommand{\sU}{\mathsf{U}}     			
\newcommand{\sG}{\mathsf{G}}
\newcommand{\sL}{\mathsf{L}}
\newcommand{\sT}{\mathsf{T}}
\newcommand{\sLie}{\mathsf{Lie}}
\newcommand{\sH}{\mathsf{H}}
\newcommand{\sGL}{\mathsf{GL}}
\newcommand{\sO}{\mathsf{O}}
\newcommand{\acton}{\vartriangleright}     			
\renewcommand{\remark}[1]{}     				
\def\tyng(#1){\hbox{\tiny$\yng(#1)$}}			
\def\tyoung(#1){\hbox{\tiny$\young(#1)$}}			
\newcommand{\beq}{\begin{eqnarray}}
\newcommand{\eeq}{\end{eqnarray}}
\begin{document}
\begin{titlepage}
\begin{flushright}
 ITP--UH--22/16\\
 EMPG--16--18
\end{flushright}
\vskip 1.0cm
\begin{center}
{\LARGE \bf Extended Riemannian Geometry I:\\[0.2cm] Local Double Field Theory}
\vskip 1.cm
{\Large Andreas Deser$^{a}$ and Christian S\"amann$^b$}
\setcounter{footnote}{0}
\renewcommand{\thefootnote}{\arabic{thefootnote}}
\vskip 1cm
{\em${}^a$ Institut f\"ur Theoretische Physik \\
Appelstr.\ 2\\
30167 Hannover, Germany
}\\
and\\
{\em Istituto Nationale di Fisica Nucleare \\
Via P.~Giuria 1\\
10125 Torino, Italy
}\\[0.5cm]
{\em${}^b$
Department of Mathematics,
Heriot-Watt University\\
Colin Maclaurin Building, Riccarton, Edinburgh EH14 4AS, U.K.}\\
and\\  {\em Maxwell Institute for Mathematical Sciences, Edinburgh,
  U.K.} \\ and \\ {\em Higgs Centre for Theoretical Physics,
  Edinburgh, U.K.}\\[0.5cm]
{Email: {\ttfamily deser@to.infn.it , c.saemann@hw.ac.uk}}
\end{center}
\vskip 1.0cm
\begin{center}
{\bf Abstract}
\end{center}
\begin{quote}
We present an extended version of Riemannian geometry suitable for the description of current formulations of double field theory (DFT). This framework is based on graded manifolds and it yields extended notions of symmetries, dynamical data and constraints. In special cases, we recover general relativity with and without 1-, 2- and 3-form gauge potentials as well as DFT. We believe that our extended Riemannian geometry helps to clarify the role of various constructions in DFT. For example, it leads to a covariant form of the strong section condition. Furthermore, it should provide a useful step towards global and coordinate invariant descriptions of T- and U-duality invariant field theories.
\end{quote}
\end{titlepage}

\begin{center}
 
\end{center}

\tableofcontents

\newpage

\section{Introduction and results}

One of the most prominent features differentiating string theory from field theories of point particles is the symmetry known as T-duality \cite{Giveon:1994fu}. This symmetry interchanges the momentum modes of a string with its winding modes along compact cycles and it is one of the main ingredients in the web of dualities connecting the various different string theories. Considering the success of low-energy space-time descriptions of string theory in terms of ordinary field theories of the massless string modes, it is tempting to ask if there is some manifestly T-dual completion of these descriptions. It is the aim of double field theory (DFT) to provide such a T-duality invariant version of supergravity. The idea of DFT goes back to the early 90ies \cite{Tseytlin:1990va,Siegel:1993th,Siegel:1993xq} and the development of double geometry \cite{Hull:2004in,Hull:2009sg} led to the paper \cite{Hull:2009mi}, which gave DFT its name and seems to have triggered most of the current interest in this area. A detailed review of DFT is found in the overview papers \cite{Zwiebach:2011rg,Berman:2013eva,Aldazabal:2013sca,Hohm:2013bwa}.

In this paper, we shall not concern ourselves with the issue of how reasonable the main objective of DFT is. Rather, we would like to explain some of the mathematical structures involved in currently available formulations. These formulations are local and linear in various senses, and we believe that the mathematical picture we provide points towards global descriptions. In particular, we give an appropriate extension of Hitchin's generalized geometry \cite{Hitchin:2004ut,Hitchin:2005in,Gualtieri:2003dx}.

There have been a number of previous papers clarifying the underlying geometrical structures from various perspectives \cite{Vaisman:2012ke,Vaisman:2012px,Hohm:2012mf,Cederwall:2014kxa,Blumenhagen:2014gva,Deser:2014mxa,Bakas:2016nxt}. We also want to draw attention to the phase space perspective described in \cite{Aschieri:2015roa}. 

In this paper, we choose to follow a different approach. The $B$-field of string theory is well-known to be part of the connective structure of an abelian gerbe \cite{Gawedzki:1987ak,Freed:1999vc}, which is a categorified principal $\sU(1)$-bundle. In this picture, we have two kinds of symmetries: the diffeomorphisms on the base manifolds as well as the gauge transformations on the connective structure of the abelian gerbe. It is natural to expect that together, they form a categorified Lie group, or Lie 2-group for short. At the infinitesimal level, we have a Lie 2-algebra, which is most concisely described by a symplectic N$Q$-manifold. The latter objects are simply symplectic graded manifolds $(\CM,\omega)$ endowed with a vector field $Q$ of degree 1 satisfying $Q^2=0$ and $\CL_Q\omega=0$. They are familiar to physicists from BV-quantization and string field theory \cite{Lada:1992wc} and mathematicians best think of them as symplectic $L_\infty$-algebroids.

Our expectation is confirmed by the fact that the symplectic N$Q$-manifold known as Courant algebroid (a symplectic Lie 2-algebroid) already features prominently in generalized geometry and various other examples appear in mathematical discussions of $T$-duality. The Lie 2-algebra of infinitesimal symmetries can here be described by a derived bracket construction on the Courant algebroid \cite{Roytenberg:1998vn,Roytenberg:1999aa} involving the Poisson structure induced by the symplectic form as well as the Hamiltonian of the homological vector field $Q$. It has long been known that also the Lie bracket of vector fields is a derived bracket $\iota_{[X,Y]}=[\iota_X,[\dd,\iota_Y]]$, cf.\ \cite{Kosmann-Schwarzbach:0312524}. It is therefore natural to ask if even the Lie 2-algebra underlying double field theory can be realized by a derived bracket construction. It was shown in \cite{Deser:2014mxa} that the C-bracket of double field theory,  which is part of the Lie 2-algebra structure, is a derived bracket. Moreover  by applying the easiest form of the strong section condition $\tilde \partial = 0$, it reduces to the Courant bracket of generalized geometry.

To develop the full Lie 2-algebra, however, one quickly realizes that ordinary symplectic N$Q$-manifolds $(\CM,\omega,Q)$ are not sufficient. One rather needs to lift the condition $Q^2=0$ and subsequently restrict the algebra of functions $\CC^\infty(\CM)$ to a suitable subset. We will develop this more general picture of derived brackets in great detail, studying what we call pre-N$Q$-manifolds. These pre-N$Q$-manifolds still carry categorified Lie algebras induced by derived brackets on restricted sets of functions $\CC^\infty(\CM)$, and they are the right extension of Hitchin's generalized geometry to capture the geometry of local double field theory.

We tried to make our presentation self-contained and in particular to provide a very detailed introduction to the language of N$Q$-manifolds, which might not be well-known by people studying double field theory. We therefore begin with reviews of generalized geometry, double field theory and N$Q$-manifolds in sections \ref{sec:review} and \ref{sec:NQ}. Note that unless stated otherwise, we always focus on local structures and spacetimes are contractible. 

Our actual discussion then starts in section \ref{sec:extendedRG} with the definition of pre-N$Q$-manifolds and the identification of an $L_\infty$-algebra structure arising from derived brackets. This yields the notion of extended vector fields generating infinitesimal extended symmetries as well as a restriction of the algebra of functions to a subset. We then introduce extended tensors and define an action of the $L_\infty$-algebra structure on those. In this picture, a natural notion of extended metric arises, together with candidate terms for an invariant action functional.

This rather abstract discussion is then filled with life by studying a number of examples in section \ref{sec:examples}. We show that the pre-N$Q$-manifolds suitable for the description of the symmetries of Einstein-Hilbert-Deligne actions, by which we mean general relativity minimally coupled to $n$-form gauge potentials, are the Vinogradov algebroids $T^*[n]T[1]M$. The fact that the symmetries are recovered appropriately might not be too surprising to people familiar with N$Q$-manifolds; new in this context is the construction of DFT-like action functionals for Einstein-Hilbert-Deligne theories.

Our formalism develops its real strength when applied to double field theory. The relevant pre-N$Q$-manifold here, $\CE_2(M)$, is obtained in a rather nice way as a half-dimensional submanifold from the Vinogradov algebroid $T^*[2]T[1]X$ for the doubled space $X=T^*M$. That is, double field theory can be regarded as a restriction of generalized geometry on a doubled space.

The action of extended symmetries on extended tensors is then indeed the generalized Lie derivative familiar from double field theory. Moreover, the restrictions imposed on extended functions, vectors and tensors are coordinate invariant versions of the strong section condition of double field theory. This is probably the most interesting aspect of our formalism presented in this paper: Since our restrictions arise classically as a condition for obtaining the right symmetry structures, they automatically yield well-defined, covariant and transparent conditions on all types of fields. Also, the usual examples of solutions to the strong section condition yield indeed correct restrictions on extended fields.

It is interesting to note that our restrictions are also slightly weaker for tensor fields than the usual strong section condition of double field theory. The latter is found to be too strong in various contexts \cite{Blumenhagen:2016vpb}, and our slight weakening might provide a resolution to this issue.

Another point that becomes more transparent in our formalism is the problem of patching together the local descriptions of double field theory to a global framework and we will comment in more detail on this point in section \ref{ssec:global}. In \cite{Berman:2014jba}, the integrated action of the generalized Lie derivative on fields was used to patch local doubled fields to global ones. This, however, seems to be consistent only if the curvature of the $B$-field is globally exact \cite{Papadopoulos:2014mxa}, which implies that the underlying gerbe is topologically trivial. From our perspective, this is rather natural as it is already known in generalized geometry that the Courant algebroid needs to be twisted by a closed 3-form $T$ to describe accurately the symmetries of a gerbe with Dixmier-Douady class $T$.

In our formalism, it is clear how the generalized Lie derivative of double field theory should be twisted and we give the relevant definitions and an initial study of such twists in section \ref{ssec:twisted_extended_symmetries}. The resulting symmetries are the ones that should be used to patch together local descriptions of double field theory in the case involving topologically non-trivial gerbes.

Finally, we also comment on the definitions of covariant derivative, torsion and Riemann curvature tensors. The problem here is that the Lie bracket in the various definitions should be replaced by a Lie 2-algebra bracket, which in general breaks linearity of the tensors with respect to multiplication of extended vector fields by functions. After introducing an extended covariant derivative, we can write the Gualtieri torsion of generalized geometry \cite{Gualtieri:2007bq} in a nice form using our language. Recall that the Gualtieri torsion is also the appropriate notion of a torsion tensor in double field theory \cite{Hohm:2012mf}. Our expression for the torsion contains in particular the ordinary torsion tensor as a special case. The same holds for the Riemann tensor of double field theory \cite{Hohm:2012mf}, see also \cite{Jeon:1011.1324,Jeon:2011cn,Hohm:2011si,Jurco:2015xra}. It is noteworthy that in the discussion of the latter, our weakened and invariant strong section condition appears.

There are a number of evident open problems that we plan to attack in future work. First, an extension of our formalism suitable for exceptional field theory is in preparation \cite{Deser:2017aa}. Second, we would like to develop a better understanding of the extension of Riemannian geometry underlying the actions of double and exceptional field theory using our language. Third, one should study our twists of the generalized Lie derivative in more detail, which can be considered as a stepping stone to fourth: developing an understanding of an appropriate global picture for double field theory. Since our approach showed that the local description of classical double field theory is mathematically consistent and reasonable, we have no more reason to doubt the existence of a global description.

\section{Lightning review of double field theory}\label{sec:review}

Let us very briefly sum up the key elements of local generalized geometry and local double field theory, since we shall reproduce them in our formalism from a different starting point.

\subsection{Generalized geometry}\label{ssec:generalised_geometry}

Any target space description of classical strings has to include the massless excitations of the closed string. These consist of the spacetime metric $g$, the Kalb-Ramond $B$-field and the dilaton $\phi$. The former two can be elegantly described in Hitchin's generalized geometry \cite{Hitchin:2004ut,Hitchin:2005in,Gualtieri:2003dx}. Underlying this description is a vector bundle $E_2\cong TM\oplus T^*M$ over some $D$-dimensional manifold $M$, which fits into the short exact sequence
\begin{equation}
 0 \longrightarrow T^*M \longembd E_2\xrightarrow{~\rho~} TM \longrightarrow 0~.
\end{equation}
Sections $\sigma$ of the map $\rho$ are given by rank 2 tensors, which can be split into their symmetric and antisymmetric parts, $\sigma_{\mu\nu}=g_{\mu\nu}+B_{\mu\nu}$. Moreover, sections of $E_2$ describe the infinitesimal symmetries of the fields $g,B,\phi$ and they are encoded in a vector field $X$ capturing infinitesimal diffeomorphisms and a 1-form $\alpha$ describing the gauge symmetry. Note that $E_2$ comes with a natural inner product of signature $(D,D)$, which is given by
\begin{equation}
 (X_1+\alpha_1,X_2+\alpha_2)=\iota_{X_1}\alpha_2+\iota_{X_2}\alpha_1~.
\end{equation}
To write this more concisely, we introduce generalized vectors $X^M=(X^\mu,\alpha_\mu)$ with index $M=1,\ldots,2D$. Then we have
\begin{equation}\label{eq:dft_eta}
 (X_1,X_2)=X_1^MX_2^N\eta_{MN}\ewith \eta_{MN}=\eta^{MN}=\left(\begin{array}{cc} 0 & \unit \\ \unit & 0 \end{array}\right)~.
\end{equation}
Note that the metric $\eta$ reduces the structure group of the bundle $E_2$ to $\sO(D,D)$. The Lie algebra $\ao(D,D)$ consists of matrices of the form
\begin{equation}
 \left(\begin{array}{cc} A & B \\ \beta & -A^T \end{array}\right)~,
\end{equation}
where $A$ is an arbitrary matrix, generating $\sGL(D)\subset \sO(D,D)$, and $\beta\in \wedge^2 \FR^D$ and $B\in \wedge^2 (\FR^D)^*$ generate what are known as $\beta$-transformations and $B$-transformations in generalized geometry \cite{Gualtieri:2003dx,Grana:2008yw}. 

The transformation rule for the metric and the $B$-field can also be written in a homogeneous way, by introducing the {\em generalized metric}
\cite{Shapere:1988zv,Giveon:1988tt,Maharana:1992my,Gualtieri:2003dx}
\begin{equation}\label{eq:metric_H_O(D,D)}
 \CH_{MN}=\left(\begin{array}{cc} g_{\mu\nu}-B_{\mu\kappa}g^{\kappa\lambda}B_{\lambda\nu}  &  B_{\mu\kappa}g^{\kappa\nu}\\ -g^{\mu\kappa}B_{\kappa\nu}  &g^{\mu\nu}  \end{array}\right)~.
\end{equation}
Note that this metric is obtained from the ordinary metric $g$ via a finite $B$-field transformation with adjoint action
\begin{equation}\label{eq:B-field_trafo}
 \CH=\left(\begin{array}{cc} \unit & B \\ 0 & \unit \end{array}\right)\left(\begin{array}{cc} g & 0 \\ 0 & g^{-1} \end{array}\right)\left(\begin{array}{cc} \unit & B \\ 0 & \unit \end{array}\right)^T~.
\end{equation}
This explains the terminology, a $B$-field transformation modifies linearly the value of the $B$-field. Moreover, we have the relation $\CH^{-1}=\eta^{-1}\CH\eta^{-1}$.

Under infinitesimal diffeomorphisms and gauge transformations encoded in the generalized vector $X$, $\CH$ transforms according to
\begin{equation}\label{eq:action_generalized_Lie}
\delta_X \CH_{MN}=X^P\dpar_P\CH_{MN}+(\dpar_M X^P-\dpar^PX_M)\CH_{PN}+(\dpar_N X^P-\dpar^PX_N)\CH_{MP}~,
\end{equation}
where $\dpar_M=(\dpar_\mu,\dpar^\mu=0)$. Underlying this transformation is a {\em generalized Lie derivative} whose action, e.g.\ on tensors $T^M{}_N$, is defined as
\begin{equation}\label{eq:action_extended_Lie}
 \hat \CL_X T^M{}_N:=X^K\dpar_K T^M{}_N+(\dpar^MX_K-\dpar_KX^M)T^K{}_N+(\dpar_NX^K-\dpar^KX_N)T^M{}_K~.
\end{equation}
Note that $\hat\CL_X$ is compatible with the Courant bracket $[-,-]_C$, which is the natural bracket between generalized vector fields:
\begin{equation}
 \hat\CL_X\hat \CL_Y-\hat \CL_Y\hat \CL_X=\hat\CL_{[X,Y]_C}~,
\end{equation}
where
\begin{equation}\label{eq:Courant_bracket}
 \Bigl([X,Y]_C\Bigr)^M :=\,X^K\partial_K Y^M - Y^K\partial_K X^M -\tfrac{1}{2}\Bigl(X^K\partial^M Y_K - Y^K\partial^M X_K\Bigr)\;.
\end{equation}
Formulas \eqref{eq:action_generalized_Lie}, \eqref{eq:action_extended_Lie} and  \eqref{eq:Courant_bracket} clearly hint at a completion of the picture where $\dpar^\mu\neq 0$. Starting from $\dpar_M=(\dpar_\mu,\dpar^\mu=0)$ we can transform to a situation with $\dpar^\mu\neq 0$ by applying a global $\sO(D,D)$-transformations with non-trivial off-diagonal components $\beta$ or $B$. Certain subclasses of these transformations correspond to T-dualities, which leads us to formulas which are invariant under T-duality. This is the aim of double field theory.

\subsection{Double field theory}\label{ssec:DFT}

The moduli space of toroidal compactifications of closed string theory on the $D$-dimensional torus $T^D$ is the {\em Narain moduli space} \cite{Narain:1985jj}
\begin{equation}
 \mathfrak{M}=\sO(D,D,\RZ)~\backslash~\sO(D,D,\FR)~/~\sO(D,\FR)\times \sO(D,\FR)~.
\end{equation}
Here, $\sO(D,D,\FR)$ is the global symmetry group, $\sO(D,D,\RZ)$ are the T-dualities and\linebreak $\sO(D,\FR)\times \sO(D,\FR)$ are the remaining Lorentz transformations. The latter can be promoted to a local symmetry group. Beyond these three symmetry groups, we have the local diffeomorphisms together with the local gauge symmetries of the $B$-field. These are precisely the various symmetry groups appearing in the previous section.

To obtain an $\sO(D,D,\FR)$-invariant formulation, we complement the coordinates $x^\mu$ by an additional set, $x_\mu$, to achieve the total $x^M=(x^\mu,x_\mu)$. Generalized vectors here are special sections of the bundle $E_2$ over the doubled space, and we write
\begin{equation}
 X=X^\mu\left(\der{x^\mu}+\dd x_\mu\right)+X_\mu\left(\der{x_\mu}+\dd x^\mu\right)~.
\end{equation}

String theory now requires that states or fields satisfy the level matching condition 
\begin{equation}
 (L_0-\bar L_0)f(x)=0~,
\end{equation}
which, for the massless subsector $N=\bar N=1$, translates directly into
\begin{equation}
 \der{x^\mu}\der{x_\mu}f(x)=0~.
\end{equation}
This is the {\em weak} or {\em physical section condition}. We expect that there is an algebra structure on the fields satisfying the weak section condition and in particular, multiplication should close. This requires that we also impose the {\em strong section condition}, 
\begin{equation}\label{eq:strong_condition}
 \eta^{MN}\left(\der{x^M}f(x)\right)\left(\der{x^N}g(x)\right)=\left(\der{x^\mu}f(x)\right)\left(\der{x_\mu}g(x)\right)+\left(\der{x^\mu}g(x)\right)\left(\der{x_\mu}f(x)\right)=0
\end{equation}
on any pair of fields $f(x)$ and $g(x)$. A straightforward example of a solution to the strong section is simply to put $\der{x_\mu}=0$ and for consistency, we should also put its dual, $\dd x_\mu$ to zero. This leads back to generalized geometry.

Note that the meaning of the strong condition \eqref{eq:strong_condition} on tensor fields is rather opaque, as it is clearly only covariant on functions. Also, some constructions in double field theory suggest that this condition is too strong and problematic, as mentioned in the introduction. 

The metric $g$ and the Kalb-Ramond field $B$ are again encoded in the generalized metric $\CH$. Let us also consider the dilaton $d$ which is connected via the field redefinition $\de^{-2d}=\sqrt{|g|}\de^{-2\phi}$ to the usual dilaton field $\phi$. The transformation laws for $\CH$ and $d$ read as
\begin{equation}\label{eq:rev:gauge transformations}
\begin{aligned}
\delta_X \CH_{MN}&=X^P\dpar_P\CH_{MN}+(\dpar_M X^P-\dpar^PX_M)\CH_{PN}+(\dpar_N X^P-\dpar^PX_N)\CH_{MP}~,\\
\delta_X (\de^{-2d})&=\dpar_M (X^M\de^{-2d})~.
\end{aligned}
\end{equation}
This transformation is given indeed by the generalized Lie derivative \eqref{eq:action_extended_Lie} and the dilaton transforms as a scalar tensor density. 

The biggest success of double field theory is probably the provision of an action\footnote{A first action for double field theory was already suggested in \cite{Siegel:1993th}.} \cite{Hohm:2010pp}
\begin{equation}
 S_{\rm DFT}=\int \dd^{2D} x~\de^{-2d}~\CR
\end{equation}
based on the Ricci scalar
\begin{equation}
\begin{aligned}
 \CR&=\tfrac18 \CH_{MN}\dpar^M\CH_{KL}\dpar^N \CH^{KL}-\tfrac12 \CH_{MN}\dpar^M\CH_{KL}\dpar^L\CH^{KN}\\
 &~~~~~-2\dpar^Md \dpar^N \CH_{MN}+4\CH_{MN}\dpar^M d\dpar^N d~.
\end{aligned}
\end{equation}
This action is invariant under the generalized diffeomorphisms \eqref{eq:rev:gauge transformations}. Moreover, upon imposing the strong section condition $\der{x_\mu}=0$ and integrating by parts, it reduces to the usual action for the NS sector of supergravity:
\begin{equation}
 S_{\rm NS}=\int \dd^D x~\sqrt{g}~\de^{-2\phi}\left(R+4(\dpar\phi)^2-\tfrac{1}{12} H^2\right)~.
\end{equation}
A more detailed analysis shows that constructing consistent doubled versions of the torsion, Riemann and Ricci tensors is much more involved.

Let us close with a few remarks on exceptional field theory, in which the T-duality invariant target space formulation of the massless subsector of string theory is replaced by a U-duality invariant target space formulation of a subsector of M-theory. Instead of winding modes, one here has to account for the various wrapping modes of M2- and M5-branes around non-trivial cycles of the target space. 

There is an obvious analogue of generalized geometry, based on truncations of the bundle $TM\oplus \wedge^2 T^*M\oplus \wedge^5 T^*M$. Sections of this bundle are capable of describing infinitesimal diffeomorphisms as well as gauge symmetries of 3- and 6-form potentials. The $\sO(D,D,\FR)$ symmetry is replaced by the exceptional group $E_n$, where $n$ is the dimension of the torus $T^n$ in the compactification. There are evident analogues of the generalized metric as well as the generalized Lie derivative and the Courant bracket.

Moreover, a U-duality invariant extension of these objects exists, known as {\em exceptional field theory}, again with a Ricci scalar and an action principle. An extension of our formalism suitable for U-duality will be presented in upcoming work \cite{Deser:2017aa}.

\section{N\texorpdfstring{$Q$}{Q}-manifolds}\label{sec:NQ}

In the following section, we set up our conventions for symplectic N$Q$-manifolds, which will be the mathematical structure underlying most of our considerations. We also explain the relation between the symplectic N$Q$-manifolds known as exact Courant algebroids and $\sU(1)$-bundle gerbes.

\subsection{N\texorpdfstring{$Q$}{Q}-manifolds and higher Lie algebras}\label{ssec:NQ-manifolds}

Recall that an {\em N-manifold} $\CM$ is an $\NN_0$-graded manifold, i.e.\ a non-negatively graded manifold. We usually denote the body, i.e.\ the degree 0 part, of $\CM$ by $\CM_0$. We shall be interested exclusively in N-manifolds arising from $\NN$-graded vector bundles over $\CM_0$. Such N-manifolds were called {\em split} in \cite{Sheng:1103.5920}, following the nomenclature for supermanifolds. Note that Batchelor's theorem \cite{JSTOR:1998201} for supermanifolds can be extended to N-manifolds, and therefore any smooth N-manifold is diffeomorphic to a split N-manifold \cite{Bonavolonta:2012fh}, see also \cite{Roytenberg:0203110} for special cases.

An {\em N$Q$-manifold} is an N-manifold endowed with a homological vector field $Q$. That is, $Q$ is of degree 1 and satisfies $Q^2=0$. If the N$Q$-manifold $\CM$ is concentrated\footnote{i.e.\ nontrivial} in degrees $0,\ldots, n$, then we call $\CM$ a Lie $n$-algebroid. Lie $n$-algebroids $\CM$ with trivial body $\CM_0=*$ form Lie $n$-algebras or $n$-term $L_\infty$-algebras, given in terms of their Chevalley-Eilenberg description.

For example, consider an N$Q$-manifold concentrated in degree 1: $\CM=\frg[1]$, where in general $[p]$ indicates a shift of the degree of the fibers or relevant linear spaces by $p$. If we parametrize the vector space $\frg[1]$ by coordinates $\xi^\alpha$ of degree 1, the vector field $Q$ is necessarily of the form $Q=\frac12 c^\alpha_{\beta\gamma}\xi^\beta\xi^\gamma\der{\xi^\alpha}$. The condition $Q^2=0$ is then equivalent to the Jacobi identity for the structure constants $c^\alpha_{\beta\gamma}$. Similarly, one obtains higher Lie $n$-algebras. 

Since the case of Lie 2-algebras will be particularly important later on, let us present it in somewhat more detail. Here, the N$Q$-manifold is concentrated in degrees 1 and 2, and consists of two vector spaces $V[2]$ and $W[1]$ fibered over a point $*$:
\begin{equation}
 \CM= * \leftarrow W[1] \leftarrow V[2]\leftarrow * \leftarrow \ldots ~.
\end{equation}
In terms of coordinates $v^a$ and $w^\alpha$ on $\CM$ of degrees 2 and 1, respectively, the most general homological vector field reads as
\begin{equation}\label{eq:Q_for_Lie_2_algebras}
 Q = - c^\alpha_a v^a \der{w^\alpha} - \frac12 c^\alpha_{\beta\gamma} w^\beta w^\gamma \der{w^\alpha} 
- c^a_{\alpha b} w^\alpha v^b \der{v^a} + \frac{1}{3!} c^a_{\alpha\beta\gamma} w^\alpha
w^\beta w^\gamma \der{v^a}~.
\end{equation}
Recall from above that in the Chevalley-Eilenberg description, ordinary Lie algebras $\frg$ appear as N$Q$-manifolds $\frg[1]$ concentrated in degree one. Similarly, the N$Q$-manifold concentrated in degrees 1 and 2 encodes the Lie 2-algebra $\sL=W[0]\oplus V[1]$. In terms of a {\em graded} basis $\tau_\alpha$ and $t_a$ carrying degrees 0 and 1, respectively, the structure constants in \eqref{eq:Q_for_Lie_2_algebras} encode brackets\footnote{In principle, we also complete these brackets by putting all other possible $\mu_i(...)$ taking $i$ basis elements as arguments to zero.}
\begin{equation}
\begin{gathered}
 \mu_1(t_a)=c^\alpha_a \tau_\alpha~,\\
 \mu_2(\tau_\alpha,\tau_\beta)=c^\gamma_{\alpha\beta}\tau_\gamma~,~~~\mu_2(\tau_\alpha,t_b)=c^a_{\alpha b}t_a~,\\
 \mu_3(\tau_\alpha,\tau_\beta,\tau_\gamma)=c^a_{\alpha\beta\gamma}t_a~.
\end{gathered}
\end{equation}
The equation $Q^2=0$ now translates into the {\em higher} or {\em homotopy Jacobi identities}, which for a Lie 2-algebra read as 
\begin{subequations}\label{eq:homotopy_relations}
\begin{equation}
\begin{aligned}
 \mu_1(\mu_2(w,v))&=\mu_2(w,\mu_1(v))~,~~~\mu_2(\mu_1(v_1),v_2)=\mu_2(v_1,\mu_1(v_2))~,\\
 \mu_1(\mu_3(w_1,w_2,w_3))&=\mu_2(w_1,\mu_2(w_2,w_3))+\mu_2(w_2,\mu_2(w_3,w_1))+\mu_2(w_3,\mu_2(w_1,w_2))~,\\
 \mu_3(\mu_1(v),w_1,w_2)&=\mu_2(v,\mu_2(w_1,w_2))+\mu_2(w_2,\mu_2(v,w_1))+\mu_2(w_1,\mu_2(w_2,v)) 
\end{aligned}
\end{equation}
and
\begin{equation}\label{eq:homotopy_relation4b}
\begin{aligned}
 \mu_2(\mu_3(w_1,&w_2,w_3),w_4)-\mu_2(\mu_3(w_4,w_1,w_2),w_3)+\mu_2(\mu_3(w_3,w_4,w_1),w_2)\\
 & -\mu_2(\mu_3(w_2,w_3,w_4),w_1)=\\
 &\mu_3(\mu_2(w_1,w_2),w_3,w_4)-\mu_3(\mu_2(w_2,w_3),w_4,w_1)+\mu_3(\mu_2(w_3,w_4),w_1,w_2)\\
 &-\mu_3(\mu_2(w_4,w_1),w_2,w_3)
 -\mu_3(\mu_2(w_1,w_3),w_2,w_4)-\mu_3(\mu_2(w_2,w_4),w_1,w_3)~,
\end{aligned}
\end{equation}
\end{subequations}
where $v,v_i\in V[1]$ and $w,w_i\in W[0]$. Finally, let us mention that this form of a Lie 2-algebra is usually called {\em semistrict} Lie 2-algebra. This is sufficiently general for all our purposes; the most general notion of a weak Lie 2-algebra was given in \cite{Roytenberg:0712.3461}. Further details on semistrict Lie 2-algebras can be found in \cite{Baez:2003aa} and references therein.

An important example of a Lie 2-algebra which will be very similar in form to the ones we will encounter later is the String Lie 2-algebra of a compact simple Lie group $\sG$. Let $\frg:=\sLie(\sG)$ be the Lie algebra of $\sG$, then the String Lie 2-algebra has underlying graded vector space $\FR[1]\oplus \frg$ endowed with higher products
\begin{equation}
 \mu_1(r)=0~,~~~\mu_2(X_1,X_2)=[X_1,X_2]~,~~~\mu_2(X,r)=0~,~~~\mu_3(X_1,X_2,X_3)=(X_1,[X_2,X_3])~,
\end{equation}
where $X_i\in \frg$, $r\in \FR[1]$ and $(-,[-,-])$ is the generator of $H^3(\sG,\RZ)$ involving the Killing form on $\sG$.

One can now extend our above two examples of Lie algebras or N$Q$-manifolds concentrated in degree 1 and Lie 2-algebras or N$Q$-manifolds concentrated in degree 2 to {\em Lie $n$-algebras} or {\em $n$-term $L_\infty$-algebras}, which are N$Q$-manifolds concentrated in degrees $1$ to $n$. That is, an $n$-term $L_\infty$-algebra is given by a graded vector space $\sL=\oplus_{i=0}^{n-1}\sL_i$ forming a complex
\begin{equation}\label{eq:L_infity_action}
\xymatrixcolsep{4pc}
 \myxymatrix{
\ldots \ar@{->}[r]^{\mu_1} & \sL_2 \ar@{->}[r]^{\mu_1}& \sL_1 \ar@{->}[r]^{\mu_1} & \sL_0 
    }~,
\end{equation}
which is endowed with totally antisymmetric maps $\mu_i:\sL^{\wedge i}\rightarrow \sL$ of degree $i-2$. These higher products satisfy higher Jacobi identities generalizing~\eqref{eq:homotopy_relations}, which are equivalent to the condition $Q^2=0$ in the N$Q$-manifold picture. Finally, $n$-term $L_\infty$-algebras are special cases of {\em $L_\infty$-algebras} for which $\sL=\oplus_{i=0}^{\infty}\sL_i$.

Note that there are various notions of Lie $n$-algebras, and in particular multiple definitions of Lie 2-algebras, which differ in generality but which are usually equivalent in a suitable sense. For example, semistrict Lie 2-algebras have been shown to be categorically equivalent to 2-term $L_\infty$-algebras in~\cite{Baez:2003aa}. To simplify our nomenclature, we will use the terms Lie $n$-algebras and $n$-term $L_\infty$-algebras interchangeably in the following.

A {\em morphism of $L_\infty$-algebras} is now simply a morphism of N$Q$-manifolds. We shall be mostly interested in {\em strict} such morphisms $\varphi:\sL\rightarrow \sL'$, which consist of maps of graded vector spaces of degree 0 such that
\begin{equation}
 \mu'_i\circ \varphi^{\otimes i}=\varphi\circ \mu_i~.
\end{equation}

Rarely discussed in the literature but still very useful for our analysis are the notions of action and semidirect product for $L_\infty$-algebras. A very detailed account\footnote{with certain restrictions to finite dimensional $L_\infty$-algebras} is found in \cite{Mehta:2012ppa}, where the rather evident generalizations of the notions action, extension and semidirect product of Lie algebras to $L_\infty$-algebras were explored.

First, recall that an action of a Lie algebra $\frg$ on a smooth manifold $M$ is a Lie algebra homomorphism from $\frg$ to the vector fields $\frX(M)$. Next, note that given a graded manifold $\CM$, the vector fields $\frX(\CM)$ form a graded Lie algebra, which is a particular case of an $L_\infty$-algebra. Thus, an {\em action of an $L_\infty$-algebra} $\sL$ on a manifold $\CM$ is a morphism of $L_\infty$-algebras from $\sL$ to $\frX(\CM)$. We shall be exclusively interested in actions corresponding to strict morphisms of $L_\infty$-algebras, given by a chain map $\delta$ of the form
\begin{equation}\label{eq:L_infity_action}
\xymatrixcolsep{4pc}
 \myxymatrix{
\ldots \ar@{->}[r]^{\mu_1} \ar@{->}[d]^{\delta} & \sL_2 \ar@{->}[r]^{\mu_1}\ar@{->}[d]^{\delta}& \sL_1 \ar@{->}[r]^{\mu_1}\ar@{->}[d]^{\delta}& \sL_0 \ar@{->}[d]^{\delta}\\
    \ldots \ar@{->}[r]^{\id} & {*} \ar@{->}[r]^{\id} & {*}\ar@{->}[r]^{0} & \frX(\CM)\\
    }
\end{equation}

{\em Semidirect products of $L_\infty$-algebras} can be similarly defined, using analogies with Lie algebras \cite{Mehta:2012ppa}. These have an underlying action $\rho$ of an $L_\infty$-algebra $(\sL,\mu)$ on another $L_\infty$-algebra $(\sL',\mu')$ such that $\tilde \sL=\sL\ltimes \sL'$ forms an $L_\infty$-algebra with brackets 
\begin{equation}
 \tilde \mu_2(X_1+w_1,X_2+w_2)=\mu_2(X_1,X_2)+\rho(X_1)w_2-\rho(X_2)w_1+\mu'_2(w_1,w_2)~,
\end{equation}
where $X_{1,2}\in \sL$ and $w_{1,2}\in \sL'$.

\subsection{Higher symplectic Lie algebroids and their associated Lie \texorpdfstring{$n$}{n}-algebras}

An important but simple example of a Lie algebroid is the grade-shifted tangent space $T[1]M$. In terms of local coordinates $x^\mu$ and $\xi^\mu$ on the base and the fiber, we can define a vector field $Q=\xi^\mu\der{x^\mu}$. The functions on $T[1]M$ can be identified with differential forms on $M$ with $Q$ being the de Rham differential.

A {\em symplectic N$Q$-manifold of degree $n$} is an N$Q$-manifold endowed with a symplectic form $\omega$ of $\NN_0$-degree $n$, such that it is compatible with $Q$: $\CL_Q \omega=0$. We write $|\omega|=n$ for its degree. Note that the degree of $n$ implies together with the non-degeneracy of $\omega$ that the underlying N-manifold is concentrated in degrees $0,\ldots,n$. Given such a symplectic N$Q$-manifold $(\CM,Q,\omega)$, we can introduce a corresponding Poisson structure $\{-,-\}$ as follows. The Hamiltonian vector field $X_f$ of a function is defined implicitly by\footnote{Recall the Koszul rule of adding a sign when interchanging two odd elements, e.g.\ $f \dd g=(-1)^{|f|\,|g|}(\dd g)f$.}
\begin{equation}
 \iota_{X_f}\omega=\dd f
\end{equation}
and has grading $|f|-n$. The corresponding Poisson bracket reads as
\begin{equation}
 \{f,g\}:=X_f g=\iota_{X_f}\dd g=\iota_{X_f}\iota_{X_g}\omega~.
\end{equation}
As one readily verifies, this Poisson bracket is of grading $|\{f,g\}|=|f|+|g|-n$, it is graded antisymmetric,
\begin{subequations}\label{eq:properties_poisson}
\begin{equation}
 \{f,g\}=-(-1)^{(|f|+n)(|g|+n)}\{g,f\}~,
\end{equation}
and satisfies the graded Leibniz rule
\begin{equation}\label{eq:Leibniz-rule-n}
 \{f,gh\}=\{f,g\}h+(-1)^{(n-|f|)|g|}g\{f,h\}
\end{equation}
as well as the graded Jacobi identity
\begin{equation}
 \{f,\{g,h\}\}=\{\{f,g\},h\}+(-1)^{(|f|+n)(|g|+n)}\{g,\{f,h\}\}
\end{equation}
or
\begin{equation}
 \{\{f,g\},h\}=\{f,\{g,h\}\}+(-1)^{(|h|+n)(|g|+n)}\{\{f,h\},g\}
\end{equation}
\end{subequations}
for all $f,g,h\in \CC^\infty(\CM)$. 

Because of $\CL_Q \omega=0$, $Q$ is Hamiltonian, cf.\ \cite[Lemma 2.2]{Roytenberg:0203110}, and we denote its Hamiltonian function by $\CQ$:
\begin{equation}
 Q f=X_{\CQ}f=\{\CQ,f\}~.
\end{equation}
Note that
\begin{equation}
\begin{aligned}
 Q\{f,g\}&=\{\CQ,\{f,g\}\}=\{\{\CQ,f\},g\}+(-1)^{(n+1+n)(|f|+n)}\{f,\{\CQ,g\}\}\\
 &=\{Qf,g\}+(-1)^{|f|+n}\{f,Qg\}~.
\end{aligned}
\end{equation}

A trivial example of a symplectic N$Q$-manifold is an ordinary symplectic manifold $(M,\omega)$ with $Q=0$. Thus, symplectic manifolds are symplectic Lie 0-algebroids. A more interesting example is the grade-shifted cotangent bundle $T^*[1]M$. Locally, $T^*[1]M$ is described by coordinates $x^\mu$ of degree 0 and coordinates $\xi_\mu$ of degree 1. A homological vector field $Q$ is necessarily of the form $Q=\pi^{\mu\nu}\xi_\mu \der{x^\nu}$. Compatibility with the natural symplectic structure $\omega=\dd x^\mu\wedge \dd \xi_\mu$ means that $\pi^{\mu\nu}\der{x^\mu}\otimes \der{x^\nu}\in \wedge^2 \frX(\CM_0)$ and $Q^2$ implies that the bivector $\pi$ yields a Poisson structure on $M$. In this sense, Poisson manifolds are symplectic Lie 1-algebroids.

An important feature of symplectic Lie $n$-algebroids $(\CM,\{-,-\},\CQ)$ is that they come with an {\em associated Lie $n$-algebra} via a derived bracket construction, see \cite{Roytenberg:1998vn,Fiorenza:0601312,Getzler:1010.5859,Ritter:2015ffa}. This Lie $n$-algebra has underlying $\NN$-graded vector space
\begin{equation}\label{eq:ordinary_L_infty_complex}
\begin{aligned}
\sL(\CM)~:=&&~\CC^\infty_0(\CM)&\rightarrow &\CC^\infty_1(\CM)&\rightarrow &\ldots &\rightarrow &\CC^\infty_{n-2}(\CM)&\rightarrow &\CC^\infty_{n-1}(\CM)\\
=&&~\sL_{n-1}(\CM)&\rightarrow &\sL_{n-2}(\CM)&\rightarrow &\ldots &\rightarrow &\sL_1(\CM)&\rightarrow &\sL_0(\CM)~~~,
\end{aligned}
\end{equation}
where $\CC^\infty(\CM)=\CC^\infty_0(\CM)\oplus \CC^\infty_1(\CM)\oplus \CC^\infty_2(\CM)\oplus \ldots $ is the decomposition of $\CC^\infty(\CM)$ into parts $\CC^\infty_i(\CM)$ of homogeneous grading $i$ with $\CC^\infty_0(\CM)=\CC^\infty(\CM_0)$. Note that the degree of elements of $\CC^\infty_i(\CM)$ in $\sL(\CM)$ is different from their $\NN$-degree $i$. We shall therefore make an explicit distinction, calling the former $\sL$-degree and the latter $\NN$-degree. In particular elements of $\CC^\infty_i(\CM)$ have $\sL$-degree $(n-1)-i$. In general, a subscript $i$ in the expressions $\sL_i(\CM)$ and $\CC^\infty_i(\CM)$ will always refer to the $\sL$-degree and $\NN$-degree, respectively.

An explicit formula for the higher products for even $n$ can be found e.g.\ in \cite{Getzler:1010.5859}. The formulas for the lowest products $\mu_i$ for arbitrary $n$ are the following totally antisymmetrized derived brackets:
\begin{equation}\label{eq:L_infty_brackets}
\begin{aligned}
\mu_1(\ell)&=\left\{\begin{array}{ll}
0 & \mbox{if}~\ell \in \CC^\infty_{n-1}(\CM)=\sL_0(\CM)~,\\
Q\ell & \mbox{else}~,\\
\end{array}\right.\\
\mu_2(\ell_1,\ell_2)&=\tfrac12\big(\{\delta\ell_1,\ell_2\}\pm\{\delta\ell_2,\ell_1\}\big)~,\\
\mu_3(\ell_1,\ell_2,\ell_3)&=-\tfrac{1}{12}\big(\{\{\delta\ell_1,\ell_2\},\ell_3\}\pm \ldots\big)~,
\end{aligned}
\end{equation}
where
\begin{equation}\label{def:delta}
 \delta(\ell)=\left\{\begin{array}{ll}
Q\ell & \ell\in \CC^\infty_{n-1}(\CM)=\sL_0(\CM)~,\\
0 & \mbox{else}~,\\
\end{array}\right.\\
\end{equation}
and the last sum runs over all $\sL$-graded permutations. The signs in \eqref{eq:L_infty_brackets} are the obvious ones to conform with the symmetries of the $\mu_k$.

As a final remark, note that if we had not antisymmetrized the arguments in the derived bracket construction for $n=2$, we would have obtained a {\em hemistrict} Lie $2$-algebra \cite{Baez:2008bu}. The connection to {\em semistrict} Lie $2$-algebras is as follows. In a categorification of a Lie algebra to a weak Lie 2-algebra \cite{Roytenberg:0712.3461}, we lift two identities to isomorphisms, the alternator $\mathsf{Alt}:[x,y]\mapsto -[y,x]$ and the Jacobiator $\mathsf{Jac}:[x,[y,z]]\mapsto[[x,y],z]+[y,[x,z]]$. In the semistrict case, $\mathsf{Alt}$ is trivial while $\mathsf{Jac}$ is not and in the hemistrict case $\mathsf{Jac}$ is trivial while $\mathsf{Alt}$ is not. In the cases we are interested in, both hemistrict and semistrict Lie $2$-algebras turn out to be equivalent. An analogous statement should certainly hold also for higher $n$.

\subsection{Vinogradov Lie \texorpdfstring{$n$}{n}-algebroids}\label{ssec:Vinogradov}

A very important hierarchy of symplectic N$Q$-manifolds are the Vinogradov Lie $n$-alge\-broids\footnote{The essential structure underlying this algebroid was first studied in \cite{MR1074539}.} 
\begin{equation}
\CV_n(M):=T^*[n]T[1]M~,  
\end{equation}
cf.\ \cite{Gruetzmann:2014ica,Ritter:2015ffa}. Local coordinates $(x^\mu,\xi^\mu,\zeta_\mu,p_\mu)$ of $\NN$-degrees $0,1,n-1,n$ give a symplectic N$Q$-manifold with
\begin{equation}
 \omega=\dd x^\mu\wedge\dd p_\mu +\dd \xi^\mu\wedge \dd \zeta_\mu\eand \CQ=\xi^\mu p_\mu~.
\end{equation}
Note that $\CQ$ is of $\NN$-degree $n+1$, which guarantees that its Hamiltonian vector field is of degree $1$. Explicitly, the Poisson bracket reads as
\begin{equation}
 \{f,g\}:=(-1)^{n^2}f\overleftarrow{\der{p_\mu}}\overrightarrow{\der{x^\mu}} g-f\overleftarrow{\der{x^\mu}}\overrightarrow{\der{p_\mu}} g+(-1)^{(n-1)^2}f\overleftarrow{\der{\zeta_\mu}}\overrightarrow{\der{\xi^\mu}} g-(-1)^{n}f\overleftarrow{\der{\xi^\mu}}\overrightarrow{\der{\zeta_\mu}} g
\end{equation}
and the homological vector field $Q$ is given by
\begin{equation}
 Q=\xi^\mu\der{x^\mu}+p_\mu\der{\zeta_\mu}
\end{equation}
and trivially satisfies $Q^2=0$ or, equivalently, $\{\CQ,\CQ\}=0$. 

Functions $X\in \CC^\infty_{n-1}(\CV_n(M))$ of $\NN$-degree $n-1$ are of the form
\begin{equation}
 X=X^\mu\zeta_\mu+\tfrac{1}{(n-1)!}X_{\mu_1\ldots \mu_{n-1}}\xi^{\mu_1}\cdots \xi^{\mu_{n-1}}
\end{equation}
and correspond to global sections of the total space of the vector bundle 
\begin{equation}
E_n:=TM\oplus \wedge^{n-1} T^*M~. 
\end{equation}
In particular, the case $n=2$ reproduces all structures of the {\em exact Courant algebroid} \cite{Roytenberg:0203110}, which underlies generalized geometry, cf.\ section \ref{ssec:generalised_geometry}. The case $n=3$ is relevant in exceptional field theory and sections of $E_3$ can accommodate the wrapping modes of M2-branes.

For $n>1$, functions of $\NN$-degree $n-1$ will be necessarily constant or linear in $\zeta$. If we demand this also for $n=1$, then the resulting functions are sections of $TM\oplus \FR$ with derived bracket 
\begin{equation}\label{eq:Courant-bracket-n=1}
\begin{aligned}
 \mu_2(X+f,Y+g)&=\tfrac12\big(\{\{\CQ,X^\mu\zeta_\mu+f\},Y^\nu\zeta_\nu+g\}-\{\{\CQ,Y^\nu\zeta_\nu+g\},X^\mu\zeta_\mu+f\}\big)\\
 &=[X,Y]+\CL_Xg-\CL_Yf~,
\end{aligned}
\end{equation}
where $X,Y\in \frX(M)$ and $f,g\in \CC^\infty(M)$. This Lie algebra describes the local gauge transformations of a metric and a connection one-form: The infinitesimal diffeomorphisms form a Lie algebra, which, together with the abelian Lie algebra of gauge transformations, forms a semidirect product of Lie algebras.

We shall be particularly interested in the case $n=2$, capturing the local gauge transformations for a metric and the Kalb-Ramond field $B$, which is part of the connective structure of an abelian bundle gerbe. Here, we find a Lie 2-algebra $\CC^\infty(M)\xrightarrow{~\mu_1~}\frX(M)\oplus\Omega^1(M)$ with higher products
\begin{equation}\label{eq:ass_Courant_algebra}
\begin{aligned}
 \mu_1(f)&=\dd f~,\\
 \mu_2(X+\alpha,Y+\beta)&=\tfrac12\big(\{\{\CQ,X^\mu\zeta_\mu+\alpha_\mu\xi^\mu\},Y^\nu\zeta_\nu+\beta_\nu\xi^\nu\}- (X+\alpha)\leftrightarrow(Y+\beta)\big)\\
 &=[X,Y]+\CL_X\beta-\CL_Y\alpha-\tfrac12\dd\big(\iota_X\beta-\iota_Y\alpha)~,\\
 \mu_2(X+\alpha,f)&=\tfrac12\{\{\CQ,X+\alpha\},f\}=\tfrac12\iota_X\dd f~,\\
 \mu_3(X+\alpha,Y+\beta,Z+\gamma)&=\tfrac{1}{3!}\big(\{\{\{\CQ,X+\alpha\},Y+\beta\},Z+\gamma\}+\ldots\big)\\
 &=\tfrac{1}{3!}\big(\{X+\alpha,\mu_2(Y+\beta,Z+\gamma)\}+\mbox{cycl.}\big)\\
&=\tfrac{1}{3!}\big(\iota_X\iota_Y\dd \gamma+\tfrac32\iota_X\dd\iota_Y\gamma\pm\mbox{perm.}\big) 
\end{aligned}
\end{equation}
where $X,Y\in \frX(M)$ and $\alpha,\beta\in \Omega^1(M)$. The bilinear operation $\mu_2$ is also known as {\em Courant bracket}. It is the antisymmetrization of the {\em Dorfman bracket} $\nu_2$, which is recovered from the Courant bracket as
\begin{equation}\label{eq:Courant_to_Dorfman}
 \nu_2(X+\alpha,Y+\beta)=\mu_2(X+\alpha,Y+\beta)+\tfrac12Q\{X+\alpha,Y+\beta\}~.
\end{equation}
Courant and Dorfman bracket are the bilinear operations underlying semistrict and hemi\-strict Lie 2-algebras. These two Lie 2-algebras are equivalent in the sense of \cite{Roytenberg:0712.3461}, which is suggested by equation \eqref{eq:Courant_to_Dorfman}.

\subsection{Twisting Vinogradov algebroids}\label{ssec:twisted_Vinogradov}

Vinogradov Lie $n$-algebroids $\CV_n(M)$ may be twisted by a closed $n+1$-form $T $ \cite{Severa:2001qm}, which amounts to the shift
\begin{equation}\label{eq:twisted_Hamiltonian_Q}
 \CQ_T =\xi^\mu p_\mu+\tfrac{1}{(n+1)!}T _{\mu_1\ldots \mu_{n+1}}\xi^{\mu_1}\ldots \xi^{\mu_{n+1}}
\end{equation}
or
\begin{equation}
\begin{aligned}
 Q_T =\xi^\mu\der{x^\mu}+p_\mu\der{\zeta_\mu}&-\frac{1}{(n+1)!}\der{x^\nu}T _{\mu_1\ldots \mu_{n+1}}\xi^{\mu_1}\ldots \xi^{\mu_{n+1}}\der{p_\nu}\\
 &+\frac{1}{n!}T _{\nu\mu_1\ldots \mu_{n}}\xi^{\mu_1}\ldots \xi^{\mu_{n}}\der{\zeta_\nu}~.
\end{aligned}
\end{equation}
This $Q$ is automatically compatible with the symplectic structure, since $\CL_{Q_T }\omega =\dd \iota_{Q_{T }}\omega =\dd^2\CQ_T =0$. Note that $\{\CQ_T ,\CQ_T \}=Q^2_T =0$ is equivalent to $\dd T =0$. The closed form $T $ is known as the {\em \v Severa class} of the Vinogradov algebroid \cite{Severa:1998ab}, see also \cite{Severa:2001qm} and \cite{Bressler:2002ur}. We will explain its relation to the characteristic class of a gerbe in the following section.

First, however, let us analyze the most general twist element, restricting ourselves to the case $\CV_2(M)$. The most general Hamiltonian function of degree 3 reads as
\begin{equation}
\begin{aligned}
 \CQ_{S,T}=\xi^\mu p_\mu&+S_\mu{}^\nu\xi^\mu p_\nu+S^{\mu\nu}\zeta_\mu p_\nu\\
 &+\tfrac{1}{3!}T_{\mu\nu\kappa}\xi^\mu\xi^\nu\xi^\kappa+\tfrac{1}{2}T_{\mu\nu}{}^\kappa\xi^\mu\xi^\nu\zeta_\kappa+\tfrac{1}{2}T_{\mu}{}^{\nu\kappa}\xi^\mu\zeta_\nu\zeta_\kappa+\tfrac{1}{3!}T^{\mu\nu\kappa}\zeta_\mu\zeta_\nu\zeta_\kappa~,
\end{aligned}
\end{equation}
where the coefficients $S_{\ldots}^{\ldots}$ and $T_{\ldots}^{\ldots}$ are functions on $M$. We readily compute
\begin{equation}\label{eq:Qsq_twist_GG}
 \begin{aligned}
  \{&\CQ_{S,T},\CQ_{S,T}\}=\\
  &p_{\mu} p_{\nu} (2S_{\kappa }{}^{\mu} S^{\kappa \nu}+2 S^{\mu \nu})\\
&+p_{\mu} \xi^{\nu} \xi^{\kappa} \left(S^{\lambda \mu} T_{\nu \kappa \lambda}+S_{\lambda }{}^{\mu} T_{\nu \kappa }{}^{\lambda}+T_{\nu \kappa }{}^{\mu}+2 (\dpar_{\nu}+S_{\nu }{}^{\lambda} \dpar_{\lambda})S_{\kappa }{}^{\mu}\right)\\
&+p_{\mu} \zeta _{\nu} \xi^{\kappa} \left(2 S^{\lambda \mu} T_{\kappa \lambda }{}^{\nu}+2 T_{\kappa }{}^{\mu \nu}-2 S_{\lambda }{}^{\mu} T_{\kappa }{}^{\nu \lambda}-2 (\dpar_{\kappa}+S_{\kappa }{}^{\lambda} \dpar_{\lambda}) S^{\nu \mu}+2 S^{\nu \lambda} \dpar_{\lambda} S_{\kappa }{}^{\mu}\right)\\
&+p_{\mu} \zeta _{\nu} \zeta _{\kappa} \left(S^{\lambda \mu} T_{\lambda }{}^{\nu \kappa}+T^{\mu \nu \kappa}+S_{\lambda }{}^{\mu} T^{\nu \kappa \lambda}+2S^{\nu \lambda} \dpar_{\lambda} S^{\kappa \mu}\right)\\
&+\xi^{\mu} \xi^{\nu} \xi^{\kappa} \xi^{\lambda} \left(\tfrac{1}{2} T_{\mu \nu \rho} T_{\kappa \lambda }{}^{\rho}-\tfrac{1}{3} (\dpar_{\lambda}+S_{\lambda }{}^{\rho} \dpar_{\rho})T_{\mu \nu \kappa}\right)\\
&+\zeta _{\mu} \xi^{\nu} \xi^{\kappa} \xi^{\lambda} \left(T_{\lambda \rho }{}^{\mu} T_{\nu \kappa }{}^{\rho}-T_{\nu \kappa \rho} T_{\lambda }{}^{\mu \rho}-(\dpar_{\lambda}+S_{\lambda }{}^{\rho} \dpar_{\rho}) T_{\nu \kappa }{}^{\mu}+\tfrac{1}{3} S^{\mu \rho} \dpar_{\rho} T_{\nu \kappa \lambda}\right)\\
&+\zeta _{\mu} \zeta _{\nu} \xi^{\kappa} \xi^{\lambda} \left(2T_{\lambda \rho }{}^{\nu} T_{\kappa }{}^{\mu \rho}+\tfrac{1}{2} T_{\kappa \lambda }{}^{\rho} T_{\rho }{}^{\mu \nu}+\tfrac{1}{2} T_{\kappa \lambda \rho} T^{\mu \nu \rho}-(\dpar_{\lambda}+S_{\lambda }{}^{\rho} \dpar_{\rho})T_{\kappa }{}^{\mu \nu}-S^{\nu \rho} \dpar_{\rho} T_{\kappa \lambda }{}^{\mu}\right)\\
&+\zeta _{\mu} \zeta _{\nu} \zeta _{\kappa} \xi^{\lambda} \left(-T_{\lambda }{}^{\kappa \rho} T_{\rho }{}^{\mu \nu}+T_{\lambda \rho }{}^{\kappa} T^{\mu \nu \rho}-\tfrac{1}{3} (\dpar_{\lambda}+S_{\lambda }{}^{\rho} \dpar_{\rho}) T^{\mu \nu \kappa}+S^{\kappa \rho} \dpar_{\rho} T_{\lambda }{}^{\mu \nu}\right)\\
&+\zeta _{\mu} \zeta _{\nu} \zeta _{\kappa} \zeta _{\lambda} \left(\tfrac{1}{2} T_{\rho }{}^{\mu \nu} T^{\kappa \lambda \rho}+\tfrac{1}{3} S^{\mu \rho} \dpar_{\rho} T^{\nu \kappa \lambda}\right)
 \end{aligned}
\end{equation}
Note that the twist deformation $S_\mu{}^\nu=-\delta_\mu^\nu$ would remove the constant term in $\CQ_{S,T}$, and therefore we will not consider it. In the remaining cases, we can verify the following statement by a short computation, cf.\ also \cite{Roytenberg:0112152} for a similar statement.
\begin{proposition}
 The antisymmetric part of the deformation parameter $S^{\mu\nu}$ can be set to zero by a symplectomorphism, which can be written as
 \begin{equation}\label{eq:form_beta_trafo}
  z\mapsto -\big\{z,\tau\}\ewith \tau=\tfrac12 \tau^{\mu\nu}\zeta_\mu\zeta_\nu:=\tfrac12(\delta_\mu^\kappa-S_\mu{}^\kappa)^{-1} S^{\nu\kappa}\zeta_\mu\zeta_\kappa
 \end{equation}
 for $z\in (x^\mu,\xi^\mu,\zeta_\mu,p_\mu)$. Explicitly, this transformation reads as
 \begin{equation}\label{eq:coord_trafo_3.30}
  \begin{aligned}
   x^\mu\rightarrow x^\mu~,~~~\zeta_\mu\rightarrow \zeta_\mu~,~~~\xi^\mu\rightarrow \xi^\mu-\tau^{\mu\nu}\zeta_\nu\eand p_\mu\rightarrow p_\mu-\frac12 \derr{\tau^{\kappa\lambda}}{x^\mu}\zeta_\kappa\zeta_\lambda~.
  \end{aligned}
 \end{equation}
 If $S_\kappa{}^\mu\neq 0$, then $S^{\mu\nu}$ may have a symmetric part and still satisfy $\{\CQ_{S,T},\CQ_{S,T}\}=0$. Even this part can be gauged away in this way.\footnote{To write also this transformation in the form \eqref{eq:form_beta_trafo}, one should extend the Poisson brackets to the free tensor algebra of functions on $\CV_2(M)$, along the lines of section \ref{ssec:extended_tensors}.}
\end{proposition}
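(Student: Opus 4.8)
The plan is to recognise \eqref{eq:coord_trafo_3.30} as the finite canonical transformation generated by the degree-2 Hamiltonian $\tau=\tfrac12\tau^{\mu\nu}\zeta_\mu\zeta_\nu$, i.e.\ as the exponential of the vector field $z\mapsto-\{z,\tau\}$ appearing in \eqref{eq:form_beta_trafo}. Any such flow preserves $\omega$, since $\iota_{X_\tau}\omega=\dd\tau$ gives $\CL_{X_\tau}\omega=\dd^2\tau=0$; consequently the bracket is preserved and the transformed Hamiltonian $\CQ'=\Phi^*\CQ_{S,T}$ automatically satisfies $\{\CQ',\CQ'\}=\Phi^*\{\CQ_{S,T},\CQ_{S,T}\}=0$, so it is again an admissible twist. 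First I would check that $\{-,\tau\}$ annihilates $x^\mu$ and $\zeta_\mu$, produces a term linear in $\zeta$ out of $\xi^\mu$ and a term quadratic in $\zeta$ out of $p_\mu$, and squares to zero on all four generators. The exponential therefore truncates and reproduces exactly the polynomial coordinate change \eqref{eq:coord_trafo_3.30}, confirming that the latter is a globally well-defined symplectomorphism. It then suffices to examine its effect on $\CQ_{S,T}$.

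Next I would substitute \eqref{eq:coord_trafo_3.30} directly into $\CQ_{S,T}$ and collect the monomials of type $\zeta_\mu p_\nu$. Only the pieces $\xi^\mu p_\mu$, $S_\mu{}^\nu\xi^\mu p_\nu$ and $S^{\mu\nu}\zeta_\mu p_\nu$ can feed into this monomial, since the $T$-terms produce only $\xi$- and $\zeta$-cubic contributions and the $p_\mu$-shift raises the $\zeta$-degree; the computation is therefore short and, crucially, exact at first order in $\tau$, because the second-order piece $X_\tau^2\CQ_{S,T}$ is cubic in $\zeta$ and cannot land in $\zeta p$. The outcome is a new coefficient of the schematic form $S'^{\mu\nu}=S^{\mu\nu}+\tau^{\mu\nu}-\tau^{\kappa\mu}S_\kappa{}^\nu$. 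Demanding that its antisymmetric part vanish is a \emph{linear} equation for the antisymmetric bivector $\tau^{\mu\nu}$, in which the $\mathfrak{gl}$-part $S_\mu{}^\nu$ dresses $\tau$; solving it amounts to inverting $\unit-S$ on bivectors and reproduces precisely the expression $\tau=\tfrac12(\unit-S)^{-1}S$ of \eqref{eq:form_beta_trafo}. Invertibility of $\unit-S$ holds away from the excluded locus and is exactly what makes this resummation well-defined.

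For the second assertion I would read off the symmetric sector from \eqref{eq:Qsq_twist_GG}: the $p_\mu p_\nu$ line of $\{\CQ_{S,T},\CQ_{S,T}\}=0$ forces $S^{(\mu\nu)}=-S_\kappa{}^{(\mu}S^{|\kappa|\nu)}$, so once $S_\mu{}^\nu\neq0$ a symmetric part of $S^{\mu\nu}$ is not merely allowed but fixed. To gauge it away one cannot use $\tau=\tfrac12\tau^{\mu\nu}\zeta_\mu\zeta_\nu$ directly, since a symmetric $\tau^{\mu\nu}$ is killed by the antisymmetry of $\zeta_\mu\zeta_\nu$; this is the content of the footnote. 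Here one extends the Poisson bracket to the free tensor algebra on the generators, as in section \ref{ssec:extended_tensors}, where the symmetric combination $\zeta_\mu\otimes\zeta_\nu$ survives, and repeats the substitution argument with a symmetric parameter. I expect this last step to be the main obstacle: one must verify that the tensor-algebra extension of $\{-,-\}$ still generates a bona fide symplectomorphism acting consistently on $\CQ_{S,T}$, and that the induced shift of $S^{(\mu\nu)}$ again resums (with the same $\unit-S$ denominator) to cancel the value forced by the constraint. The remaining bookkeeping — the Koszul signs in the bracket for the odd generators $\xi,\zeta$ and the precise index placement in $\tau^{\mu\nu}=(\unit-S)^{-1}S$ — is routine once this structure is in place.
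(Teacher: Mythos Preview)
Your proposal is correct and is precisely the ``short computation'' the paper alludes to without spelling out; the paper offers no proof beyond the statement itself, so you are supplying exactly the verification the authors leave to the reader. Your observation that $X_\tau$ squares to zero on the generators (so the exponential truncates to \eqref{eq:coord_trafo_3.30}) and that the second-order action on $\CQ_{S,T}$ is cubic in $\zeta$ and hence cannot feed back into the $\zeta p$ monomial is the key structural point, and your reading of the footnote for the symmetric sector is accurate.
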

\noindent The coordinate transformation \eqref{eq:coord_trafo_3.30} is precisely the $\beta$-transformation of generalized geometry, cf.\ section \ref{ssec:generalised_geometry}.

Next, we can readily classify all admissible infinitesimal twist elements, see also \cite{Roytenberg:0203110} for a more abstract proof.
\begin{theorem}
 The non-trivial infinitesimal twist elements of $\CQ$ consist of closed 3-forms $\tfrac{1}{3!}T_{\mu\nu\kappa}\xi^\mu\xi^\kappa\xi^\lambda$.
\end{theorem}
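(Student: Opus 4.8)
The plan is to linearise the master equation $\{\CQ_{S,T},\CQ_{S,T}\}=0$ about the undeformed Hamiltonian $\CQ_0=\xi^\mu p_\mu$. Writing $\CQ_{S,T}=\CQ_0+q$ with $q:=\CQ_{S,T}-\CQ_0$ an $\NN$-degree-$3$ function, and noting that on degree-$3$ elements the Poisson bracket is symmetric (for $|f|=|g|=3$ and $n=2$ the sign $(|f|+n)(|g|+n)=25$ is odd), the deformation equation to first order in $q$ collapses to the single cocycle condition
\[
 \{\CQ_0,q\}=Qq=0~,\qquad Q=\xi^\mu\der{x^\mu}+p_\mu\der{\zeta_\mu}~.
\]
Hence an infinitesimal twist is nothing but a $Q$-closed function of $\NN$-degree $3$, and its defining equations are exactly the terms of \eqref{eq:Qsq_twist_GG} that are linear in $(S,T)$.

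First I would apply $Q$ to $q$ using $Q\xi^\mu=Qp_\mu=0$ and $Q\zeta_\mu=p_\mu$ (with the Koszul signs for the odd $\xi,\zeta$) and sort the resulting $\NN$-degree-$4$ expression by its independent monomial types $p^2,\ \zeta^2p,\ \xi\zeta p,\ \xi^2p,\ \xi^4,\ \xi^3\zeta,\ \xi^2\zeta^2,\ \xi\zeta^3$. Each type contributes one constraint. The decisive ones sit at the two ends: the coefficient of $\xi^4$ gives $\dpar_{[\lambda}T_{\mu\nu\kappa]}=0$, so that $T_{\mu\nu\kappa}$ is a closed $3$-form, whereas the coefficient of $p^2$ forces $S^{(\mu\nu)}=0$ and that of $\zeta^2p$ forces $T^{\mu\nu\kappa}=0$ outright (no other term can cancel it). The mixed types $\xi^2p$, $\xi\zeta p$, $\xi^3\zeta$, $\xi^2\zeta^2$ impose no further closedness but only express $T_{\mu\nu}{}^\kappa$ and $T_\mu{}^{\nu\kappa}$ through first derivatives of $S_\mu{}^\nu$ and $S^{\mu\nu}$.

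Having solved the cocycle condition, the remaining task is to factor out the trivial deformations $q=Q\lambda$ generated by a degree-preserving symplectomorphism with $\NN$-degree-$2$ Hamiltonian $\lambda$. Its four monomial types realise exactly the expected structure transformations: $\lambda^\mu p_\mu$ is an infinitesimal diffeomorphism shifting $S_\mu{}^\nu$ by $\dpar\lambda$; $\lambda_\mu{}^\nu\xi^\mu\zeta_\nu$ is the $\sGL(D)$-type transformation shifting $(S_\mu{}^\nu,T_{\mu\nu}{}^\kappa)$; $\tfrac12\lambda^{\mu\nu}\zeta_\mu\zeta_\nu$ is precisely the $\beta$-transformation of the preceding Proposition, shifting $(S^{\mu\nu},T_\mu{}^{\nu\kappa})$; and $\tfrac12\lambda_{\mu\nu}\xi^\mu\xi^\nu$ is the $B$-transformation, shifting $T_{\mu\nu\kappa}$ by the exact form $\dd\lambda$. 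Using the first three I would gauge $S_\mu{}^\nu$ and $S^{\mu\nu}$ to zero; because $M$ is contractible, the mixed-type constraints $\dpar_{[\lambda}T_{\mu\nu]}{}^\kappa=0$ and $\dpar_{[\lambda}T_{\mu]}{}^{\nu\kappa}=0$ then present the dependent components $T_{\mu\nu}{}^\kappa,\,T_\mu{}^{\nu\kappa}$ as pure derivatives, which the very same symplectomorphisms remove consistently. What survives is the single gauge-inequivalent deformation $\tfrac{1}{3!}T_{\mu\nu\kappa}\xi^\mu\xi^\nu\xi^\kappa$ with $\dd T=0$, as claimed.

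The step needing the most care is the bookkeeping in the last paragraph: separating the component that is forced to vanish ($T^{\mu\nu\kappa}$) from those that are merely pure gauge ($S_\mu{}^\nu,\,S^{\mu\nu}$ and, with them, $T_{\mu\nu}{}^\kappa,\,T_\mu{}^{\nu\kappa}$), and checking that the diffeomorphism, $\sGL(D)$- and $\beta$-transformations actually eliminate the dependent $T$-components rather than just trading them against one another. I would also flag a conceptual subtlety behind the word ``non-trivial'': since $M$ is contractible the closed $3$-form is itself $Q$-exact, $\tfrac{1}{3!}T_{\mu\nu\kappa}\xi^\mu\xi^\nu\xi^\kappa=Q\big(\tfrac12\Lambda_{\mu\nu}\xi^\mu\xi^\nu\big)$ with $T=\dd\Lambda$, hence trivial in naive $Q$-cohomology. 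The point is that the trivialising map is exactly the $B$-transformation, which one must not quotient by, as it is the very twist being classified and its class is the \v Severa class; ``non-trivial'' therefore means inequivalent under diffeomorphisms and the $\sGL(D)$- and $\beta$-transformations, matching the more abstract classification of \cite{Roytenberg:0203110}.
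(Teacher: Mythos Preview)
Your proposal is correct and follows the same strategy as the paper: linearise $\{\CQ_{S,T},\CQ_{S,T}\}=0$ to the cocycle condition $Qq=0$, read off the component constraints from \eqref{eq:Qsq_twist_GG}, and then quotient by $Q$-exact deformations. The paper streamlines your gauge-fixing step by observing directly that the combined blocks $d_1=S_\mu{}^\nu\xi^\mu p_\nu+\tfrac12 T_{\mu\nu}{}^\kappa\xi^\mu\xi^\nu\zeta_\kappa$ and $d_2=S^{\mu\nu}\zeta_\mu p_\nu+\tfrac12 T_\mu{}^{\nu\kappa}\xi^\mu\zeta_\nu\zeta_\kappa$ are $Q$-exact with primitives $S_\mu{}^\nu\xi^\mu\zeta_\nu$ and $S^{\mu\nu}\zeta_\mu\zeta_\nu$, which packages in one stroke your separate $\sGL(D)$- and $\beta$-transformations together with the removal of the dependent $T$-components; in particular you need not invoke contractibility or the $\xi^3\zeta$, $\xi^2\zeta^2$ closedness conditions at all, since the $\xi^2p$ and $\xi\zeta p$ constraints already express $T_{\mu\nu}{}^\kappa$ and $T_\mu{}^{\nu\kappa}$ explicitly as derivatives of the $S$'s, making those additional constraints redundant rather than independently useful. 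Your closing discussion of the $B$-transformation and the precise meaning of ``non-trivial'' is a valid and worthwhile observation that the paper leaves implicit.
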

\begin{proof}
 Consider the equation $\{\CQ_{S,T},\CQ_{S,T}\}=0$ for infinitesimal twist elements $S,T$. From the linearized version of \eqref{eq:Qsq_twist_GG}, it follows that
 \begin{equation}
 \begin{aligned}
 S^{\mu\nu}+S^{\nu\mu}&=0~,~~~&T_{\nu\kappa}{}^\mu&=-2\dpar_\nu S_{\kappa}{}^\mu~,~~~T_{\kappa}{}^{\mu\nu}=\dpar_\kappa S^{\nu\mu}~,\\
 \dpar_{[\mu}T_{\nu\kappa\lambda]}&=0~,~~~&T^{\mu\nu\kappa}&=0~,
 \end{aligned}
 \end{equation}
 and $T_{\nu\kappa\lambda}$ are thus the components of a closed 3-form.  Note that the deformations
 \begin{equation}
  d_1=S_\mu{}^\nu\xi^\mu p_\nu+\tfrac{1}{2}T_{\mu\nu}{}^\kappa\xi^\mu\xi^\nu\zeta_\kappa\eand
  d_2=S^{\mu\nu}\zeta_\mu p_\nu+\tfrac{1}{2}T_{\mu}{}^{\nu\kappa}\xi^\mu\zeta_\nu\zeta_\kappa
 \end{equation}
 are then $Q$-exact with
 \begin{equation}
  d_1=Q(S_\mu{}^\nu\xi^\mu\zeta_\nu)\eand d_2=Q(S^{\mu\nu}\zeta_\mu\zeta_\nu)
 \end{equation}
 and therefore trivial, since $Q$-exact terms arise from symplectomorphisms. 
\end{proof}
\noindent Thus, we see that the \v Severa class characterizes all infinitesimal deformations of the Vinogradov algebroid.

\subsection{Exact Courant algebroids and gerbes}\label{ssec:Ex_Courant_and_Gerbes}

It is well known \cite{Gawedzki:1987ak,Freed:1999vc} that the $B$-field of string theory should really be regarded as part of a Deligne 3-cocycle from a global perspective. Equivalently, it is the curving 2-form of a $\sU(1)$-bundle gerbe with connective structure over the target space. 

Recall that a principal $\sU(1)$-bundle is topologically characterized by its first Chern class, whose image in de Rham cohomology is an integer curvature 2-form. Similarly, its higher analogue, a {\em $\sU(1)$-gerbe}, is topologically characterized by its Dixmier-Douady class, whose image in de Rham cohomology is an integer curvature 3-form $H$ which locally equals $\dd B$ for some 2-form potential $B$.

In general, a gerbe is some geometric realization of an element in singular cohomology $H^3(M,\RZ)$. The most accessible such realization is probably that in terms of Murray's bundle gerbes \cite{Murray:9407015}. Let us very briefly review these in the following, a useful and detailed introduction is found in \cite{Murray:2007ps}. Assume that we have a suitable surjective map $\pi:Y\rightarrow M$ such that $Y$ covers our manifold $M$. This covering space $Y$ can be chosen to consist of local patches of an ordinary good covering, but $Y$ does not have to be locally diffeomorphic to $M$. Then we can form the fibered product
\begin{equation}
 Y^{[2]}=Y\times_M Y:=\{(y_1,y_2)\in Y\times Y~|~\pi(y_1)=\pi(y_2)\}~,
\end{equation}
which generalizes double overlaps of patches. More generally, we define
\begin{equation}    
  Y^{[n]}=Y\times_M \ldots \times_M Y:=\{(y_1,\ldots,y_n)\in Y\times \ldots\times Y~|~\pi(y_1)=\ldots=\pi(y_n)\}~,
\end{equation}
which generalize double, triple and $n$-fold overlaps of local patches. A {\em $\sU(1)$-bundle gerbe} is a principal $\sU(1)$-bundle $P$ over $Y^{[2]}$, together with an isomorphism of principal bundles $\mu_{123}:P_{12}\otimes P_{23}\rightarrow P_{13}$, called {\em bundle gerbe multiplication}, where $P_{ij}$ are the three possible pullbacks of $P$ along the three projections $Y^{[3]}\rightarrow Y^{[2]}$:
\begin{equation}
    \myxymatrix{
	{\begin{array}{c}
	 \mu_{123}:P_{12}\otimes P_{23}\stackrel{\cong}{\rightarrow}P_{13}\\
	 P_{12,13,23}	 
	\end{array}}\ar@{->}[d]& P \ar@{->}[d] & \\
	Y^{[3]}\ar@<-3pt>[r] \ar@<0pt>[r]\ar@<3pt>[r]& Y^{[2]} \ar@<1.5pt>[r] \ar@<-1.5pt>[r] & Y \ar@{->}[d]^{\pi} \\
	 & & M
}
\end{equation}
This bundle gerbe multiplication, or rather its pullback to $Y^{[4]}$ has to satisfy
\begin{equation}
 \mu_{134}\circ \mu_{123}=\mu_{124}\circ \mu_{234}~.
\end{equation}
Altogether, a $\sU(1)$-bundle gerbe is given by the triple $(P,\pi,\mu)$.

As an example, consider the trivial bundle $P=M\times \sU(1)$, which can be regarded as the trivial bundle gerbe for $Y=Y^{[2]}=Y^{[n]}=M$. Since $P\otimes P\cong P$ on $Y^{[3]}=M$, the bundle gerbe multiplication $\mu_{123}$ is simply the trivial isomorphism. This example is relevant for T-duality for trivial torus fibrations. Note that non-trivial principal $\sU(1)$-bundles $P$ are {\em not} bundle gerbes because for them, there is no isomorphism between $P\otimes P$ and $P$ over $M$.

We can now endow the principal $\sU(1)$-bundle $P$ over $Y$ in a bundle gerbe $(P,\pi,\mu)$ with a connection $\nabla$ compatible with the bundle gerbe multiplication. Its curvature $F_\nabla$ on $Y$ can be shown \cite{Murray:2007ps} to be necessarily the difference of the pullbacks of a 2-form $B$ on $Y$ along the two possible projections $Y^{[2]}\rightarrow Y$. Finally, the 3-form $\dd B$ on $Y$ is necessarily the pullback of a global 3-form $H$ on $M$. This 3-form $H$ is integral, just as the 2-form representing the first Chern class of a principal $\sU(1)$-bundle. That is, any integral over closed 3-dimensional manifolds in $M$ is an integer. We call the data $(\nabla,B)$ on a $\sU(1)$-bundle gerbe a {\em connective structure}.

Inversely, consider an integral 3-form $H\in H^3_{\rm dR}(M)$ representing the Dixmier-Douady of a bundle gerbe. For simplicity, we restrict ourselves to an ordinary cover $\frU=(U_i)$. By the Poincar\'e lemma, there are potential 2-forms $B_i$ on local patches $U_i$ with $H|_{U_i}=\dd B_i$. These are glued together by gauge transformations on overlaps $U_i\cap U_j$ parameterized by 1-forms $\Lambda_{ij}$ according to $B_i-B_j=\dd \Lambda_{ij}$, which in turn give rise to functions $f_{ijk}$ with $\dd f_{ijk}=\Lambda_{ij}-\Lambda_{ik}+\Lambda_{jk}$. The $f_{ijk}$ form the \v Cech cocycle describing the bundle gerbe multiplication $\mu_{ijk}=\de^{\di f_{ijk}}$, analogously to transition functions of a principal $\sU(1)$-bundle.

There are now two obvious local symmetries of a $\sU(1)$-bundle gerbe: the gauge symmetry of the 2-form $B$ as well as diffeomorphisms acting on the base space $M$. At infinitesimal level, these are captured by a 1-form and a vector field. Since a gerbe is a categorified principal bundle with categorified structure group, these infinitesimal symmetries do not form an ordinary Lie algebra but rather a Lie 2-algebra, and this Lie 2-algebra is precisely the Lie 2-algebra \eqref{eq:ass_Courant_algebra} constructed from the twisted Hamiltonian $\CQ_{S,T}$. Note that it is {\em not} a semidirect product of $L_\infty$-algebras, because if we write
\begin{equation}
 \mu_2(X,\beta)=\rho(X)\beta~,
\end{equation}
then $\rho(X)$ is {\em not} an action by a vector field, i.e.\ an element of $\frX(M)$, but contains second order derivatives,
\begin{equation}
 \dd \iota_X\beta=\xi^\mu \der{x^\mu} \left(X^\nu \der{\xi^\nu}\beta_\kappa \xi^\kappa\right)~.
\end{equation}

In the Lie 2-algebra \eqref{eq:ass_Courant_algebra}, elements $X+\alpha$ of $\frX(M)\oplus\Omega^1(M)$ parametrize infinitesimal diffeomorphisms and gauge transformations, while elements $f$ of $\CC^\infty(M)$ yield gauge transformations between gauge transformations:
\begin{equation}
 \myxymatrix{
\bullet
  \ar@/^4ex/[rr]^{X+\alpha}="g1" 
  \ar@/_4ex/[rr]_{X+\alpha+\dd f}="g3"
  \ar@{=>}^{f} "g1"+<0ex,-1.8ex>;"g3"+<0ex,1.8ex> 
   && \bullet
}
\end{equation}

More generally, and as suggested by \v Severa \cite{Severa:1998ab}, the Courant algebroid with \v Severa class $H\in H^3(M,\RZ)$ is to be seen as the Atiyah algebroid of the gerbe with Dixmier-Douady class $H$, see also \cite{Rogers:2010sc}. The infinitesimal symmetries\footnote{i.e.\ diffeomorphisms and gauge transformations of the connective structure} of a $\sU(1)$-gerbe with characteristic class $H$ over $M$ are described locally over patches $U_i$ by the exact Courant algebroid $\CV_2(U_i)$ with \v Severa class $H$. The 1-forms $\Lambda_{ij}$ introduced above, which describe the connection on the gerbe can now be used to glue together the local descriptions $TU_i\oplus T^*U_i$ over overlaps of patches $U_i\cap U_j$ by mapping
\begin{equation}
 X+\alpha\mapsto X+\alpha+\iota_X\dd \Lambda_{ij}~.
\end{equation}
The result is the {\em generalized tangent bundle}, cf.\ \cite{Hitchin:2005in}, and this is one of the first replacements one should make in globalizing Vinogradov algebroids in the presence of non-trivial $H$-flux. Note that in this paper, all our considerations will remain local.

These above observations for $n=2$ readily extend to higher $n$. Functions on $\CV_n(M)$ of degree $n-1$ parametrize local gauge transformations of a metric and a connection $n$-form on a $\sU(1)$-bundle $n-1$-gerbe. Such local gauge transformations form a Lie $n$-algebra, which is encoded in the $L_\infty$-algebra associated to the Vinogradov Lie $n$-algebroids $\CV_n(M)$. A more comprehensive analysis of all this is found e.g.\ in \cite{Fiorenza:1304.6292}.

In summary, we can say that the Vinogradov Lie $n$-algebroids $\CV_n(M)$ twisted by a closed $n+1$ form $T$ are linearizations of $(n-1)$-gerbes with characteristic class $T$ in the sense that they capture their infinitesimal gauge and diffeomorphism symmetry Lie $n$-algebra.

\section{Extended Riemannian geometry and pre-N\texorpdfstring{$Q$}{Q}-manifolds}\label{sec:extendedRG}

\subsection{Motivation}

To study differential geometry of Riemannian manifolds, we require a definition of tensors including a preferred symmetric tensor specifying the metric. Moreover, there is a Lie algebra of infinitesimal diffeomorphisms which is generated by vector fields and acts on tensors via the Lie derivative. This Lie algebra is simply the Lie algebra of vector fields. With it and the Lie derivative, we readily define the exterior derivative. Next, Cartan's formula links the Lie derivative to a composition of exterior derivative and interior product. The interior product is naturally extended to a more general tensor contraction, which should be defined, too. Finally, we require some notion of curvature tensors, from which we can derive an action principle for geodesics, if desired.

Most of the above mentioned structures are readily described in terms of structures on N$Q$-manifolds. Consider, e.g., the Vinogradov Lie $2$-algebroid $\CV_2(M)$ for some $D$-dimensional manifold $M$ with coordinates $(x^\mu,p_\mu,\xi^\mu,\zeta_\mu)$, $\mu=1,\ldots, D$, on the total space. Recall from section \ref{ssec:Vinogradov} that $\CV_2(M)$ is a symplectic N$Q$-manifold of $\NN$-degree $n=2$ and comes with a symplectic form $\omega$ and a homological vector field $Q$ with Hamiltonian $\CQ$.

Forms and polyvector fields are now simply functions on $\CV_2(M)$: 
\begin{equation}\label{eq:identification}
 \begin{aligned}
  \frac{1}{k!}X^{\mu_1\ldots\mu_k}(x)\der{x^{\mu_1}}\wedge\ldots \wedge\der{x^{\mu_k}} &\ \leftrightarrow\  \frac{1}{k!}X^{\mu_1\ldots\mu_k}(x)\zeta_{\mu_1}\ldots \zeta_{\mu_k}~,\\
  \frac{1}{k!}\alpha_{\mu_1\ldots\mu_k}(x)\dd{x^{\mu_1}}\wedge\ldots \wedge\dd{x^{\mu_k}} &\ \leftrightarrow\  \frac{1}{k!}\alpha_{\mu_1\ldots\mu_k}(x)\xi^{\mu_1}\ldots \xi^{\mu_k}~.
 \end{aligned}
\end{equation}
The Lie derivative is given by 
\begin{equation}
 \CL_X \alpha=\{QX,\alpha\}~,
\end{equation}
and the Lie bracket on vector fields is its antisymmetrization
\begin{equation}
 [X,Y]=\tfrac12\big(\{QX,Y\}-\{QY,X\}\big)=\{QX,Y\}~.
\end{equation}
Note that plugging polyvector fields into this bracket, one recovers the Schouten bracket. The exterior derivative is then simply 
\begin{equation}
 \dd \alpha=Q\alpha
\end{equation}
and, compatible with Cartan's formula, we have
\begin{equation}
 \iota_X \alpha=\{X,\alpha\}=(-1)^{p+1}\{\alpha,X\}~.
\end{equation}

It is therefore natural to expect that one can formulate differential and Riemannian geometry in terms of expressions on N$Q$-manifolds, which then extends to more general situations. It will turn out that we can even go slightly beyond N$Q$-manifolds, and the resulting framework will reproduce the formulas of double field theory. 

In this section, we axiomatize our definitions. In a subsequent section, we work out various examples of our framework, which we call {\em extended Riemannian geometry}.

\subsection{Extended diffeomorphisms}\label{ssec:extended_diffeos}

For our purposes, we will require a generalization of symplectic N$Q$-manifolds.
\begin{definition}
A \uline{symplectic pre-N$Q$-manifold of $\NN$-degree $n$} is a symplectic N-manifold $(\CM,\omega)$ of $\NN$-degree $n$ with a vector field $Q$ of $\NN$-degree 1 satisfying $\CL_Q\omega=0$.
\end{definition}
\noindent The fact that $Q$ generates symplectomorphisms on $\CM$ implies again that it is Hamiltonian. In the following, let $\{-,-\}$ be the Poisson bracket induced by $\omega$ and let $\CQ$ be the Hamiltonian of $Q$. 

To describe symmetries in extended Riemannian geometry, we require a homogeneously graded set of extended vector fields $\CCX(\CM)\subset \CC^\infty(\CM)$ (not to be confused with the vector fields on $\CM$, $\frX(\CM)$). These should form a semistrict Lie $n$-algebra with the higher brackets obtained as the derived brackets introduced in \eqref{eq:L_infty_brackets}. 

In particular, there is the binary bracket
\begin{equation}
\begin{aligned}
 &\mu_2: \wedge^2 \CCX(\CM)\rightarrow \CCX(\CM)~,\\
 &\mu_2(X,Y):=\tfrac12\left(\{QX,Y\}-\{QY,X\}\right)~,
\end{aligned}
\end{equation}
which requires the elements of $\CCX(\CM)$ to be of $\NN$-degree $n-1$. We immediately conclude the following simple statement:
\begin{lemma}\label{lem:PB_X_f_trivial}
 Since the Poisson bracket is of degree $-n$, we have $\{X,f\}=0$ for all $X\in\CCX(\CM)$ and $f$ of $\NN$-degree 0.
\end{lemma}

Note that $\mu_2$ is related to the ordinary derived bracket $\nu_2$.
\begin{lemma}\label{lem:Courant-Dorfman}
 The map $\mu_2$ and the map
  \begin{equation}\label{eq:def:nu2}
  \begin{aligned}
  &\nu_2: \otimes^2 \CCX(\CM)\rightarrow \CCX(\CM)~,\\
  &\nu_2(X,Y):=\{QX,Y\}~,
  \end{aligned}
  \end{equation}
 satisfy the following equations:
 \begin{equation}
  \mu_2(X,Y)=\tfrac12(\nu_2(X,Y)-\nu_2(Y,X))\eand \nu_2(X,Y)=\mu_2(X,Y)+\tfrac12 Q\{X,Y\}~.
 \end{equation}
\end{lemma}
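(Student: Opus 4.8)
The first of the two identities is purely definitional: substituting the definition \eqref{eq:def:nu2} of $\nu_2$ into the right-hand side gives $\tfrac12(\{QX,Y\}-\{QY,X\})$, which is exactly $\mu_2(X,Y)$. So the plan is to dispose of this in one line and concentrate on the second identity.

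For the second identity, the plan is to expand $Q\{X,Y\}$ using the fact, established earlier in the excerpt, that $Q$ acts as a graded derivation on the Poisson bracket:
\begin{equation}
 Q\{X,Y\}=\{QX,Y\}+(-1)^{|X|+n}\{X,QY\}~.
\end{equation}
Since elements of $\CCX(\CM)$ have $\NN$-degree $n-1$, we have $|X|+n=2n-1$, so the sign $(-1)^{|X|+n}=-1$, giving $Q\{X,Y\}=\{QX,Y\}-\{X,QY\}$. The next step is to rewrite $\{X,QY\}$ in terms of $\{QY,X\}$ by the graded antisymmetry \eqref{eq:properties_poisson}: here $|X|+n=2n-1$ is odd while $|QY|+n=n+1+n=2n+1$\,---\,wait, I would recompute this carefully: $|QY|=|Y|+1=n$, so $|QY|+n=2n$ is even, whence $(|X|+n)(|QY|+n)$ is even and $\{X,QY\}=-\{QY,X\}$. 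Thus $Q\{X,Y\}=\{QX,Y\}+\{QY,X\}$.

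Combining, the plan concludes with
\begin{equation}
 \mu_2(X,Y)+\tfrac12 Q\{X,Y\}=\tfrac12(\{QX,Y\}-\{QY,X\})+\tfrac12(\{QX,Y\}+\{QY,X\})=\{QX,Y\}=\nu_2(X,Y)~,
\end{equation}
which is the claimed relation.

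The computation contains no structural difficulty: the whole content is the single sign-tracking step. The main (and really the only) obstacle is getting the two signs right\,---\,the derivation sign $(-1)^{|X|+n}$ and the antisymmetry sign $(-1)^{(|X|+n)(|QY|+n)}$\,---\,because a slip in either would flip the relative sign of the two brackets and break the cancellation. The key observation that makes everything go through cleanly is that for arguments of $\NN$-degree $n-1$ the derivation sign is exactly $-1$ independently of the parity of $n$, so that $Q$ applied to $\{X,Y\}$ \emph{symmetrizes} the derived bracket rather than antisymmetrizing it; this is precisely the discrepancy that $\tfrac12 Q\{X,Y\}$ is designed to correct, turning the antisymmetric $\mu_2$ back into the full derived bracket $\nu_2$.
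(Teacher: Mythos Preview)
Your proof is correct. The paper states this lemma without proof, treating it as an immediate consequence of the definitions together with the derivation property $Q\{f,g\}=\{Qf,g\}+(-1)^{|f|+n}\{f,Qg\}$ recorded just before the section on extended diffeomorphisms; your write-up supplies exactly the sign-tracking that justifies this.
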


As stated above, $\mu_2$ should be the binary bracket on the degree 0 part of a Lie $n$-algebra. One might additionally want $\nu_2$ to form a Leibniz algebra\footnote{A Leibniz algebra of degree $n$ is a graded vector space endowed with a bilinear operation satisfying the Jacobi identity of degree $n$, which differs from the Leibniz rule of degree $n$ given in \eqref{eq:Leibniz-rule-n}, cf.\ \cite{Kosmann-Schwarzbach:0312524}.}. We start with examining the latter constraint.
\begin{lemma}\label{lem:nu_2:Leibniz}
 The map $\nu_2$ as defined in \eqref{eq:def:nu2} defines a Leibniz algebra (of $\sL$-degree $0$) on $\CCX(\CM)$, i.e.\ 
 \begin{equation}
  \nu_2(X,\nu_2(Y,Z))=\nu_2(\nu_2(X,Y),Z)+\nu_2(Y,\nu_2(X,Z))~,
 \end{equation}
 if and only if 
  \begin{equation}
    \{\{Q^2X,Y\},Z\}=0
  \end{equation}
 for all $X,Y,Z\in \CCX(\CM)$. 
\end{lemma}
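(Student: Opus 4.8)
The plan is to expand the Leibniz defect directly using the structure of the derived bracket $\nu_2(X,Y)=\{QX,Y\}$ and the graded Jacobi identity for the Poisson bracket. First I would write out the three terms. The associator to be analyzed is
\begin{equation}
 A:=\nu_2(X,\nu_2(Y,Z))-\nu_2(\nu_2(X,Y),Z)-\nu_2(Y,\nu_2(X,Z))~.
\end{equation}
Inserting the definition gives $A=\{QX,\{QY,Z\}\}-\{Q\{QX,Y\},Z\}-\{QY,\{QX,Z\}\}$. Since all elements of $\CCX(\CM)$ carry the same $\NN$-degree $n-1$, the signs in the graded Jacobi identity become uniform, so the bookkeeping should collapse to a clean form; this is the routine part that I would not grind through in detail.

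The key algebraic input is that $Q$ is a derivation of the Poisson bracket up to a sign, which follows from $\CL_Q\omega=0$ exactly as derived in the excerpt: $Q\{f,g\}=\{Qf,g\}+(-1)^{|f|+n}\{f,Qg\}$. Applying this to the middle term rewrites $\{Q\{QX,Y\},Z\}$ in terms of $\{\{Q^2X,Y\},Z\}$ and $\{\{QX,QY\},Z\}$ (again with controlled signs because all arguments sit in the same degree). I would then apply the graded Jacobi identity to the first term $\{QX,\{QY,Z\}\}$ to split it into $\{\{QX,QY\},Z\}$ and $\{QY,\{QX,Z\}\}$. The hoped-for cancellation is that the $\{\{QX,QY\},Z\}$ contributions from the expanded first term and from the middle term cancel, and that the $\{QY,\{QX,Z\}\}$ piece cancels the third term $\nu_2(Y,\nu_2(X,Z))$, leaving precisely $A=\{\{Q^2X,Y\},Z\}$ (up to an overall nonzero sign/constant). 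That identity is exactly the stated equivalence: $A=0$ for all $X,Y,Z$ iff $\{\{Q^2X,Y\},Z\}=0$ for all $X,Y,Z$.

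The main obstacle is purely the sign tracking: because $Q$ raises degree by $1$ and the Poisson bracket lowers degree by $n$, the factors $QX$, $Q^2X$, $\{QX,Y\}$ etc.\ sit in shifted degrees, and one must be careful that the signs in the Jacobi identity and in the derivation property of $Q$ combine so that the unwanted terms cancel rather than add. I would fix this by assigning each intermediate expression its $\NN$-degree explicitly (e.g.\ $|QX|=n$, $|Q^2X|=n+1$, $|\{QX,Y\}|=n-1$) and substituting into the two equivalent forms of the graded Jacobi identity recorded in \eqref{eq:properties_poisson}, choosing whichever form makes the degree-dependent signs cancel. Once the single obstruction term $\{\{Q^2X,Y\},Z\}$ is isolated, the ``if and only if'' is immediate since $X,Y,Z$ range over all of $\CCX(M)$.
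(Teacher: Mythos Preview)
Your proposal is correct and follows exactly the paper's approach: expand the Leibniz defect, apply the graded Jacobi identity to $\{QX,\{QY,Z\}\}$ and the derivation property of $Q$ to $\{Q\{QX,Y\},Z\}$, and observe the cancellations. The paper carries out precisely this computation and obtains $A=-\{\{Q^2X,Y\},Z\}$, confirming your ``up to an overall nonzero sign''.
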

\begin{proof}
 We compute
 \begin{equation}
  \begin{aligned}
    &\nu_2(X,\nu_2(Y,Z))-\nu_2(\nu_2(X,Y),Z)-\nu_2(Y,\nu_2(X,Z))\\
    &=\{QX,\{QY,Z\}\}-\{Q\{QX,Y\},Z\}-\{QY,\{QX,Z\}\}\\
    &=\{\{QX,QY\},Z\}+\{QY,\{QX,Z\}\}-\{\{Q^2X,Y\},Z\}-\{\{QX,QY\},Z\}+\\
    &\hspace{10.5cm}-\{QY,\{QX,Z\}\}\\
    &=-\{\{Q^2X,Y\},Z\}~.
  \end{aligned}
 \end{equation}
\end{proof}
Note that this result is interesting in its own right. Derived brackets of this kind have been abstractly discussed in \cite{kosmann1996poisson}, where they were called {\em generalized Loday-Gerstenhaber brackets}. We are not aware of more than a few examples of such brackets, and our construction of $\nu_2$ on pre-N$Q$-manifolds using a subset $\CCX(\CM)$ of the full set of functions provides a large class of such examples.

For our discussion, however, lemma \ref{lem:nu_2:Leibniz} merely serves as a stepping stone to the following statement.
\begin{theorem}\label{thm:algebra_structure_of_brackets}
 The antisymmetrized derived bracket $\mu_2$ satisfies the homotopy Jacobi relations 
 \begin{equation}
  \mu_2(X,\mu_2(Y,Z))+\mu_2(Y,\mu_2(Z,X))+\mu_2(Z,\mu_2(X,Y ))=Q(\mu_3(X,Y,Z))~,
 \end{equation}
  \begin{equation}
  \mu_3(X,Y,Z)=3\{\{Q X,Y\},Z\}_{[X,Y,Z]}~,
 \end{equation}
 if and only if
  \begin{equation}\label{eq:Q2XYZ}
    \{\{Q^2X,Y\},Z\}_{[X,Y,Z]}=0
  \end{equation}
  holds for all $X,Y,Z\in \CCX(\CM)$. Here, and in the following we use the notation 
 \begin{equation}
 \begin{aligned}
  &F(X,Y,Z)_{[X,Y,Z]}:=\\
  &\hspace{1cm}\tfrac{1}{3!}\big(F(X,Y,Z)-F(Y,X,Z)+F(Y,Z,X)-F(Z,Y,X)+F(Z,X,Y)-F(X,Z,Y)\big)
 \end{aligned}
 \end{equation}
 for the weighted, totally antisymmetrized sum of some multilinear map $F$.
\end{theorem}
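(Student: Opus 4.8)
The plan is to establish a single master identity expressing the antisymmetrized Jacobiator as $Q$ applied to the putative homotopy $\mu_3$ plus a term proportional to the totally antisymmetrized obstruction $\{\{Q^2X,Y\},Z\}_{[X,Y,Z]}$; once this is in hand, both directions of the ``if and only if'' are immediate, because the coefficient in front of the obstruction is nonzero. The ingredients I would use are all already available: the graded Jacobi identity and the derivation property $Q\{f,g\}=\{Qf,g\}+(-1)^{|f|+n}\{f,Qg\}$ of the Poisson bracket, the fact that $\{-,-\}$ is graded symmetric on the degree $n-1$ functions making up $\CCX(\CM)$, and, crucially, the two preceding lemmas: Lemma \ref{lem:Courant-Dorfman}, which records $\nu_2(X,Y)=\mu_2(X,Y)+\tfrac12 Q\{X,Y\}$, and Lemma \ref{lem:nu_2:Leibniz}, whose proof shows that the Leibniz defect of $\nu_2$ is exactly $-\{\{Q^2X,Y\},Z\}$.

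First I would expand the cyclic sum $\sum_{\mathrm{cyc}}\mu_2(X,\mu_2(Y,Z))$ by writing every $\mu_2$ through the Dorfman bracket, $\mu_2(A,B)=\tfrac12(\nu_2(A,B)-\nu_2(B,A))$. This turns the Jacobiator into a signed combination of doubly nested brackets of the two shapes $\{QA,\{QB,C\}\}$ and $\{Q\{QA,B\},C\}$. On the first shape I would apply graded Jacobi to rewrite $\{QX,\{QY,Z\}\}=\{\{QX,QY\},Z\}+\{QY,\{QX,Z\}\}$, exactly as in the proof of Lemma \ref{lem:nu_2:Leibniz}; on the second I would use the derivation property to split $Q\{QX,Y\}=\{Q^2X,Y\}+\{QX,QY\}$, which is precisely where the $Q^2$ terms are generated. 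Feeding these into the weighted total antisymmetrization $(-)_{[X,Y,Z]}$, the ``double-$Q$'' terms $\{\{QX,QY\},Z\}$ recombine, while the genuine $Q^2$ contributions collect into a multiple of $\{\{Q^2X,Y\},Z\}_{[X,Y,Z]}$.

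The bookkeeping is cleanest when organized through the two lemmas directly: by Lemma \ref{lem:Courant-Dorfman} the symmetric part of $\nu_2$ is the $Q$-exact quantity $\tfrac12 Q\{X,Y\}$, and antisymmetrizing the Leibniz relation of Lemma \ref{lem:nu_2:Leibniz} over the three arguments shows that all non-$Q^2$ terms of the Jacobiator are $Q$-exact and reassemble, using again the derivation property of $Q$, into $Q$ of the totally antisymmetrized double bracket $\{\{QX,Y\},Z\}_{[X,Y,Z]}$, i.e.\ into $Q\mu_3(X,Y,Z)$. What remains is the single residual term $\{\{Q^2X,Y\},Z\}_{[X,Y,Z]}$ coming from the Leibniz defect, with a definite nonzero coefficient. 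Equating this master identity with the homotopy Jacobi relation then yields the stated equivalence in both directions.

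The main obstacle I anticipate is not conceptual but the graded-sign and combinatorial accounting: one must verify that after total antisymmetrization the terms of type $\{\{QX,QY\},Z\}$ and $\{\{QX,Y\},QZ\}$ do recombine to produce precisely $Q$ of $\{\{QX,Y\},Z\}_{[X,Y,Z]}$ with the correct normalization, rather than leaving a non-$Q$-exact residue, and that the $Q^2$ terms survive only in the fully antisymmetrized combination. The signs are controlled throughout by $|X|=|Y|=|Z|=n-1$, so that $\{QX,QY\}$ is antisymmetric in its entries while $\{X,Y\}$ is symmetric; keeping track of these two facts is exactly what makes the otherwise tedious expansion collapse onto the asserted identity, whence the equivalence with $\{\{Q^2X,Y\},Z\}_{[X,Y,Z]}=0$ follows.
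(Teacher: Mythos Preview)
Your proposal is correct and follows essentially the same route as the paper: both reduce the $\mu_2$-Jacobiator, via the Courant--Dorfman relation of Lemma~\ref{lem:Courant-Dorfman}, to the $\nu_2$-Leibniz defect computed in Lemma~\ref{lem:nu_2:Leibniz} plus $Q$-exact correction terms that assemble into $Q\mu_3$, with the symmetry $\{Y,Z\}=\{Z,Y\}$ on degree $n-1$ elements killing the spurious $Q^2\{Y,Z\}$ contributions. The paper merely packages this more compactly by stating the single identity $\mu_2(X,\mu_2(Y,Z))=\nu_2(X,\nu_2(Y,Z))-\{X,Q^2\{Y,Z\}\}+Q\{\{QY,Z\},X\}$ up front, which is exactly what your expansion would produce.
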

\begin{proof}
 This statement follows from the proof of lemma \ref{lem:nu_2:Leibniz} together with the observation that 
 \begin{equation}
  \begin{aligned}
   \mu_2(X,\mu_2(Y,Z))=\nu_2(X,\nu_2(Y,Z))-\{X,Q^2\{Y,Z\}\}+Q\{\{QY,Z\},X\}-Q\{X,Q\{Y,Z\}\}~,
  \end{aligned}
 \end{equation}
 which is a consequence of lemma \ref{lem:Courant-Dorfman}, and $\{Y,Z\}-\{Z,Y\}=0$.
\end{proof}

Clearly, the condition for $\mu_2$ satisfying the homotopy Jacobi relation is weaker than that of $\nu_2$ forming a Leibniz algebra. Again, we are merely interested in the Lie $n$-algebra of symmetries, so we will constrain the set $\CCX(\CM)$ by \eqref{eq:Q2XYZ}. 

Let us now extend theorem \ref{thm:algebra_structure_of_brackets} to a full Lie $n$-algebra structure on a subset $\sL(\CM)$ of $\CC^\infty(\CM)$ with derived brackets \eqref{eq:L_infty_brackets}. We start with the chain complex 
\begin{equation}
 \sL(\CM):= \sL_{n-1}(\CM) \xrightarrow{~Q~} \sL_{n-2}(\CM) \xrightarrow{~Q~} \ldots \xrightarrow{~Q~} \sL_{1}(\CM) \xrightarrow{~Q~} \sL_{0}(\CM) \xrightarrow{~0~} 0~, 
\end{equation}
cf.\ \eqref{eq:ordinary_L_infty_complex}. Note that the homotopy Jacobi relation $\mu_1^2:=Q^2=0$ is nontrivial only when applied to elements of $\sL_k(\CM)$ with $k>1$. We can therefore lift $Q^2=0$ on elements of $\sL_1(\CM)$ and $\sL_0(\CM)$, as done in theorem \ref{thm:algebra_structure_of_brackets}. Moreover, it is clear from the proof of theorem \ref{thm:algebra_structure_of_brackets} that in verifying the higher homotopy relations \eqref{eq:homotopy_relations}, $Q^2$ will never appear outside all  Poisson brackets, but always in the form $\{\ldots\{Q^2-,-\},\ldots\}$. Therefore, the condition $Q^2=0$ is unnecessarily strict and can be relaxed to conditions like \eqref{eq:Q2XYZ}. As a consequence, however, we cannot expect an $L_\infty$-algebra structure on all of $\CC^\infty(\CM)$, as in the case of Lie $n$-algebroids. Instead, we have to choose a suitable subset.

\begin{definition}
 Given a pre-N$Q$-manifold $\CM$, an \uline{$L_\infty$-algebra structure} on $\CM$ is a subset $\sL(\CM)$ of the functions $\CC^\infty(\CM)$ such that the derived brackets \eqref{eq:L_infty_brackets} close on $\sL(\CM)$ and form an $L_\infty$-algebra.
\end{definition}
\noindent In special cases, this notion of $L_\infty$-algebra structure corresponds to a polarization or, as we will see later, to the strong section condition of double field theory up to a slight weakening.

We are particularly interested in the case of Lie 2-algebras for which we have the following theorem.
\begin{theorem}\label{thm:Lie_2_subset}
 Let $\CM$ be a pre-N$Q$-manifold of degree 2. Consider a subset $\sL(\CM)$ of $\CC^\infty(\CM)$ concentrated in $\sL$-degrees $0$ and $1$, i.e.\ $\sL(\CM)=\sL_1(\CM)\oplus \sL_0(\CM)$, on which the derived brackets \eqref{eq:L_infty_brackets} and the Poisson bracket close. Then $\sL(\CM)$ is an $L_\infty$-algebra if and only if
 \begin{equation}\label{eq:conditions_thm_Lie2_subset}
  \begin{aligned}
    \{Q^2f,g\}+\{Q^2g,f\}&=0~,\\
    \{Q^2X,f\}+\{Q^2f,X\}&=0~,\\
    \{\{Q^2X,Y\},Z\}_{[X,Y,Z]}&=0
  \end{aligned}
 \end{equation}
 for all $f,g\in \sL_1(\CM)$ and $X,Y,Z\in \sL_0(\CM)$.
\end{theorem}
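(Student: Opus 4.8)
The plan is to test the five semistrict Lie $2$-algebra relations \eqref{eq:homotopy_relations} directly on the derived brackets \eqref{eq:L_infty_brackets}, with $\mu_1=Q$, and to match each of them against one of the three conditions \eqref{eq:conditions_thm_Lie2_subset}. Throughout I would use only two structural facts: that $Q$ acts as a graded derivation on the Poisson bracket, $Q\{a,b\}=\{Qa,b\}+(-1)^{|a|+n}\{a,Qb\}$, and that $\{-,-\}$ obeys the graded Jacobi identity \eqref{eq:properties_poisson}. The organising observation, already exploited in the proof of Theorem \ref{thm:algebra_structure_of_brackets} and in the discussion following it, is that in each relation the factor $Q^2$ either cancels or survives only inside Poisson brackets; formally setting $Q^2=0$ collapses every relation to an identity that holds on a genuine symplectic N$Q$-manifold. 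Hence verifying a given relation is equivalent to showing that its $Q^2$-defect vanishes, and the whole argument reduces to bookkeeping of these defects on $\sL(\CM)=\sL_1(\CM)\oplus\sL_0(\CM)$.

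First I would dispose of the three relations that match the conditions outright. Writing $f,g\in\sL_1(\CM)$ and $X,Y,Z\in\sL_0(\CM)$, using $\mu_2(X,f)=\tfrac12\{QX,f\}$ and graded antisymmetry of $\mu_2$, the relation $\mu_2(\mu_1(v_1),v_2)=\mu_2(v_1,\mu_1(v_2))$ collapses to $\tfrac12\{Q^2f,g\}=-\tfrac12\{Q^2g,f\}$, i.e.\ the first condition. Applying the derivation property once to $Q\{QX,f\}$, the relation $\mu_1(\mu_2(w,v))=\mu_2(w,\mu_1(v))$ collapses to $\{Q^2X,f\}+\{Q^2f,X\}=0$, the second condition. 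The relation $\mu_1(\mu_3(w_1,w_2,w_3))=\mu_2(w_1,\mu_2(w_2,w_3))+\mathrm{cycl.}$ is exactly the assertion of Theorem \ref{thm:algebra_structure_of_brackets}, hence equivalent to the third condition \eqref{eq:Q2XYZ}. All three computations are reversible, so these relations are \emph{equivalent} to the three conditions; in particular this already settles the ``only if'' direction, since an $L_\infty$-algebra satisfies all of \eqref{eq:homotopy_relations}.

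For the ``if'' direction it remains to verify the two relations not yet used, namely the mixed relation $\mu_3(\mu_1(v),w_1,w_2)=\mu_2(v,\mu_2(w_1,w_2))+\mathrm{cycl.}$ and the purely vectorial quartic relation \eqref{eq:homotopy_relation4b}, and to show that they are automatic once \eqref{eq:conditions_thm_Lie2_subset} holds. The strategy is the same defect analysis: expand $\mu_2$ and $\mu_3$ into nested Poisson brackets, push every $Q$ inward with the derivation property, and discard the $Q^2=0$ baseline identity. What remains is a sum of nested brackets each carrying a single factor $Q^2$; using graded Jacobi to move that factor to the innermost slot, each surviving term becomes a Poisson bracket of the left-hand side of the second or third condition with a further element of $\sL(\CM)$, and therefore vanishes. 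At this step I would invoke the standing hypothesis that the derived and Poisson brackets close on $\sL(\CM)$, which guarantees that composite arguments such as $\mu_2(X,Y)\in\sL_0(\CM)$ again lie in $\sL(\CM)$, so that the conditions apply to them.

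The main obstacle I anticipate is the quartic relation \eqref{eq:homotopy_relation4b}: its expansion produces by far the largest number of nested-bracket terms, and the delicate point is to confirm that, after graded Jacobi has isolated $Q^2$, the defect reorganises precisely into the totally antisymmetrised triple expression $\{\{Q^2X,Y\},Z\}_{[X,Y,Z]}$ of the third condition, rather than into a genuinely new and independent quartic constraint. I would control this by exploiting the cyclic and antisymmetry structure already encoded in the notation $(\,\cdot\,)_{[X,Y,Z]}$ to show that the candidate new terms cancel pairwise, leaving only multiples of \eqref{eq:Q2XYZ}. Once this cancellation is established, all five relations of \eqref{eq:homotopy_relations} hold under \eqref{eq:conditions_thm_Lie2_subset}, completing the equivalence.
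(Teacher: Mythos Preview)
Your proposal is correct and follows essentially the same route as the paper's proof: match three of the five homotopy relations to the three conditions, then verify that the remaining mixed and quartic relations hold automatically by tracking the $Q^2$-defects. The one place where the paper is more explicit than you is in handling the mixed relation $\mu_3(\mu_1(f),X,Y)=\ldots$: there the defect does not quite fall into the form ``a Poisson bracket of the second or third condition with a further element'', and the paper additionally invokes Lemma~\ref{lem:PB_X_f_trivial} (namely $\{X,f\}=0$, hence $\{QX,f\}=\{X,Qf\}$) together with closure and the second condition to kill it.
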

\begin{proof}
 The first and second conditions are equivalent to the homotopy Jacobi identities\linebreak $\mu_2(\mu_1(f),g)=\mu_2(f,\mu_1(g))$ and $\mu_1(\mu_2(X,f))=\mu_2(X,\mu_1(f))$, respectively. The third condition is equivalent to
 \begin{equation}
  \mu_1(\mu_3(X,Y,Z))=\mu_2(X,\mu_2(Y,Z))+\mu_2(Y,\mu_2(Z,X))+\mu_2(Z,\mu_2(X,Y))
 \end{equation}
 by theorem \ref{thm:algebra_structure_of_brackets}. The homotopy Jacobi identity
 \begin{equation}
  \mu_3(\mu_1(f),X,Y)=\mu_2(f,\mu_2(X,Y))+\mu_2(Y,\mu_2(f,X))+\mu_2(X,\mu_2(Y,f))
 \end{equation}
 yields the condition
 \begin{equation}
  \{\{Q^2 f,X\},Y\}-\{\{Q^2f,Y\},X\}-\{\{Q^2 X,Y\},f\}+\{\{Q^2Y,X\},f\}=0~,
 \end{equation}
 which is automatically satisfied if the brackets close on $\sL(\CM)$ and the second condition holds. Here, we used that $\{X,f\}=0$ from lemma \ref{lem:PB_X_f_trivial} implies
 \begin{equation}
  \{Q X,f\}-\{X,Qf\}=\{Qf,X\}+\{f,QX\}=0~.
 \end{equation}
 Using the same relation, we find by direct computation that also the identity \eqref{eq:homotopy_relation4b} holds up to terms of the form $\{\{Q^2X,Y\},Z\}_{[X,Y,Z]}$.
\end{proof}

\subsection{Extended tensors}\label{ssec:extended_tensors}

The $L_\infty$-structure $\sL(\CM)$ on the pre-N$Q$-manifold $\CM$ will play the role of extended infinitesimal diffeomorphisms and gauge transformations, which should act on extended tensors. Note that $\CC^\infty(\CM)$ already encodes totally antisymmetric products of vector fields, which will be identified with extended tensors. In order to capture extended Riemannian geometry, and in particular the metric, we will have to allow for arbitrary extended tensors. This is done by replacing the graded symmetric tensor algebra $\CC^\infty(\CM)$ generated by the coordinate functions of positive grading on $\CM$ by the free (associative) tensor algebra $T(\CM)$ generated by the coordinate functions of positive grading. 
\begin{definition}
 By $T(\CM)$, we mean the tensor algebra of $\CC^\infty(\CM)$ which is seen as a graded module over the commutative ring $\CC^\infty(\CM_0)$.
\end{definition}

We would like our gauge $L_\infty$-algebra $\sL(\CM)$ to act on elements in $T(\CM)$, and the natural candidate is a generalization of the derived bracket. For this, we extend the Poisson bracket via the Leibniz rule as follows.
\begin{definition}
 We implicitly define an extension of the Poisson bracket on $\CM$,\linebreak $\{-,-\}:\sL(\CM)\times T(\CM)\rightarrow T(\CM)$, by 
\begin{equation}\label{eq:ext_Poisson}
 \{f,g\otimes h\}:=\{f,g\}\otimes h+(-1)^{(n-|f|)|g|}g\otimes \{f,h\}~.
\end{equation}
\end{definition}
\noindent Note that this equation is consistent and fixes the extension uniquely, as is readily seen by computing the expression $\{f,g_1\otimes g_2\otimes g_3\}$, $g_{1,2,3}\in \CC^\infty(\CM)$, in the two possible ways. We now have the following lemma:
\begin{lemma}
 The extended Poisson bracket \eqref{eq:ext_Poisson} satisfies the Jacobi identity
 \begin{equation}
  \{f,\{g,t\}\}=\{\{f,g\},t\}+(-1)^{(|f|+n)(|g|+n)}\{g,\{f,t\}\}
 \end{equation}
 for all $f,g\in \CC^\infty(\CM)$ and $t\in T(\CM)$.
\end{lemma}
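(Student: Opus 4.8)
The plan is to prove the extended Jacobi identity by induction on the tensor length of $t \in T(\CM)$. The base case, where $t$ is a single homogeneous function $h \in \CC^\infty(\CM)$, is exactly the graded Jacobi identity for the Poisson bracket on $\CM$ recorded in \eqref{eq:properties_poisson}, so nothing new is required there. For the inductive step I would write $t = t_1 \otimes t_2$ with $t_1$ homogeneous of $\NN$-degree $|t_1|$ and $t_2$ of strictly shorter length, and reduce the claim for $t$ to the claim for $t_1$ and $t_2$ individually using the defining Leibniz rule \eqref{eq:ext_Poisson}.

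Concretely, I would expand both sides. On the left, $\{f,\{g,t_1\otimes t_2\}\}$ is obtained by first applying \eqref{eq:ext_Poisson} to $\{g,t_1\otimes t_2\}$, then applying it again after bracketing with $f$; this produces four terms of the schematic form $\{f,\{g,t_i\}\}\otimes t_j$ together with two cross terms $\{g,t_1\}\otimes\{f,t_2\}$ (with signs). The two terms on the right-hand side expand similarly. The heart of the computation is then to match terms: the genuine ``nested'' terms $\{f,\{g,t_i\}\}$ are handled by the inductive hypothesis applied to $t_1$ and $t_2$, while the cross terms must cancel or recombine purely by sign bookkeeping.

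The main obstacle, as usual for graded Leibniz arguments, is getting the Koszul signs to line up. The sign in \eqref{eq:ext_Poisson} is $(-1)^{(n-|f|)|g|}$, and when one moves $f$ past $g$ or past a tensor factor $t_1$, the exponents involve $n$, $|f|$, $|g|$, and $|t_1|$ in a way that must be tracked carefully; in particular the two cross terms arising from the left side and the two from the right side have to combine into the single cross-contribution dictated by the overall identity, and verifying that the surviving prefactor is exactly $(-1)^{(|f|+n)(|g|+n)}$ is where an error is most likely to hide. I would organize this by computing the bracket of each side against a generic $t_1 \otimes t_2$ symbolically, collecting the coefficient of each distinct tensor monomial $\{\cdot,t_1\}\otimes\{\cdot,t_2\}$, and checking the exponents reduce modulo $2$ to the required value using $|{\{g,t_1\}}| = |g|+|t_1|-n$.

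A cleaner alternative I would keep in reserve is to observe that, for fixed $f$ and $g$, both the Jacobiator $\{f,\{g,-\}\} - \{\{f,g\},-\} - (-1)^{(|f|+n)(|g|+n)}\{g,\{f,-\}\}$ and the zero operator are graded derivations of $T(\CM)$ of the same degree with respect to the tensor product (this follows because each of $\{f,-\}$, $\{g,-\}$, $\{\{f,g\},-\}$ is such a derivation by \eqref{eq:ext_Poisson}). A graded derivation of a free tensor algebra is determined by its action on generators, so it suffices to check that the Jacobiator vanishes on the degree-positive coordinate functions, which is again just \eqref{eq:properties_poisson}. This reduces the whole statement to the generator case and disposes of the induction in one stroke; the only care needed is to confirm that a graded commutator of two derivations is again a derivation of the appropriate degree, which is standard.
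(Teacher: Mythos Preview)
Your proposal is correct and follows essentially the same route as the paper, which simply states ``By direct verification of the identity for $t=t_1\otimes t_2$.'' Your induction-on-length argument with the Leibniz expansion is exactly that verification spelled out, and your alternative via the Jacobiator-as-derivation is a tidy repackaging of the same content.
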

\begin{proof}
 By direct verification of the identity for $t=t_1\otimes t_2$.
\end{proof}

The extended Poisson bracket allows us now to define a natural action of the Lie $n$-algebra $\sL(\CM)$ on a subset $\sT(\CM)$ of $T(\CM)$ via 
\begin{equation}\label{eq:L_infty-rep}
f\acton t:=\{\delta f,t\}~,
\end{equation}
where $f\in \sL(\CM)$, $t\in \sT(\CM)\subset T(\CM)$ and $\delta$ was defined in section \ref{ssec:NQ-manifolds} as $\delta(t)=Qt$ for $t\in \CC^\infty_{n-1}(\CM)=\sL_0(\CM)$ and $\delta(t)=0$ else. The subset $\sT(\CM)$ is then defined by the requirement that \eqref{eq:L_infty-rep} is indeed an action of an $L_\infty$-algebra as defined in section \ref{ssec:NQ-manifolds}. 

\begin{theorem}\label{thm:restrictions_T}
 The map \eqref{eq:L_infty-rep} defines an action of $\sL(\CM)$ on $\sT(\CM)$ if $\sT(\CM)$ is a subset of $T(\CM)$ whose elements $t$ satisfy
  \begin{equation}\label{eq:tensor-conditions}
   \big\{\{Q^2X,Y\}-\{Q^2Y,X\},t\big\}=0\eand \Big\{\big\{\{Q^2X,Y\}-\{Q^2Y,X\},QZ\big\},t\Big\}=0
  \end{equation}
  for all $X,Y,Z\in \sL_0(\CM)$. That is, $\acton$ defines a morphism of $L_\infty$-algebras from $\sL(\CM)$ to $\frX(\sT(\CM))$.
\end{theorem}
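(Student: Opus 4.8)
The plan is to verify directly that $\acton$ satisfies the defining relations of a strict $L_\infty$-morphism into the graded Lie algebra $\frX(\sT(\CM))$, as encoded by the chain map \eqref{eq:L_infity_action}. Since $\delta$ vanishes off $\sL_0(\CM)$, the whole action is carried by $\sL_0(\CM)$ via $X\acton t=\{QX,t\}=:\rho(X)t$, while everything of higher $\sL$-degree acts trivially. Because the target $\frX(\sT(\CM))$ carries only a binary bracket, the strict morphism relations $\mu_k'\circ\delta^{\otimes k}=\delta\circ\mu_k$ collapse to three families: the chain-map condition $\Phi_1\circ\mu_1=0$, the bracket condition $[\Phi_1(X),\Phi_1(Y)]=\Phi_1(\mu_2(X,Y))$, and the vanishing $\Phi_1(\mu_3(X,Y,Z))=0$. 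I would treat these in turn.

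The heart of the argument is the bracket condition. First I would compute $[\rho(X),\rho(Y)]t$ using the graded Jacobi identity for the extended Poisson bracket (the lemma preceding this theorem); since $QX$ and $QY$ both have $\NN$-degree $n$ the Koszul sign is trivial and one obtains $[\rho(X),\rho(Y)]t=\{\{QX,QY\},t\}$. On the other side I would expand $\rho(\mu_2(X,Y))t=\{Q\mu_2(X,Y),t\}$ using $Q\{f,g\}=\{Qf,g\}+(-1)^{|f|+n}\{f,Qg\}$, which yields
\begin{equation}
 Q\mu_2(X,Y)=\tfrac12\big(\{Q^2X,Y\}-\{Q^2Y,X\}\big)+\{QX,QY\}~.
\end{equation}
Subtracting, the obstruction to the bracket condition is exactly $\tfrac12\{\{Q^2X,Y\}-\{Q^2Y,X\},t\}$, so imposing the first line of \eqref{eq:tensor-conditions} makes $\Phi_1$ intertwine $\mu_2$ with the commutator of vector fields on $\sT(\CM)$.

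The second condition then arises as the requirement that $\sT(\CM)$ be \emph{closed} under the action, which is needed for the $\rho(X)$ to be genuine endomorphisms of $\sT(\CM)$ and hence for the relations to be testable by iteration at all. Writing $A(X,Y):=\{Q^2X,Y\}-\{Q^2Y,X\}$, I would check that if $t$ satisfies $\{A(X,Y),t\}=0$ then so does $\rho(Z)t=\{QZ,t\}$: the extended Jacobi identity gives $\{A(X,Y),\{QZ,t\}\}=\{\{A(X,Y),QZ\},t\}+\{QZ,\{A(X,Y),t\}\}$, and the last term drops by the first condition, leaving precisely the second line of \eqref{eq:tensor-conditions}. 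The remaining relations I expect to be automatic or to follow from the hypotheses: $\mu_3(X,Y,Z)$ lands in $\sL_1(\CM)$ and so acts trivially, matching the vanishing of the ternary bracket on the Lie-algebra side, while the chain-map condition on $\sL_1(\CM)$ reduces to $\{Q^2f,t\}=0$, which I would derive from the closure hypothesis together with the Lie $n$-algebra relations of Theorem \ref{thm:Lie_2_subset}.

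The main obstacle is conceptual rather than computational: recognizing that the two displayed conditions play structurally different roles — the first is the intertwining (representation) property of $\mu_2$, the second is stability of the constraint surface $\sT(\CM)$ under the action — and that together with the already-established Lie $n$-algebra structure of $\sL(\CM)$ they exhaust the $L_\infty$-morphism relations. The only genuine bookkeeping is keeping the Koszul signs straight in the repeated use of the graded Jacobi identity, which is mild here because all the relevant Hamiltonians $QX$ sit in $\NN$-degree $n$.
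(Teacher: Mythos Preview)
Your proposal is correct and follows essentially the same route as the paper: the paper's proof also identifies the first condition in \eqref{eq:tensor-conditions} with the representation property $X\acton(Y\acton t)-Y\acton(X\acton t)=\mu_2(X,Y)\acton t$, the second condition with closure of $\sT(\CM)$ under $X\acton$, and dismisses everything else because $\delta$ vanishes on $\sL_i(\CM)$ for $i>0$. Your computations of $[\rho(X),\rho(Y)]t=\{\{QX,QY\},t\}$ and $Q\mu_2(X,Y)=\tfrac12(\{Q^2X,Y\}-\{Q^2Y,X\})+\{QX,QY\}$ are exactly what underlies the paper's terse first sentence.

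You are in fact more careful than the paper in one respect: you flag the chain-map condition $\Phi_1\circ\mu_1=0$ on $\sL_1(\CM)$, i.e.\ $\{Q^2f,t\}=0$, which the paper simply absorbs into ``nothing else to check''. Your proposed derivation of this from the closure hypothesis together with Theorem~\ref{thm:Lie_2_subset} is not spelled out, and it is not obvious that it goes through in full generality; but the paper does not address this point at all, so your argument is at least as complete as the published one. In the applications (generalized geometry, DFT) this condition is subsumed by the strong section condition anyway.
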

\begin{proof}
 The first equation amounts to
 \begin{equation}
  X\acton(Y\acton t)-Y\acton(X\acton t)=\mu_2(X,Y)\acton t~, 
 \end{equation}
 for all $X,Y\in \sL_0(\CM)$, where 
 \begin{equation}
  \mu_2(X,Y)=\tfrac12(\{QX,Y\}-\{QY,X\})~.
 \end{equation}
 The second equation guarantees that $X\acton$ is indeed an element of $\frX(\sT(\CM))$, that is $X\acton t\in \sT(\CM)$.  Since $\delta$ vanishes on $\sL_i(\CM)$ for $i>0$, there is nothing else to check.
\end{proof}

Note that for $t\in\sL_0(\CM)$, the first condition in \eqref{eq:tensor-conditions} is an antisymmetrization of that of lemma \ref{lem:nu_2:Leibniz}, but stronger than that of theorem \ref{thm:algebra_structure_of_brackets}. Ideally, we would like elements of $\sL(\CM)$ to be simultaneously elements of $\sT(\CM)$, just as vectors are also tensors. Also, it is not clear to us whether the most general set $\sT(\CM)$ is really relevant or interesting. We therefore give the following definition.
\begin{definition}
 An \uline{extended tangent bundle structure} on a symplectic pre-N$Q$-manifold $\CM$ of degree $n$ is an $L_\infty$-algebra structure $\sL(\CM)$, which is simultaneously carrying an action of $\sL(\CM)$. An \uline{extended vector field} is an element of $\CCX(\CM)=\sL_0(\CM)$ of $\sL$-degree 0. It is a section of the \uline{extended tangent bundle} $\CCT \CM$, which is a vector bundle over $M=\CM_0$. 
 
 An \uline{extended tensor of type (p,q)} is now simply a map from $\CCX^*(\CM)^{\otimes p}\otimes\CCX(\CM)^{\otimes q}\rightarrow \CC^\infty(\CM_0)$, which is multilinear over each point of $\CM_0$. Here, $*$ indicates the dual of $\CCX(\CM)$ as a module over $\CC^\infty(\CM_0)$.
 
 An \uline{extended function} is an element of $\CC^\infty(\CM_0)=\sL_{n-1}(\CM)$ of $\sL$-degree $n-1$.
\end{definition}
\noindent Note that any $L_\infty$-algebra structure $\sL(\CM)$ gives rise to an extended tangent bundle structure, as long as $Q^2=0$.

\subsection{Extended tensor densities}\label{ssec:tensor_densitites}

For writing down action functionals, we shall also require an extended scalar tensor density and its transformations under extended symmetries. For example, the naive local top form $\dd x^1\wedge \ldots \wedge \dd x^D$ on a $D$-dimensional Lorentzian manifold is not invariant, but needs to be multiplied by a scalar tensor density, yielding
\begin{equation}\label{eq:vol_GR}
 \sqrt{|g|}\,\dd x^1\wedge \ldots \wedge \dd x^D~~~\mbox{or}~~~\sqrt{|g|}\,\de^{-2\phi}\,\dd x^1\wedge \ldots \wedge \dd x^D~,
\end{equation}
where $\phi$ is the ordinary dilaton field. We follow the convention of double field theory and combine both the tensor density and a potential exponential of a dilaton into an extended dilaton $d$ and always write
\begin{equation}
 \Omega:=\de^{-2d}\,\dd x^1\wedge \ldots \wedge \dd x^D
\end{equation}
for the volume form. Note that $d=-\tfrac12 \log \sqrt{|g|}$ in the case of ordinary geometry and trivial dilaton.

In a general extended geometry, the transformation law of $\de^{-2d}$ is derived from the fact that a scalar function $R$ together with $\Omega$ should give rise to the invariant action
\begin{equation}\label{eq:invariance_action}
 S=\int \Omega R~,
\end{equation}
such that $\delta S$ is an integral over a total derivative. That is,
\begin{equation}
 (X\acton \de^{-2d})\Omega R+\de^{-2d}(X\acton \Omega) R+\de^{-2d}\Omega (X\acton R)=\dpar_M(X^M\de^{-2d}\Omega R)~,
\end{equation}
which reproduces \eqref{eq:vol_GR} in the case of ordinary differential geometry. We shall work out several further examples in sections \ref{sec:examples} and \ref{sec:dft}.

\subsection{Extended metric and action}\label{ssec:metric_and_action}

So far, we have described extended infinitesimal symmetries as Lie $n$-algebras and we have defined their action on extended tensor fields. It remains to provide a definition of an extended metric. For this, we generalize an idea found e.g.\ in \cite{Hull:2007zu}.

Let $\sG$ and $\varrho$ be the structure group and the relevant representation on the extended tangent bundle $\CCT(\CM)$ on our pre-N$Q$-manifold $\CM$ and let $\sH$ be a maximal compact subgroup. An extended metric is simply a reduction of the bundle $\CCT(\CM)$ to another one, $E$, with structure group $\sH$ and restriction of the representation $\varrho$. 

Explicitly, let $\frU=(U_i)$ be a cover of $M$ and let $g_{ij}:U_i\cap U_j\rightarrow \varrho(\sG)$ be maps encoding a general cocycle with which defines $\CCT(\CM)$ subordinate to $\frU$. Then the reduction is given by a coboundary consisting of maps $\gamma_i:U_i\rightarrow \varrho(\sG)$ between $g_{ij}$ and another cocycle $h_{ij}$ with values in $\varrho(\sH)$, defining an isomorphic vector bundle:
\begin{equation}\label{eq:coboundary}
 h_{ij}=\gamma_i g_{ij}\gamma_j^{-1}~.
\end{equation}
This equation  implies that
\begin{equation}
 \gamma_i=h_{ij}\gamma_j g_{ij}^{-1}~,
\end{equation}
and therefore $\gamma_i$ encodes a bundle morphism from $\CCT(\CM)$ to another bundle $E$ with structure group $\sH$ and representation $\varrho|_{\sH}$.

Note, however, that also a coboundary given by $h_i\gamma_i$ with $h_i:U_i\rightarrow \varrho|_{\sH}$ defines such a bundle:
\begin{equation}\label{eq:remaining_equivalence}
 h'_{ij}=h_i\gamma_i g_{ij} (h_j\gamma_j)^{-1}~.
\end{equation}
Let us now assume that $\CCT(\CM)$ carries an $\sH$-invariant inner product and denote the corresponding adjoints by $-^*$. Then $h^*_i=h^{-1}_i$, but $\gamma_i^*\neq \gamma_i^{-1}$ in general and we have a simple way of factoring out the remaining equivalence \eqref{eq:remaining_equivalence} by considering
\begin{equation}
 \CH:=\gamma^*\gamma~.
\end{equation}
We will usually refer to $\CH$ as the extended metric.

Action principles can be defined from the extended metric and its derivatives as functions which are invariant under the symmetry group $\sH$ and the extended symmetries induced by the extended Lie derivative. Such action functionals have to be constructed individually for each extended geometry as a sum of $\sH$-invariant terms which is invariant under extended diffeomorphisms. It turns out that in all the cases we shall discuss, the terms considered in the literature on double and exceptional field theory, cf.\ e.g.\ \cite{Berman:2011pe}, are sufficient. That is, we will consider actions of the form 
\begin{equation}
\begin{aligned}
 S=\int_M &\de^{-2d}\,\dd x^1\wedge \ldots \wedge \dd x^D \Big(c_0 \CH_{MN}\dpar^M\CH_{KL}\dpar^N \CH^{KL}+c_1 \CH_{MN}\dpar^M\CH_{KL}\dpar^L\CH^{KN}\\
 &~~+c_2\CH^{MN}(\CH^{KL}\dpar_M\CH_{KL})(\CH^{RS}\dpar_N \CH_{RS})+c_3\CH^{MN}\CH^{PQ}(\CH^{RS}\dpar_P\CH_{RS})(\dpar_M\CH_{NQ})\\
 &~~+c_4 \dpar^Md \dpar^N \CH_{MN}+ c_5\CH_{MN}\dpar^M d\dpar^N d\Big)
\end{aligned}
\end{equation}
for some real constants $c_0,\ldots,c_5$, where $d$ is the extended dilaton introduced in section \ref{ssec:tensor_densitites}.

It is an obvious question whether one can derive this action from an extended Riemann tensor. We will discuss the problems associated with extended Riemann and torsion tensors in detail in section \ref{ssec:Riemann_tensor}.

\section{Examples}\label{sec:examples}

\subsection{Example: Riemannian geometry}

To describe the symmetries of ordinary Riemannian geometry on a manifold $M$, we choose $\CM=\CV_1(M)$ with coordinates $x^\mu,\zeta_\mu$ of $\NN$-degree 0 and $\xi^\mu,p_\mu$ of $\NN$-degree 1, cf.\ section \ref{ssec:Vinogradov}. We trivially have $Q^2=0$, so we are free to choose any consistent subset of $\CC^\infty(\CM)$ as our $L_\infty$-structure. The appropriate $\sL(\CM)$ here are the functions linear in $\zeta_\mu$ because these parametrize vector fields $X=X^\mu\zeta_\mu$, and their Lie $n$-algebra structure is simply the Lie algebra of vector fields:
\begin{equation}
 \mu_2(X,Y)=\tfrac12\big(\{QX,Y\}-\{QY,X\}\big)=X^\mu \dpar_\mu Y^\nu\zeta_\nu-Y^\mu\dpar_\mu X^\nu\zeta_\nu=[X,Y]~.
\end{equation}
Since $Q^2=0$, the tensor fields $\sT(\CM)$ are all elements of $T(\CM)$, and the action of $\sL(\CM)$ on $\sT(\CM)$ is just the usual transformation law of tensors under infinitesimal diffeomorphisms: $X\acton t:=\CL_X t$ or, in components,
\begin{equation}
\begin{aligned}
 X\acton t^\mu{}_\nu \zeta_\mu\otimes \xi^\nu&:=\{\delta X,t^\mu{}_\nu \zeta_\mu\otimes \xi^\nu\}\\
 &=\big(X^\mu(\dpar_\mu t^\nu{}_\kappa)-(\dpar_\mu X^\nu) t^\mu{}_\kappa+(\dpar_\kappa X^\mu) t^\nu{}_\mu\big)\zeta_\nu\otimes \xi^\kappa~.
\end{aligned}
\end{equation}
The reduction from $\sG$ to $\sH$ defining the extended metric is trivial in this case, since $\sG=\sH$, and the coboundaries $\gamma_i$ are simply chosen to be ordinary vielbeins. This yields the ordinary metric as extended metric, and it transforms as expected:
\begin{equation}
  X\acton g_{\mu\nu}\xi^\mu\odot \xi^\nu=(X^\kappa\dpar_\kappa g_{\mu\nu}+\dpar_\mu X^\kappa g_{\kappa\nu}+\dpar_\nu X^\kappa g_{\mu\kappa})\xi^\mu\odot \xi^\nu~.
\end{equation}

By its definition in section \ref{ssec:tensor_densitites}, the exponential of the extended dilaton $\de^{-2d}$ has to transform in such a way that the Lagrangian in \eqref{eq:invariance_action} transforms as a total derivative under the integral. As it is well known, this is the case if $\de^{-2d}$ transforms as an ordinary tensor tensity of weight 1, as e.g.\ $\sqrt{g}$. 

The relevant action here, which is constructed from the terms listed in section \ref{ssec:metric_and_action}, reads as
\begin{equation}
\begin{aligned}
 S=\int_M \de^{-2d}\,\dd x^1\wedge \ldots \wedge \dd x^D \Big(& \tfrac14 \CH^{MN}\dpar_M\CH^{KL}\dpar_N \CH_{KL}-\tfrac12 \CH^{MN}\dpar_M\CH^{KL}\dpar_L\CH_{KN}\\
 &~~-2 \dpar_Md \dpar_N \CH^{MN}+ 4\CH^{MN}\dpar_M d\dpar_N d\Big)~.
\end{aligned}
\end{equation}
Instead of checking the invariance of this action under the usual diffeomorphisms, we directly verify that it reproduces the Einstein-Hilbert action coupled to a dilaton. To see this, we identify $\de^{-2d}=\sqrt{|g|}\de^{-2\phi}$, which implies
\begin{equation}
 d=\phi-\tfrac12 \log \sqrt{|g|}~.
\end{equation}
We also recall that 
\begin{equation}
 \dpar_\mu\log \sqrt{|g|}=\Gamma_\mu:=\Gamma^{\kappa}{}_{\mu\kappa}=\tfrac12 g^{\kappa\lambda}\dpar_\mu g_{\kappa\lambda}
\end{equation}
and note that 
\begin{equation}
 \tfrac14 \dpar^\mu g^{\kappa\lambda}\dpar_\mu g_{\kappa\lambda}-\tfrac12 \dpar_\mu g_{\kappa\lambda}\dpar^\lambda g^{\kappa\mu}=\dpar_\nu g^{\nu\mu}\Gamma_\mu-\dpar_\kappa g^{\mu\nu}\Gamma^\kappa{}_{\mu\nu}+g^{\mu\nu}(\Gamma^\kappa{}_{\mu\nu}\Gamma_\kappa-\Gamma^\lambda{}_{\mu\kappa}\Gamma^{\kappa}{}_{\nu\lambda})+\Gamma^\mu \Gamma_\mu~,
\end{equation}
which is readily verified. Using the latter equation, we rewrite the action
\begin{equation}
\begin{aligned}
  S&=\int \dd^D x~\de^{-2d}\Big(\tfrac14\dpar^\mu g^{\kappa\lambda}\dpar_\mu g_{\kappa\lambda}-\tfrac12\dpar_\mu g_{\kappa\lambda}\dpar^\lambda g^{\kappa\mu}+2\dpar^\mu d \dpar^\nu g_{\mu\nu}+4\dpar_\mu d\dpar^\mu d\Big)
\end{aligned}
\end{equation}
as
\begin{equation}
 \begin{aligned}
  S&=\int \dd^D x~\de^{-2d}\Big(2\dpar_\kappa d g^{\kappa\mu}\dpar^\nu g_{\mu\nu}+\dpar_\mu g^{\mu\nu} \Gamma_\nu-\dpar_\kappa g^{\mu\nu}\Gamma^{\kappa}{}_{\mu\nu}\\
  &\hspace{4cm}+g^{\mu\nu}(\Gamma^\kappa{}_{\mu\nu}\Gamma_\kappa-\Gamma^{\lambda}{}_{\mu\kappa}\Gamma^{\kappa}{}_{\nu\lambda})+g^{\mu\nu}\Gamma_\mu\Gamma_\nu+4\dpar_\mu d\dpar^\mu d\Big)~.
 \end{aligned}
\end{equation}
Since $g^{\kappa\mu}\dpar^\nu g_{\mu\nu}=g^{\mu\nu}\Gamma^\kappa{}_{\mu\nu}+\Gamma^\kappa$, this equals
\begin{equation}
 \begin{aligned}
  S&=\int \dd^D x~\de^{-2d}\Big(2\dpar_\kappa d (g^{\mu\nu}\Gamma^\kappa{}_{\mu\nu}-\Gamma^\kappa)+4\dpar^\mu d\Gamma_\mu+\dpar_\mu g^{\mu\nu} \Gamma_\nu-\dpar_\kappa g^{\mu\nu}\Gamma^{\kappa}{}_{\mu\nu}\\
  &\hspace{4cm}+g^{\mu\nu}(\Gamma^\kappa{}_{\mu\nu}\Gamma_\kappa-\Gamma^{\lambda}{}_{\mu\kappa}\Gamma^{\kappa}{}_{\nu\lambda})+g^{\mu\nu}\Gamma_\mu\Gamma_\nu+4\dpar_\mu d\dpar^\mu d\Big)~,
 \end{aligned}
\end{equation}
and partial integration of the first term finally yields
\begin{equation}
 \begin{aligned}
  S&=\int \dd^D x\sqrt{|g|}\de^{-2\phi}\Big(g^{\mu\nu}(\dpar_\kappa\Gamma^\kappa{}_{\mu\nu}-\dpar_\nu\Gamma^\kappa{}_{\mu\kappa}+\Gamma^\kappa{}_{\mu\nu}\Gamma_\kappa-\Gamma^{\lambda}{}_{\mu\kappa}\Gamma^{\kappa}{}_{\nu\lambda})+\\
  &\hspace{4cm}+g^{\mu\nu}\Gamma_\mu\Gamma_\nu+4\Gamma_\mu\dpar^\mu d+4\dpar_\mu d\dpar^\mu d\Big)\\
  &=\int \dd^D x\sqrt{|g|}\de^{-2\phi}(R+4\dpar_\mu \phi\dpar^\mu\phi)~.
 \end{aligned}
\end{equation}

\subsection{Example: Riemannian geometry with principal \texorpdfstring{$\sU(1)$}{U(1)}-bundle}

To describe an additional connection on a (trivial) principal $\sU(1)$-bundle over $M$, we can again start from $\CM=\CV_1(M)$, but with a more general $L_\infty$-structure. We choose $\sL(\CM)$ to be the constant and linear functions in the $\zeta^\mu$, which yields $\au(1)$-valued functions together with the vector fields on $M$. Correspondingly, the extended tangent bundle is $TM\oplus \FR$. The resulting Lie $n$-algebra is an ordinary Lie algebra, namely the semidirect product of the diffeomorphisms with the gauge transformations:
\begin{equation}
\begin{aligned}
 \mu_2(f+X,g+Y)&=X^\mu \dpar_\mu (g+Y^\nu\zeta_\nu)-Y^\mu\dpar_\mu (f+X^\nu\zeta_\nu)\\
 &=[X,Y]+\CL_Xg-\CL_Yf
\end{aligned}
\end{equation}
for $f,g,X,Y\in \sL(\CM)\subset\CC^\infty(\CM_0)$ with $f,g$ and $X,Y$ constant and linear in the $\zeta^\mu$, respectively, cf.\ \eqref{eq:Courant-bracket-n=1}.

Let us introduce indices $m=(\mu,\circ)$ on extended tangent vectors, where $\circ$ stands for the component in the new, additional direction. The action of extended symmetries on extended tensors is readily obtained from its definition
\begin{equation}
 (X+f)\acton t=\{\delta (X^\mu\zeta_\mu+f),t\}
\end{equation}
for $X\in \CCX(\CV_1(M))$ and $t\in \sT(\CV_1(M))$.

The extended metric then reads as
\begin{equation}
 \CH_{mn}=\left(\begin{array}{cc}
                 g_{\mu\nu} +A_\mu A_\nu& A_\nu \\ A_\mu & 1
                \end{array}\right)
\end{equation}
with inverse
\begin{equation}
 \CH^{mn}=\left(\begin{array}{cc}
                 g^{\mu\nu} & -g^{\mu\nu}A_\nu \\ -A_\mu g^{\mu\nu} & 1+g^{\mu\nu}A_\mu A_\nu
                \end{array}\right)~.
\end{equation}
These formulas already appeared in \cite{Maharana:1992my} in a different context. Note that $\CH_{mn}$ has trivial kernel, because
\begin{equation}
 \CH=\left(\begin{array}{cc}
                 \unit_{TM} & A\\0 & 1
                \end{array}\right)\left(\begin{array}{cc}
                 g & 0\\0 & 1
                \end{array}\right)\left(\begin{array}{cc}
                 \unit_{TM} & A\\0 & 1
                \end{array}\right)^T
\end{equation}
and each of these matrices has trivial kernel. We could say that $\CH$ arises from $A$-field transformations from the block-diagonal metric, analogous to equation \eqref{eq:B-field_trafo}.

The action of an extended vector $(f+X)\acton \CH$ induces the usual diffeomorphisms on the metric $g_{\mu\nu}$ as well as gauge transformations and diffeomorphisms on the gauge potential $A_\mu$.

The extended Lie derivative transforms the top form $\dd x^1\wedge \ldots \wedge \dd x^D$ as in the previous example, and so the extended dilaton transforms again identical to an ordinary tensor density of weight 1.

The invariant action here is the same as in the previous example:
\begin{equation}
\begin{aligned}
 S=\int_M \de^{-2d}\,\dd x^1\wedge \ldots \wedge \dd x^D \Big(& \tfrac14 \CH^{MN}\dpar_M\CH^{KL}\dpar_N \CH_{KL}-\tfrac12 \CH^{MN}\dpar_M\CH^{KL}\dpar_L\CH_{KN}\\
 &~~-2 \dpar_Md \dpar_N \CH^{MN}+ 4\CH^{MN}\dpar_M d\dpar_N d\Big)~.
\end{aligned}
\end{equation}
Identifying again $\de^{-2d}=\sqrt{|g|}\de^{-2\phi}$, we verify this action's invariance by rewriting it in familiar form:
\begin{equation}
 \begin{aligned}
  S=\int_M \de^{-2d}\,\dd^D x~\Big(&\tfrac14 \CH^{\mu \nu}\dpar_\mu\CH^{\kappa\lambda}\dpar_\nu \CH_{\kappa\lambda}+\tfrac12 \CH^{\mu \nu}\dpar_\mu\CH^{\kappa\circ}\dpar_\nu \CH_{\kappa\circ}+\tfrac14 \CH^{\mu \nu}\dpar_\mu\CH^{\circ\circ}\dpar_\nu \CH_{\circ\circ}\\
 &~~-\tfrac12 \CH^{\mu \nu}\dpar_\mu\CH^{\kappa\lambda}\dpar_\lambda\CH_{\kappa\nu}-\tfrac12 \CH^{\mu \circ}\dpar_\mu\CH^{\kappa\lambda}\dpar_\lambda\CH_{\kappa\circ}-\tfrac12 \CH^{\mu \nu}\dpar_\mu\CH^{\circ\lambda}\dpar_\lambda\CH_{\circ\nu}\\
 &~~-\tfrac12 \CH^{\mu \circ}\dpar_\mu\CH^{\circ\lambda}\dpar_\lambda\CH_{\circ\circ}-2\dpar_\mu d \dpar_\nu \CH^{\mu\nu}+4\CH^{\mu\nu}\dpar_\mu d\dpar_\nu d\Big)\\
 =\int_M \de^{-2d}\,\dd^D x~\Big(&\tfrac14 g^{\mu \nu}\dpar_\mu g^{\kappa\lambda}\dpar_\nu (g_{\kappa\lambda}+A_\kappa A_\lambda)-\tfrac12 g^{\mu \nu}\dpar_\mu A^{\kappa}\dpar_\nu A_\kappa\\
 &~~-\tfrac12 g^{\mu \nu}\dpar_\mu g^{\kappa\lambda}\dpar_\lambda (g_{\kappa\nu}+A_\kappa A_\nu)+\tfrac12 A^\mu\dpar_\mu g^{\kappa\lambda}\dpar_\lambda A_\kappa+\tfrac12 g^{\mu \nu}\dpar_\mu A^\lambda\dpar_\lambda A_\nu\\
 &~~-2\dpar_\mu d \dpar_\nu g^{\mu\nu}+4g^{\mu\nu}\dpar_\mu d\dpar_\nu d\Big)~.
\end{aligned}
\end{equation}
Note that for $A_\mu=0$, this reduces to the action of the previous section. Thus, it remains to study the part $S_A$ of the action containing the gauge potential, which read as
\begin{equation}
 \begin{aligned}
  S_A=\int_M \de^{-2d}\,\dd^D x~\Big(&\tfrac14 g^{\mu \nu}\dpar_\mu g^{\kappa\lambda}\dpar_\nu (A_\kappa A_\lambda)-\tfrac12 g^{\mu \nu}\dpar_\mu A^{\kappa}\dpar_\nu A_\kappa\\
 &~~-\tfrac12 g^{\mu \nu}\dpar_\mu g^{\kappa\lambda}\dpar_\lambda (A_\kappa A_\nu)+\tfrac12 A^\mu\dpar_\mu g^{\kappa\lambda}\dpar_\lambda A_\kappa+\tfrac12 g^{\mu \nu}\dpar_\mu A^\lambda\dpar_\lambda A_\nu\Big)~.
\end{aligned}
\end{equation}
Since
\begin{equation}
\begin{aligned}
 -\tfrac14 F^2&=-\tfrac14 g^{\mu\kappa}g^{\nu\lambda}F_{\mu\nu}F_{\kappa\lambda}\\
 &=-\tfrac12 \dpar_\mu A^{\kappa}\dpar^\nu A_\kappa+\tfrac12 A_\lambda \dpar_\mu g^{\kappa\lambda}\dpar^\mu A_\kappa+\tfrac12 \dpar^\mu A^\lambda\dpar_\lambda A_\mu-\tfrac12 A_\kappa\dpar^\mu g^{\lambda\kappa}\dpar_\lambda A_\mu~,
\end{aligned}
\end{equation}
we obtain the desired expression:
\begin{equation}
 \begin{aligned}
  S_A=\int_M \de^{-2d}\,\dd^D x~\Big(&-\tfrac14 F^2\Big)~.
\end{aligned}
\end{equation}

\subsection{Example: Generalized geometry}

It is well known that the appropriate geometric structure underlying generalized geometry on a manifold $M$ is the symplectic Lie 2-algebroid $\CM:=\CV_2(M)$ of $\NN$-degree 2. Recall that this is the Vinogradov algebroid that underlies the exact Courant algebroid $TM\oplus T^*M$.

The antisymmetrized derived bracket is simply the Courant bracket, see equations \eqref{eq:ass_Courant_algebra}. The resulting Lie 2-algebra is the semidirect product of the Lie algebra of diffeomorphism, regarded trivially as a Lie 2-algebra, and the abelian Lie 2-algebra of gauge transformations of the curving 2-form of a trivial abelian gerbe. The transformation law for tensors reads as 
\begin{equation}
\begin{aligned}
 (\alpha+X)\acton t^\mu{}_\nu~\zeta_\mu\otimes \xi^\nu:=~&\{\delta \alpha+\delta X,t^\mu{}_\nu~\zeta_\mu\otimes \xi^\nu\}\\
 =~&\big(X^\mu(\dpar_\mu t^\nu{}_\kappa)-(\dpar_\mu X^\nu) t^\mu{}_\kappa+(\dpar_\kappa X^\mu) t^\nu{}_\mu\big)~\zeta_\nu\otimes \xi^\kappa+\\
 &~+(\dpar_\mu \alpha_\nu -\dpar_\nu \alpha_\mu)t^\nu{}_\lambda~\xi^\mu\otimes \xi^\lambda~.
\end{aligned}
\end{equation}
The extended metric here and how it arises from a diagonal metric via $B$-field transformation was already discussed in section \ref{ssec:generalised_geometry}. We have
\begin{equation}\label{eq:2metric_H_O(D,D)}
 \CH_{MN}=\left(\begin{array}{cc} g_{\mu\nu}-B_{\mu\kappa}g^{\kappa\lambda}B_{\lambda\nu}  &  B_{\mu\kappa}g^{\kappa\nu}\\ -g^{\mu\kappa}B_{\kappa\nu}  &g^{\mu\nu}  \end{array}\right)
\end{equation}
with inverse
\begin{equation}
 \CH^{MN}=\left(\begin{array}{cc} g^{\mu\nu} &  -g^{\mu\kappa}B_{\kappa\nu}  \\ B_{\mu\kappa}g^{\kappa\nu}& g_{\mu\nu}-B_{\mu\kappa}g^{\kappa\lambda}B_{\lambda\nu}  \end{array}\right)~.
\end{equation}
The action of an extended vector $X\acton \CH$ yields the action of diffeomorphisms on the metric as well as the combined action of diffeomorphisms and gauge transformations on the 2-form potential $B$. As in the two cases before, the extended Lie derivative transforms the top form $\dd x^1\wedge \ldots \wedge \dd x^D$ as usual, and therefore the extended dilaton transforms as an ordinary tensor density of weight 1.

The invariant action for generalized geometry is well known to be that of double field theory:
\begin{equation}
\begin{aligned}
 S=\int_M \de^{-2d}\,\dd x^1\wedge \ldots \wedge \dd x^D \Big(& \tfrac18 \CH^{MN}\dpar_M\CH^{KL}\dpar_N \CH_{KL}-\tfrac12 \CH^{MN}\dpar_M\CH^{KL}\dpar_L\CH_{KN}\\
 &~~-2 \dpar_Md \dpar_N \CH^{MN}+ 4\CH^{MN}\dpar_M d\dpar_N d\Big)~.
\end{aligned}
\end{equation}
Again, we bring this action into familiar form, focusing on the Lagrangian:
\begin{equation}
\begin{aligned}
 &\tfrac18 \dpar_\mu \CH^{KL}\dpar^\mu \CH_{KL}-\tfrac12 \CH^{\mu N}\dpar_\mu\CH^{K\lambda}\dpar_\lambda\CH_{KN}-2 \dpar_\mu d \dpar_\nu g^{\mu\nu}+ 4\dpar_\mu d\dpar^\mu d\\
 &=\tfrac18 \dpar_\mu g^{\kappa\lambda}\dpar^\mu (g_{\kappa\lambda}-B_{\kappa\rho}g^{\rho\sigma}B_{\sigma\lambda})-\tfrac18 \dpar_\mu (g^{\kappa\rho}B_{\rho \lambda})\dpar^\mu (B_{\kappa\sigma}g^{\sigma\lambda})-\tfrac18 \dpar_\mu (B_{\kappa\rho}g^{\rho\lambda})\dpar^\mu (g^{\kappa\sigma}B_{\sigma\lambda})\\
 &~~+\tfrac18 \dpar_\mu (g_{\kappa\lambda}-B_{\kappa\rho}g^{\rho\sigma}B_{\sigma\lambda}) \dpar^\mu g^{\kappa\lambda}-\tfrac12 g^{\mu \nu}\dpar_\mu g^{\kappa\lambda}\dpar_\lambda(g_{\kappa\nu}-B_{\kappa\sigma}g^{\sigma\rho}B_{\sigma\nu})\\
 &~~+\tfrac12 g^{\mu \nu}\dpar_\mu(B_{\kappa\sigma}g^{\sigma\lambda})\dpar_\lambda(g^{\kappa\rho}B_{\rho\nu})-\tfrac12 B_{\mu\nu}\dpar^\mu g^{\kappa\lambda}\dpar_\lambda(B_{\kappa\rho}g^{\rho\nu})+\tfrac12 B_{\mu\nu}\dpar^\mu(B_{\kappa\sigma}g^{\sigma\lambda})\dpar_\lambda g^{\kappa\nu}~.
\end{aligned}
\end{equation}
The terms independent of $B$ combine to the form of the Einstein-Hilbert Lagrangian plus the dilaton contributions which we encountered before,
\begin{equation}
 \tfrac14 \dpar^\mu g^{\kappa\lambda}\dpar_\mu g_{\kappa\lambda}-\tfrac12 \dpar_\mu g_{\kappa\lambda}\dpar^\lambda g^{\kappa\mu}~.
\end{equation}
The terms dependent on $B$ reduce, as expected, to $H^2$ for $H=\dd B$ and we have
\begin{equation}
 S=\int_M \dd^D x \sqrt{|g|}\de^{-2\phi}(R+4\dpar_\mu\phi\dpar^\mu \phi-\tfrac{1}{12}H^2)~,
\end{equation}
see e.g.\ \cite{Hohm:2010jy} for the details.

\subsection{Example: Higher generalized geometry}\label{ssec:3-form-GG}

So far, we have discussed general relativity with 1- and 2-form gauge potentials. This sequence is readily continued to 3-form gauge potentials. The latter case is also physically interesting, as it corresponds to exceptional field theory after imposing a section condition. Therefore, and for completeness sake, let us also list the ingredients here.

From our above discussion, we are clearly led to consider the symplectic Lie 3-algebroid $\CV_3(M)=T^*[3]T[1]M$ of $\NN$-degree 3. We use the usual coordinates $(x^\mu,\xi^\mu,\zeta_\mu,p_\mu)$ of degrees 0,1,2 and 3, respectively. Extended vector fields have degree 2 and are of the form
\begin{equation}
X+\alpha=X^\mu\zeta_\mu+\tfrac12 \alpha_{\mu\nu}\xi^\mu\xi^\nu~. 
\end{equation}
They thus contain a vector field $X^\mu\der{x^\mu}$ as well as a 2-form $\tfrac12 \alpha_{\mu\nu}\dd x^\mu\wedge \dd x^\nu$. This is the right data for diffeomorphisms and gauge transformations of a 3-form potential $C$. Since $Q^2=0$, we get a canonical $L_\infty$-structure on $\CV_3(M)$ with 2-bracket
\begin{equation}
 \mu_2(X+\alpha,Y+\beta)=[X,Y]+\CL_X\beta-\CL_Y\alpha-\tfrac12\dd\big(\iota_X\beta-\iota_Y\alpha)~,
\end{equation}
which has the same form as the Courant bracket \eqref{eq:ass_Courant_algebra}. Note, however, that $\alpha$ and $\beta$ are 2-forms here. The tensor transformation law is again obtained from 
\begin{equation}
 (X+\alpha)\acton t:=\{\delta X+\delta \alpha,t\}~,
\end{equation}
where $t$ carries indices $M=({}^\mu,{}_{\nu\kappa})$. One can, however, extend this action to a subset of more general functions on $\CV_3(M)$.

As for all previous cases, the extended metric is obtained by a gauge field transformation acting on the diagonal metric. The relevant indices here are $M=(\mu,[\nu\kappa])$ such that $\CCT \CM$ is of dimension $D+\tfrac12 D(D-1)$. Explicitly, we have
\begin{equation}
 \CH=\left(\begin{array}{cc}
                 \unit_{TM} & a C\\0 & \unit_{\wedge^2 T^*M}
                \end{array}\right)\left(\begin{array}{cc}
                 g & 0\\0 &  g^{-1}g^{-1}
                \end{array}\right)\left(\begin{array}{cc}
                 \unit_{TM} & aC\\0 & \unit_{\wedge^2 T^*M}
                \end{array}\right)^T~,
\end{equation}
where we inserted a constant $a\in\FR$ allowing for a field rescaling of $C$, and therefore we have the component expressions
\begin{equation}
 \CH_{MN}=\CH_{(\mu,[\rho\sigma]),(\nu,[\kappa\lambda])}=\left(\begin{array}{cc}
                 g_{\mu\nu}+a^2C_{\mu\alpha\beta}g^{\alpha\gamma}g^{\beta\delta}C_{\gamma\delta\nu} &  a C_{\mu\alpha\beta}g^{\alpha\kappa}g^{\beta\lambda}\\
                 a g^{\rho\alpha}g^{\sigma\beta}C_{\alpha\beta \nu} &  g^{\rho\kappa} g^{\sigma\lambda}
                \end{array}\right)
\end{equation}
with inverse
\begin{equation}
 \CH^{MN}=\CH^{(\mu,[\rho\sigma]),(\nu,[\kappa\lambda])}=\left(\begin{array}{cc}
                 g^{\mu\nu} & -a g^{\mu\alpha}C_{\alpha\kappa\lambda} \\
                 -a C_{\rho\sigma\alpha}g^{\alpha\nu} & g_{\rho \kappa}g_{\sigma\lambda}+a^2C_{\rho\sigma\alpha}g^{\alpha\beta}C_{\beta\kappa\lambda}
                \end{array}\right)~.
\end{equation}

The relevant action here requires additional terms to those of the previous examples:
\begin{equation}
\begin{aligned}
 S=\int_M&\de^{-2d}\,\dd x^1\wedge \ldots \wedge \dd x^D \Big( \frac{1}{4(1+2D)}\CH^{MN}\dpar_M\CH^{KL}\dpar_N \CH_{KL}-\tfrac12\CH^{MN}\dpar_M\CH^{KL}\dpar_L\CH_{KN}\\
 &\hspace{3cm}+\frac{1}{2(2D-1)^2(1+2D)}\CH^{MN}(\CH^{KL}\dpar_M\CH_{KL})(\CH^{RS}\dpar_N \CH_{RS})\\
 &\hspace{3cm}-2 \dpar_Md \dpar_N \CH^{MN}+ 4\CH^{MN}\dpar_M d\dpar_N d\Big)~.
\end{aligned}
\end{equation}
Again, we verify its gauge invariance by reducing it to a familiar form. Since the computations get rather involved, we used a computer algebra program to complete this task. The result is
\begin{equation}
   S=\int \dd^D x~\de^{-2d}\Big(\tfrac14\dpar^\mu g^{\kappa\lambda}\dpar_\mu g_{\kappa\lambda}-\tfrac12\dpar_\mu g_{\kappa\lambda}\dpar^\lambda g^{\kappa\mu}+2\dpar^\mu d \dpar^\nu g_{\mu\nu}+4\dpar_\mu d\dpar^\mu d+\tfrac{1}{48}G^2\Big)
\end{equation}
with the choice $a=3\sqrt{2}$, which is the usual Einstein-Hilbert action of the first example coupled to the abelian 3-form potential $C$ with curvature $G=\dd C$.

\section{Double field theory as extended Riemannian geometry}\label{sec:dft}

So far, we only employed ordinary symplectic Lie $n$-algebroids in our examples, and there was no need to lift $Q^2=0$, generalizing to pre-N$Q$-manifolds which are not N$Q$-manifolds. As we shall see now, the description of double field theory requires this lift. 

\subsection{Restriction of doubled generalized geometry}\label{ssec:E2M_from_restriction}

For a description of double field theory, we have to double spacetime from $M$ to $M\times \hat M$. Since we currently have to restrict ourselves to a local description, the bundle $T^* M$ will have to do. On this bundle, we want to describe the gauge transformations of a trivial abelian gerbe, just as in generalized geometry. This suggests to use $\CV_2(T^*M)=T^*[2]T[1](T^*M)$ with coordinates $(x^M,p_M,\xi^M,\zeta_M)=(x^\mu,x_\mu,\ldots,\zeta_\mu,\zeta^\mu)$ as our starting point. This Lie 2-algebroid comes with the usual symplectic structure and Hamiltonian for a homological vector field $Q$,
\begin{equation}
 \omega=\dd x^M\wedge \dd p_M+\dd \xi^M\wedge \dd \zeta_M~,~~~\CQ=\sqrt{2}\xi^M p_M~,
\end{equation}
where we rescaled $\CQ$ for convenience.

The underlying symmetry group is now $\sGL(2D)$, which we have to restrict to the symmetry group $\sO(D,D)$ of double field theory. This is done by introducing the metric $\eta_{MN}$ of split signature $(1,\ldots,1,-1,\ldots,-1)$, cf.\ \eqref{eq:dft_eta}. Using this metric, we can restrict the tangent space coordinates to the diagonal, using coordinates
\begin{equation}
 \theta^M=\frac{1}{\sqrt{2}}(\xi^M+\eta^{MN}\zeta_N)\eand \beta^M=\frac{1}{\sqrt{2}}(\xi^M-\eta^{MN}\zeta_N)~.
\end{equation}
Eliminating the dependence on $\beta^M$ of all our objects\footnote{Interestingly, the equation $\beta^M=0$ can be regarded as part of the equations of motion of a topological string as discussed in \cite[Section 7]{Bouwknegt:2011vn}.} leaves us with the pre-N$Q$-manifold $\CE_2(M):=(T^*[2]\oplus T[1])T^*M$ with coordinates $(x^M,\theta^M,p_M)$ of degrees $0$, $1$ and $2$, respectively. The reduction also leads to the following symplectic structure on $\CE_2(M)$:
\begin{equation}
 \omega=\dd x^M\wedge \dd p_M +\tfrac12\eta_{MN}\dd \theta^M\wedge \dd \theta^N~,~~~\CQ=\theta^M p_M~.
\end{equation}
The Poisson bracket reads as
\begin{equation}
 \{f,g\}:=f\overleftarrow{\der{p_M}}\overrightarrow{\der{x^M}} g-f\overleftarrow{\der{x^M}}\overrightarrow{\der{p_M}} g-f\overleftarrow{\der{\theta^M}}\eta^{MN}\overrightarrow{\der{\theta^N}} g
\end{equation}\label{eq:E2_Poisson}
for $f,g\in\CC^\infty(\CE_2(M))$ and the Hamiltonian vector field of $\CQ$ is
\begin{equation}
 Q=\theta^M\der{x^M}+p_M\eta^{MN}\der{\theta^N}~.
\end{equation}
Note that $Q^2=p_M\eta^{MN}\der{x^N}\neq 0$, and $\CE_2(M)$ is not an N$Q$-manifold. It is, however, a symplectic pre-N$Q$-manifold of $\NN$-degree $2$.

We thus need to choose an $L_\infty$-structure in the form of a subset $\sL(\CE_2(M))\subset \CC^\infty(M)$ satisfying the conditions of theorem \ref{thm:Lie_2_subset}. A short computation yields the following.
\begin{proposition}
 For elements $f,g$ and $X=X_M\theta^M,Y=Y_M\theta^M,Z=Z_M\theta^M$ of $\sL(\CE_2(M))$ of $\sL$-degrees 1 and 0, respectively, we have the following relations.
 \begin{equation}\label{eq:restrictions_LM}
  \begin{gathered}
    \{Q^2 f,g\}+\{Q^2 g,f\}=2\left(\der{x^M}f\right)\eta^{MN}\left(\der{x^N}g\right)=0~,\\
    \{Q^2 X,f\}+\{Q^2 f,X\}=2\left(\der{x^M}X\right)\eta^{MN}\left(\der{x^N}f\right)=0~,\\
    \{\{Q^2X,Y\},Z\}_{[X,Y,Z]}=2\theta^L\big((\dpar^MX_L)(\dpar_MY^K)Z_K\big)_{[X,Y,Z]}=0~.
  \end{gathered}
 \end{equation}
\end{proposition}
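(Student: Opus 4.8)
The plan is to evaluate the three conditions of Theorem~\ref{thm:Lie_2_subset} directly on $\CE_2(M)$, using the explicit obstruction $Q^2=p_M\eta^{MN}\dpar_{x^N}$ recorded above together with the Poisson bracket on $\CE_2(M)$. The elements of $\sL$-degree $1$ are functions $f,g$ of $\NN$-degree $0$, i.e.\ functions of $x^M$ alone, while those of $\sL$-degree $0$ are the $\NN$-degree $1$ functions $X=X_M\theta^M$. For the first two conditions each application of $Q^2$ yields an expression linear in $p_M$, bracketed against a $p$-independent argument, so only the $p$-$x$ term of the Poisson bracket can contribute; this is exactly what turns those conditions into terms quadratic in $x$-derivatives contracted with $\eta^{MN}$, i.e.\ strong section conditions. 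The third condition additionally feels the $\theta\theta$-term and demands more care.

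For the first identity I would compute $Q^2 f=p_M\eta^{MN}(\dpar_N f)$ and bracket it against $g$; only the $p$-$x$ term survives, giving $\{Q^2f,g\}=(\dpar_Mf)\eta^{MN}(\dpar_Ng)$, and adding the symmetric term $\{Q^2g,f\}$ together with $\eta^{MN}=\eta^{NM}$ produces the factor $2$. The second identity has the same structure: with $Q^2X=p_P\eta^{PQ}(\dpar_QX_M)\theta^M$ one finds $\{Q^2X,f\}+\{Q^2f,X\}=2(\dpar_MX)\eta^{MN}(\dpar_Nf)$, the free $\theta$ simply riding along.

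The third identity is the real work. Here $Q^2X$ is linear in both $p$ and $\theta$, so the inner bracket splits into a $\theta$-bilinear piece and a piece still linear in $p$,
\begin{equation}
 \{Q^2X,Y\}=(\dpar^KX_M)(\dpar_KY_N)\,\theta^M\theta^N-p_P(\dpar^PX_K)Y^K~,
\end{equation}
the $p$-linear piece being present precisely because $Q^2\neq0$ on a genuine pre-N$Q$-manifold. Bracketing the $\theta$-bilinear piece with $Z=Z_R\theta^R$ uses the $\theta\theta$-term of the Poisson bracket, whereas bracketing the $p$-linear piece with $Z$ uses the $p$-$x$ term; tracking the Koszul signs for the odd $\theta$ and the even degree-$2$ coordinate $p$, every resulting monomial turns out to be a permutation of the single master term $\theta^L(\dpar^MX_L)(\dpar_MY^K)Z_K$. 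Applying the weighted antisymmetrization $(\,\cdot\,)_{[X,Y,Z]}$ of Theorem~\ref{thm:algebra_structure_of_brackets} and using that it sends each permuted copy to a signed multiple of the same antisymmetrized object, these contributions collapse into a multiple of $\theta^L\big((\dpar^MX_L)(\dpar_MY^K)Z_K\big)_{[X,Y,Z]}$, reproducing the displayed expression.

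The main obstacle is exactly this last step: the graded sign bookkeeping in the double bracket and the assignment of each monomial to the correct slot (even versus odd permutation) in the weighted antisymmetrization, so that the three families combine with the stated coefficient rather than cancelling. Once the three middle expressions are in hand, the final equalities to zero are not separate computations but the defining property of $\sL(\CE_2(M))$: by Theorem~\ref{thm:Lie_2_subset} the subset is chosen precisely so that these (slightly weakened) strong section conditions hold.
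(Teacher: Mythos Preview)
Your proposal is correct and follows precisely the approach the paper intends: the paper's own argument is the single sentence ``A short computation yields the following,'' and what you outline is exactly that computation --- apply the explicit obstruction $Q^2=p_M\eta^{MN}\dpar_{x^N}$ to $f$ and $X$, evaluate the Poisson brackets term by term using the formula for $\{-,-\}$ on $\CE_2(M)$, and observe that the vanishing is the defining requirement on $\sL(\CE_2(M))$ from Theorem~\ref{thm:Lie_2_subset}. Your identification of the intermediate expression $\{Q^2X,Y\}=(\dpar^KX_M)(\dpar_KY_N)\theta^M\theta^N-p_P(\dpar^PX_K)Y^K$ and the subsequent sign bookkeeping under antisymmetrization is the only point requiring genuine care, and you flag it appropriately.
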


Note that the strong section condition of double field theory is sufficient, but not necessary to satisfy \eqref{eq:restrictions_LM}. Thus, choosing a specific $L_\infty$-algebra structure $\sL(\CE_2(M))$ amounts to a choice of solution to \eqref{eq:restrictions_LM} and therefore corresponds essentially to ``solving the strong section condition'' in DFT parlance. Contrary to the strong section condition postulated in DFT, however, the left-hand expressions in equations \eqref{eq:restrictions_LM} are completely independent of a choice of coordinates.

Note also that the Poisson bracket \eqref{eq:E2_Poisson} yields the natural pairing of two extended vector fields $X,Y$:
\begin{equation}
 \{X,Y\}=X_M \eta^{MN} Y_N~.
\end{equation}
We use again the shorthand notations $\der{x^M}=\dpar_M$,  $X_M:=\eta_{MN} X^N$ and $x_N=\eta_{MN} x^N$ introduced in section \ref{sec:review}.

As we shall see in section~\ref{ssec:examples_L_infty}, an obvious $L_\infty$-structure on $\CE_2(M)$ for $M=\FR^n$ recovers the corresponding Vinogradov Lie 2-algebroid encoding generalized geometry. This implies that the C-bracket and all induced structures of double field theory reduce to the Courant bracket and the corresponding structures in generalized geometry.

\subsection{Symmetries}\label{ssec:Symmetries_of_DFT}

The extended vector fields $\CCX(\CE_2(M))=\sL_0(\CE_2(M))$ encode infinitesimal diffeomorphism and gauge transformations, while elements of $\sL_1(\CE_2(M))\subset \CC^\infty(M)$ describe morphisms between these, as discussed in section \ref{ssec:Vinogradov}. Let us now give the explicit Lie 2-algebra structure.
\begin{proposition}
 An $L_\infty$-algebra structure on the pre-N$Q$-manifold $\CE_2(M)$ is a Lie 2-algebra of symmetries given by a graded vector space 
 \begin{equation}
  \sL(\CE_2(M))=\sL_0(\CE_2(M))\oplus\sL_1(\CE_2(M))\subset \CC^\infty_1(\CE_2(M))\oplus \CC^\infty_0(\CE_2(M))
 \end{equation}
 together with higher products
\begin{equation}
 \begin{aligned}
  \mu_1(f)&=Qf=\theta^M\dpar_M f~,\\
  \mu_2(X,Y)&=-\mu_2(Y,X)=\tfrac12\big(\{QX,Y\}-\{QY,X\}\big)\\
  &=X^M\dpar_M Y-Y^M\dpar_M X+\tfrac12\theta^M(Y^K\dpar_MX_K-X^K\dpar_MY_K)~,\\
  \mu_2(X,f)&=-\mu_2(f,X)=\tfrac12\{QX,f\}=\tfrac12 X^M\dpar_M f~,\\
  \mu_3(X,Y,Z)&=\tfrac13\big(\{\mu_2(X,Y),Z\}+\{\mu_2(Y,Z),X\}+\{\mu_2(Z,X),Y\}\big)\\
  &=X^MZ^N\dpar_MY_N-Y^MZ^N\dpar_MX_N+Y^MX^N\dpar_MZ_N\\
  &\hspace{1cm}-Z^MX^N\dpar_MY_N+Z^MY^N\dpar_MX_N-X^MY^N\dpar_MZ_N~,
 \end{aligned}
\end{equation}
for $f\in \sL_1(\CE_2(M))$ and $X,Y,Z\in \sL_0(\CE_2(M))$, where we wrote $X=X_M\theta^M$, etc.
\end{proposition}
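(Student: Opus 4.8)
The plan is to observe that the abstract $L_\infty$-algebra property has already been secured: by Theorem \ref{thm:Lie_2_subset} together with the preceding proposition, any subset $\sL(\CE_2(M))$ on which \eqref{eq:restrictions_LM} holds carries a semistrict Lie $2$-algebra structure via the derived brackets \eqref{eq:L_infty_brackets}. What remains is therefore purely computational: to evaluate those derived brackets in the coordinates $(x^M,\theta^M,p_M)$. First I would specialize the general formulas to $n=2$, where $\sL_0(\CE_2(M))=\CC^\infty_1(\CE_2(M))$ consists of functions $X=X_M\theta^M$ linear in $\theta$, while $\sL_1(\CE_2(M))=\CC^\infty_0(\CE_2(M))$ consists of functions $f(x)$; correspondingly $\delta$ acts as $Q$ on $\sL_0$ and vanishes on $\sL_1$.

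The products $\mu_1$ and $\mu_2(X,f)$ are then immediate. For $\mu_1(f)=Qf$ only the $\theta^M\der{x^M}$ piece of $Q$ survives on a function of $x$, giving $\theta^M\dpar_M f$. For $\mu_2(X,f)=\tfrac12\{\delta X,f\}=\tfrac12\{QX,f\}$, only the momentum part $X^Mp_M$ of $QX$ pairs nontrivially with $f(x)$ through the $\der{p_M}\der{x^M}$ term of the bracket, giving $\tfrac12 X^M\dpar_M f$. The substantive step is $\mu_2(X,Y)$: I would first compute $QX=(\dpar_N X_M)\theta^N\theta^M+X^Mp_M$ with $X^M=\eta^{MN}X_N$, then insert this into $\{QX,Y\}$ using the explicit Poisson bracket \eqref{eq:E2_Poisson}, so that the $\theta\theta$ term contracts with $Y$ through the $\der{\theta}\,\eta\,\der{\theta}$ channel and the $p$ term through the $\der{p}\der{x}$ channel. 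Antisymmetrizing in $X\leftrightarrow Y$ then produces the stated $X^M\dpar_M Y-Y^M\dpar_M X+\tfrac12\theta^M(Y^K\dpar_M X_K-X^K\dpar_M Y_K)$, whose $\theta$-coefficient is precisely the C-bracket \eqref{eq:Courant_bracket}.

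For $\mu_3$ I would start from the totally antisymmetrized triple derived bracket in \eqref{eq:L_infty_brackets}, namely $\mu_3(X,Y,Z)=-\tfrac{1}{12}(\{\{\delta X,Y\},Z\}\pm\cdots)$, and use the graded Jacobi identity for $\{-,-\}$ together with the vanishing of $\delta$ on $\sL_1$ to bring it into the manifestly cyclic form $\tfrac13(\{\mu_2(X,Y),Z\}+\mathrm{cycl.})$, exactly as in the Vinogradov computation \eqref{eq:ass_Courant_algebra}; evaluating the remaining Poisson bracket of a $\theta$-linear function against $Z=Z_N\theta^N$ through the $\der{\theta}\,\eta\,\der{\theta}$ term then yields the displayed degree-$0$ expression. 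The main obstacle throughout is the careful bookkeeping of Koszul signs in the graded Poisson bracket — in particular tracking the antisymmetry of the $\theta^N\theta^M$ factors and the relative signs between the $\der{x}\der{p}$ and $\der{\theta}\,\eta\,\der{\theta}$ channels — since it is precisely these signs that generate the anomalous term $\tfrac12\theta^M(Y^K\dpar_M X_K-X^K\dpar_M Y_K)$ which distinguishes the C-bracket from the naive Lie bracket. As a consistency check I would confirm that upon the polarization recovering generalized geometry these formulas reduce to the untwisted Vinogradov brackets \eqref{eq:ass_Courant_algebra}.
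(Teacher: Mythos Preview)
Your proposal is correct and follows exactly the route implicit in the paper: the paper does not give a separate proof of this proposition, treating it as the immediate specialization of Theorem~\ref{thm:Lie_2_subset} and the derived-bracket formulas \eqref{eq:L_infty_brackets} to $\CE_2(M)$, with the constraints \eqref{eq:restrictions_LM} already verified in the preceding proposition. Your outline of the coordinate computation---in particular the identification of $QX=(\dpar_N X_M)\theta^N\theta^M+X^Mp_M$ and the reduction of $\mu_3$ to the cyclic form $\tfrac13(\{\mu_2(X,Y),Z\}+\text{cycl.})$ via the graded Jacobi identity---is precisely what one has to do, and your caution about Koszul signs in the $\theta$-sector is well placed.
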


This Lie 2-algebra now has a clear relationship to the symmetry structures in double field theory, and we readily conclude the following statements.
\begin{theorem}
 Given an $L_\infty$-structure on $\CE_2(M)$, the product $\mu_2(X,Y)$ is simply the C-bracket of double field theory. The D-bracket is given by
 \begin{equation}
  \nu_2(X,Y)=\mu_2(X,Y)+\tfrac12Q\{X,Y\}=\{QX,Y\}~.
 \end{equation}
 Moreover, the action $X\acton t$ of vector fields $X\in\sL_0(\CE_2(M))$ on tensors $t\in \sT(\CE_2(M))$ is indeed the action of the extended Lie derivative on tensors as given in \eqref{eq:action_extended_Lie}. In particular, for a rank 2-tensor $t_{MN}\theta^M\otimes \theta^N$, we have 
\begin{equation}
\begin{aligned}
 &X\acton t_{MN} \theta^M\otimes \theta^N:=\{\delta X,t_{MN} \theta^M\otimes \theta^N\}\\
 &\hspace{1cm}=X^N\dpar_N t_{KL} \theta^K\otimes \theta^L+(\dpar_M X^N-\dpar^N X_M)t_{NK} \theta^M\otimes \theta^K+\\
 &\hspace{5cm}+(\dpar_M X^N-\dpar^N X_M)t_{KN} \theta^K\otimes \theta^M~.
\end{aligned}
\end{equation}
\end{theorem}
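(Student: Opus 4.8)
The plan is to establish the theorem as three separate identifications, each obtained by direct computation from the explicit homological field $Q=\theta^M\dpar_M+p_M\eta^{MN}\der{\theta^N}$, the three channels ($p$-$x$, $x$-$p$ and $\theta$-$\theta$) of the Poisson bracket on $\CE_2(M)$, and the higher products already recorded in the preceding proposition. No new structural input is needed: for the D-bracket I will invoke Lemma \ref{lem:Courant-Dorfman}, and for the action on tensors I will use that $\delta$ restricts to $Q$ on $\sL_0$ together with the derivation property \eqref{eq:ext_Poisson} of the extended Poisson bracket. The content is thus a matching of abstractly defined derived-bracket quantities against their known double field theory counterparts.

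For the C-bracket, I would start from the expression $\mu_2(X,Y)=X^M\dpar_M Y-Y^M\dpar_M X+\tfrac12\theta^M(Y^K\dpar_M X_K-X^K\dpar_M Y_K)$ of the preceding proposition. Writing $Y=Y_N\theta^N$ so that $X^M\dpar_M Y=X^M(\dpar_M Y_N)\theta^N$, I collect the coefficient of $\theta^N$ to get $(\mu_2(X,Y))_N=X^M\dpar_M Y_N-Y^M\dpar_M X_N+\tfrac12(Y^K\dpar_N X_K-X^K\dpar_N Y_K)$. Raising the free index with $\eta$ and rewriting $\tfrac12(Y^K\dpar^N X_K-X^K\dpar^N Y_K)=-\tfrac12(X^K\dpar^N Y_K-Y^K\dpar^N X_K)$ matches this term by term with the C-bracket \eqref{eq:Courant_bracket}; this step is a pure relabelling and carries no subtlety.

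For the D-bracket, the first equality $\nu_2(X,Y)=\mu_2(X,Y)+\tfrac12 Q\{X,Y\}$ is exactly Lemma \ref{lem:Courant-Dorfman} and the second, $\nu_2(X,Y)=\{QX,Y\}$, is the definition \eqref{eq:def:nu2}. To exhibit this as the Dorfman bracket of double field theory I would compute $QX=p_M X^M+\theta^M\theta^K\dpar_M X_K$ and then evaluate $\{QX,Y\}$: the $p$-$x$ channel produces the transport term $X^N\dpar_N Y$, while the $\theta$-$\theta$ channel (carrying the factor $\eta^{MN}$) rotates the index of $Y$, yielding the non-antisymmetrised combination $(\nu_2(X,Y))_N=X^M\dpar_M Y_N-Y^M\dpar_M X_N+Y^K\dpar_N X_K$. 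A useful internal consistency check is that this same expression is reproduced by adding $\tfrac12 Q\{X,Y\}=\tfrac12\theta^N\dpar_N(X^MY_M)$ to the $\mu_2$ coefficient above, which confirms the two displayed forms agree.

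For the extended Lie derivative, I use $\delta X=QX$ and the fact that, since $|QX|=n=2$, the Koszul prefactor in \eqref{eq:ext_Poisson} is trivial, so $\{QX,-\}$ acts as an ordinary derivation across every $\otimes$. Applying it to $t_{MN}\theta^M\otimes\theta^N$, the $p_M X^M$ part of $QX$ differentiates the coefficient through the $p$-$x$ channel and gives $X^N\dpar_N t_{KL}\,\theta^K\otimes\theta^L$, while the $\theta\theta$ part acts on each tensor leg through the $\theta$-$\theta$ channel, the key computation being $\{\theta^P\theta^K\dpar_P X_K,\theta^M\}$, which collapses to a single one-form in $\theta$ whose coefficient is the characteristic combination $\dpar_\bullet X^M-\dpar^M X_\bullet$. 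Distributing this over the two legs and relabelling dummy indices reproduces the two rotation terms with $(\dpar_M X^N-\dpar^N X_M)$, and hence the stated rank-2 formula and its agreement with \eqref{eq:action_extended_Lie}. The main obstacle throughout is the graded sign bookkeeping in the odd $\theta$-$\theta$ channel — tracking right-versus-left derivatives of the $\theta^P\theta^K$ factors and their interaction with the tensor-algebra Leibniz rule — since this is precisely what fixes the relative sign between $\dpar_M X^N$ and $\dpar^N X_M$ and the placement of the raised index on each leg; once the conventions of \eqref{eq:ext_Poisson} and the $\CE_2(M)$ bracket are applied consistently, everything else is routine.
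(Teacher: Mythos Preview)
Your proposal is correct and matches the paper's approach: the theorem is stated there as something one ``readily concludes'' from the explicit formulas of the preceding proposition, i.e.\ by precisely the direct verification you outline --- matching $\mu_2$ against \eqref{eq:Courant_bracket}, invoking Lemma~\ref{lem:Courant-Dorfman} for $\nu_2$, and computing $\{QX,-\}$ via the derivation property \eqref{eq:ext_Poisson}. The only part requiring care is, as you note, the graded sign bookkeeping in the $\theta$--$\theta$ channel, and your identification of this as the sole substantive obstacle is accurate.
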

\noindent Note that since we are working with a pre-N$Q$-manifold, the restrictions of theorem \ref{thm:restrictions_T} apply to extended tensors.

The transformation property of the dilaton $\de^{-2d}$ is fixed by the invariance of the action:
\begin{equation}
 S=\int \dd x^1\wedge \ldots \wedge \dd x^D\wedge \dd x_1\wedge \ldots \wedge \dd x_D~\de^{-2d}~\CR~,
\end{equation}
where $\CR$ is an appropriate Ricci scalar. Note that contrary to ordinary differential geometry, the naive top form is invariant:
\begin{equation}
 X\acton \dd x^1\wedge \ldots \wedge \dd x^D\wedge \dd x_1\wedge \ldots \wedge \dd x_D= X\acton \xi^1\ldots \xi^D\xi_1\ldots \xi_{D}=0~.
\end{equation}
Since $X\acton(\de^{-2d}\CR)=(X\acton \de^{-2d})\CR+\de^{-2d}(X\acton \CR)$, we have to demand that
\begin{equation}
 X\acton \de^{-2d}=\dpar_\mu(X^\mu\de^{-2d})
\end{equation}
in order to obtain a total derivative as a transformation of the action.

\subsection{Examples of \texorpdfstring{$L_\infty$}{L-infinity}-structures}\label{ssec:examples_L_infty}

Recall that an $L_\infty$-structure on $\CE_2(M)$ is a subset $\sL(\CE_2(M))\subset \CC^\infty(\CE_2(M))$ satisfying equations \eqref{eq:restrictions_LM}. Moreover, an $L_\infty$-structure is essentially a weaker replacement of a solution to the strong section condition in double field theory.

The most obvious $L_\infty$-structure is obtained after splitting coordinates $x^M=(x^\mu,x_\mu)$, $\theta^M=(\theta^\mu,\theta_\mu)$ and $p_M=(p_\mu,p^\mu)$ and restricting to functions independent of $x_\mu$. We can regard this as restricting ourselves to the subspace of $\CE_2(M)$ given by $x_\mu=0$. Restricted to this subspace, the original symplectic structure $\omega$ is singular, unless we restrict further to $p^\mu=0$. The resulting subspace is simply the symplectic N$Q$-manifold $\CV_2(M)$ underlying generalized geometry. In this way, double field theory reduces to generalized geometry.

Obviously, we can apply $\sO(D,D)$-rotations to this $L_\infty$-structure to obtain new variants. In particular, the ``dual'' restriction to functions independent of $x^\mu$ and $p_\mu$ works similarly well, and produces an isomorphic solution. Also mixed versions exist: For example, putting
\begin{equation}
 \der{x_\mu}+\pi^{\mu\nu}(x)\der{x^\nu}=0
\end{equation}
for any antisymmetric tensor field $\pi\in\Gamma(\wedge^2 TM)$ provides a solution to \eqref{eq:restrictions_LM}, as one easily verifies. All of these $L_\infty$-structures are also solutions to the strong section condition.

Clearly, it would be very interesting to study the existence of further, more general $L_\infty$-structures, in particular of those which do not satisfy the strong section condition.

\subsection{Twisting extended symmetries}\label{ssec:twisted_extended_symmetries}

In section \ref{ssec:Ex_Courant_and_Gerbes} we explained how in generalized geometry, an exact Courant algebroid with \v Severa class $H$ describes the infinitesimal symmetries of an $\sU(1)$-bundle gerbe with 3-form curvature $H$. Because a solution to the section condition reduces double field theory to generalized geometry, it is clear that in the presence of non-trivial background fluxes, also the symmetries of double field theory require twisting. Moreover, such a twist should play an important role in developing a global description of double field theory, and we will return to this point in section \ref{ssec:global}.

The idea is to mimic the twist of a Courant algebroid and introduce a Hamiltonian
\begin{equation}\label{eq:E_2_twist}
 \CQ_{S,T}=\theta^Mp_M+S_{MN}p^M\theta^N+\tfrac{1}{3!}T_{MNK}\theta^M\theta^N\theta^K~,
\end{equation}
where $T:=\tfrac{1}{3!}T_{MNK}\theta^M\theta^N\theta^K\in \CC^\infty_3(\CE_2(M))$ is some extended 3-form on $\CE_2(M)$ and $S:=S_{MN}p^M\theta^N\in \CC^\infty_3(\CE_2(M))$ is another extended function of degree 3. This is in fact the most general deformation of $\CQ$, as only elements of $\NN$-degree 3 are admissible. The corresponding Hamiltonian vector field with respect to the Poisson bracket \eqref{eq:E2_Poisson} reads as
\begin{equation}
\begin{aligned}
 Q_{S,T}&=\theta^M\der{x^M}+p_M\der{\theta_M}-\frac{1}{3!}\der{x^M}T_{NKL}\theta^N\theta^K\theta^L\der{p_M}+\tfrac12 T_{MNK}\theta^N\theta^K\der{\theta_M}\\
    &~~~~+S_{MN}\theta^N\der{x_M}-\der{x^K}S_{MN}p^M\theta^N\der{p_K}+S_{MN}p^M\der{\theta_N}~,
\end{aligned}
\end{equation}
which is by construction a symplectomorphism on $\CE_2(M)$. We now have
\begin{equation}\label{eq:E_2_Qsq}
\begin{aligned}
\{\CQ_{S,T},\CQ_{S,T}\}&=\theta^M\theta^N\theta^K\theta^L\left(\frac{1}{3}\der{x^M}T_{NKL}+\frac13S_{PM}\der{x_P}T_{NKL}+\frac14T_{MNP}T^P{}_{KL}\right)\\
&~~~~+\theta^M\theta^Np^K\left(2\der{x^M}S_{KN}+S_{PM}\der{x_P}S_{KN}+T_{KMN}+T_{MNL}S_K{}^{L}\right)\\
&~~~~+p^Mp^N\left(\eta_{MN}+2S_{MN}+S_{MK}S_N{}^K\right)~,
\end{aligned}
\end{equation}
where $T^P{}_{KL}:=\eta^{PQ}T_{QKL}$ etc. Again, we should not require that $Q^2_{S,T}=0$, which is equivalent to $\{\CQ_{S,T},\CQ_{S,T}\}=0$, but merely demand that the conditions of theorem \ref{thm:Lie_2_subset} are satisfied. This leads to a twisted $L_\infty$-algebra structure, containing twisted C- and D-brackets. Altogether, we make the following definition.
\begin{definition}
 Given an $L_\infty$-algebra structure $\sL(\CE_2(M))$ on $\CE_2(M)$, we call $\CQ_{S,T}$ in \eqref{eq:E_2_twist} or $Q_{S,T}$ a \uline{twist of $\sL(\CE_2(M))$} if \eqref{eq:conditions_thm_Lie2_subset} is satisfied for elements of $\sL(\CE_2(M))$ if $Q$ is replaced by $Q_{S,T}$.
 
 Given a twist of $\sL(\CE_2(M))$, we define the \uline{twisted D- and C-brackets} by
\begin{equation}
\begin{aligned}
 \nu^{S,T}_2(X,Y)&:=\{Q_{S,T}X,Y\}~,\\
 \mu_2^{S,T}(X,Y)&:=\tfrac12\big(\{Q_{S,T}X,Y\}-\{Q_{S,T}Y,X\}\big)~.
\end{aligned}
\end{equation}
\end{definition}

A detailed study of such twists is beyond the scope of this paper and left to future work. Let us merely present a discussion analogue to that of section \ref{ssec:twisted_Vinogradov} and look at infinitesimal twists. 

We start by considering \eqref{eq:E_2_Qsq}. The term quadratic in $p^Mp^N$ implies that $S_{MN}$ is antisymmetric if it is infinitesimal. By construction, the coordinate transformations
\begin{equation}
 z\mapsto -\{z,\tfrac12 \tau^{MN}\theta_M\theta_N\}
\end{equation}
for $z=(x^M,\theta^M,p_M)$ are symplectomorphisms for infinitesimal $\tau^{MN}$. Explicitly, we have
\begin{equation}
 x^M\rightarrow x^M~,~~~\theta^M\rightarrow \theta^M-\tau^{MN}\theta_N~,~~~p_M=p_M-\tfrac12\dpar_M\tau^{KL}\theta_K\theta_L~,
\end{equation}
and we can use these transformations to put the contribution $S_{MN}p^M\theta^N$ to an infinitesimal twist to zero. Assuming that $\dpar_{[M}T_{NKL]}=0$, we are then left with the conditions 
 \begin{equation}\label{eq:twist_condition_1}
    \{\Xi f,g\}+\{\Xi g,f\}=0~,~~~\{\Xi X,f\}+\{\Xi f,X\}=0~,~~~\{\{\Xi X,Y\},Z\}_{[X,Y,Z]}=0
 \end{equation}
with
\begin{equation}
\begin{aligned}
 \Xi&=\{\theta^M\theta^Np^KT_{KMN},-\}\\
 &=\theta^M\theta^NT_{KMN}\der{x_N}-\theta^M\theta^Np^K\left(\der{x^L}T_{KMN}\right)\der{p_L}+2\theta^M T_{KMN}p^K\der{\theta_N}~.
\end{aligned}
\end{equation}
Equations \eqref{eq:twist_condition_1} then reduce to
\begin{equation}
\begin{aligned}
&\theta^MX^NT_{KMN}\der{x_K}f=0~,\\
&(Y^MX^NT_{KMN}\der{x_K}Z+Z^MX^NT_{KMN}\der{x_K}Y)_{[X,Y,Z]}=0~,
\end{aligned}
\end{equation}
and we arrive at the following theorem.
\begin{theorem}
 Consider an $L_\infty$-structure $\sL(\CE_2(M))$. Then $\CQ=\theta^Mp_M+\tfrac{1}{3!}T_{MNK}\theta^M\theta^N\theta^K$ with $T_{MNK}$ infinitesimal is a twist of $\sL(\CE_2(M))$ if
 \begin{equation}
 \dpar_{[M}T_{NKL]}=0\eand T_{MNK}\der{x_K} F=0
 \end{equation}
 for any $F\in \sL(\CE_2(M))$.
\end{theorem}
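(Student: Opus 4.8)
The plan is to verify directly that $\CQ$ meets the defining conditions of a twist, i.e.\ that \eqref{eq:conditions_thm_Lie2_subset} holds with $Q$ replaced by $Q_{S,T}$ in the case $S=0$. The bulk of this verification is already in place in the preceding discussion: starting from the master equation \eqref{eq:E_2_Qsq}, using that $T$ is infinitesimal (so the quadratic piece $T_{MNP}T^P{}_{KL}$ is dropped) and closed (so the antisymmetrized $\dpar T$ piece vanishes), one finds $Q_{S,T}^2 = Q^2 + \tfrac12\Xi$ with $\Xi=\{\theta^M\theta^N p^K T_{KMN},-\}$. Since $\sL(\CE_2(M))$ is by assumption an $L_\infty$-structure for the untwisted $Q$, the $Q^2$-contributions to \eqref{eq:conditions_thm_Lie2_subset} already vanish by \eqref{eq:restrictions_LM}, so the twist conditions collapse to the three $\Xi$-conditions \eqref{eq:twist_condition_1}, which in turn reduce to the two displayed equations immediately preceding the statement. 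It therefore only remains to see that those two equations are forced by the hypothesis.

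First I would note that both equations are assembled entirely out of the single contraction $T_{KMN}\der{x_K}F$, with $F$ running over the elements $f,Y,Z\in\sL(\CE_2(M))$. The key---and essentially only---observation is that $T$ is a $3$-form and hence totally antisymmetric, so cyclic permutation of its indices is even, $T_{KMN}=T_{MNK}$. Consequently $T_{KMN}\der{x_K}F = T_{MNK}\der{x_K}F$, where (recalling the conventions $x_K=\eta_{KL}x^L$, so $\der{x_K}=\eta^{KL}\dpar_L$) the right-hand side is precisely the contraction appearing in the hypothesis $T_{MNK}\der{x_K}F=0$, now applied to $F\in\sL(\CE_2(M))$.

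Feeding this in finishes the argument. The first condition $\theta^M X^N T_{KMN}\der{x_K}f$ vanishes because its sole $f$-dependent factor is $T_{KMN}\der{x_K}f=0$; in the second condition each of the two summands carries a factor $T_{KMN}\der{x_K}Z$ or $T_{KMN}\der{x_K}Y$, each of which vanishes already before the weighted antisymmetrization $(\cdots)_{[X,Y,Z]}$ is applied, so the full expression is zero. Hence all hypotheses of theorem \ref{thm:Lie_2_subset} hold for $Q_{S,T}$, and $\CQ$ is a twist. The only point demanding care is bookkeeping: the hypothesis must be imposed for \emph{all} $F\in\sL(\CE_2(M))$, which is exactly what the reduced conditions require, since $\der{x_K}$ there acts separately on the degree-$1$ element $f$ and on the degree-$0$ elements $Y,Z$. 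I anticipate no genuine obstacle at this stage; all the computational weight sits in the already-performed reduction from \eqref{eq:E_2_Qsq} to \eqref{eq:twist_condition_1}, and the hypothesis enters only through the one-line antisymmetry step above.
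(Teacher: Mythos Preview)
Your proposal is correct and follows the paper's own approach essentially verbatim: the paper's ``proof'' is the discussion immediately preceding the theorem, which reduces the twist conditions via \eqref{eq:E_2_Qsq} to \eqref{eq:twist_condition_1} and then to the two displayed equations, and your final antisymmetry observation $T_{KMN}=T_{MNK}$ is precisely the one-line step needed to see that the hypothesis kills those. One point worth flagging: you correctly invoke closedness $\dpar_{[M}T_{NKL]}=0$ to discard the $\theta^4$-term in \eqref{eq:E_2_Qsq}; the paper uses this assumption in the derivation (``Assuming that $\dpar_{[M}T_{NKL]}=0$\ldots'') but does not restate it in the theorem, so your making it explicit is appropriate---without it the $\theta^4$-term would contribute to the third condition of \eqref{eq:conditions_thm_Lie2_subset} in a way the hypothesis does not control.
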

As a corollary, we directly obtain the usual twists of Courant algebroids in generalized geometry.
\begin{corollary}
 For the $L_\infty$-structure $\sL(\CE_2(M))$ given by 
  \begin{equation}
    \sL(\CE_2(M))=\left\{F\in \CC^\infty(\CE_2(M))~|~\der{x_\mu}F=0\right\}~,
 \end{equation}
  closed 3-forms are infinitesimal twists.
\end{corollary}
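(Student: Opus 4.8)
The plan is to obtain the corollary as an immediate application of the preceding theorem, so the entire argument reduces to verifying, for an ordinary closed $3$-form on $M$, the two ingredients on which that theorem rests: the doubled closedness $\dpar_{[M}T_{NKL]}=0$ used in its derivation, and the twist condition $T_{MNK}\der{x_K}F=0$ for every $F\in\sL(\CE_2(M))$. First I would pin down what a ``closed $3$-form'' is in this doubled picture. Using the split $x^M=(x^\mu,x_\mu)$ and $\theta^M=(\theta^\mu,\theta_\mu)$ from section \ref{ssec:E2M_from_restriction}, a $3$-form pulled back from $M$ is $T=\tfrac{1}{3!}T_{\mu\nu\kappa}(x)\,\theta^\mu\theta^\nu\theta^\kappa$; that is, the components $T_{MNK}$ are nonzero only when all three indices lie in the physical block, the coefficients depend on $x^\mu$ alone, and ordinary closedness reads $\dpar_{[\lambda}T_{\mu\nu\kappa]}=0$.

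I would then observe that the full doubled closedness $\dpar_{[M}T_{NKL]}=0$ holds automatically for such $T$. Whenever the antisymmetrised index $M$ sits in the dual block, the associated derivative is $\der{x_\mu}$, which annihilates $T$ because its components are independent of the dual coordinates $x_\mu$; the remaining, purely physical, component of the condition is precisely $\dpar_{[\lambda}T_{\mu\nu\kappa]}=0$. Hence the hypothesis $\dpar_{[M}T_{NKL]}=0$ behind the theorem is met, and (the $S$-contribution being already removed by the $\tau^{MN}$-symplectomorphism discussed before the theorem) only the twist condition remains to be checked.

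The key step is the support mismatch that forces $T_{MNK}\der{x_K}F$ to vanish. Reading $\der{x_K}=\dpar^K=\eta^{KL}\dpar_L$ and recalling that $\eta$ interchanges the physical block with the dual block, the section defining $\sL(\CE_2(M))$, namely $\der{x_\mu}F=0$, leaves only the physical derivative $\dpar_\mu F$ nonvanishing; upon raising, this surviving derivative lands in the \emph{dual} slot, so $\dpar^K F$ is supported only on dual indices $K$. On the other hand, $T_{MNK}$ is nonzero only when its last index $K$ is physical. The contraction $T_{MNK}\dpar^K F$ therefore pairs a quantity supported on physical $K$ against one supported on dual $K$, and vanishes identically for every $F\in\sL(\CE_2(M))$. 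By the theorem, such a closed $3$-form is an infinitesimal twist, which is the assertion.

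I do not anticipate a genuine obstacle: the corollary is a strict specialisation of the theorem, and its whole content is the bookkeeping of the $\eta$-raising that exhibits the orthogonality between the physical support of $T_{MNK}$ and the dual support of $\dpar^K F$ on this section. The only place that demands care is keeping the index conventions straight --- in particular confirming that the $\der{x_K}$ appearing in the theorem is the raised derivative $\dpar^K$, and that $\der{x_\mu}F=0$ indeed kills $\dpar^\kappa F$ (and not $\dpar_\kappa F$) --- after which the conclusion is immediate.
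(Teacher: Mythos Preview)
Your proposal is correct and is precisely the intended argument: the paper states the corollary as an immediate consequence of the preceding theorem with no separate proof, and the content you supply---that a closed $3$-form on $M$ has only physical components and physical $x^\mu$-dependence, so that the doubled closedness $\dpar_{[M}T_{NKL]}=0$ reduces to ordinary closedness while $T_{MNK}\,\dpar^K F$ vanishes by the index mismatch under $\eta$-raising---is exactly the verification this specialisation requires. Your care about the index conventions (identifying $\der{x_K}=\dpar^K$ and noting that the section condition kills $\dpar^\mu F$ rather than $\dpar_\mu F$) is warranted and correctly resolved.
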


\section{Outlook}

In this last section, we present some partial but interesting results on extended torsion and Riemann tensors as well as some comments on the global picture.

\subsection{Extended torsion and Riemann tensors}\label{ssec:Riemann_tensor}

The extension of torsion and Riemann curvature tensors on a general pre-N$Q$-manifold is an unsolved mathematical problem. It is not even clear, whether it is a good problem to pose. In the following, we present extensions for the case of the Vinogradov algebroids $\CV_1(M)$, $\CV_2(M)$ and the pre-N$Q$-manifold $\CE_2(M)$, which are motivated by standard Riemannian geometry and DFT, respectively. 

As a first step, we have to introduce a notion of covariant derivative on a symplectic pre-N$Q$-manifold $\CM$. Let $n$ be the degree of $\CM$. Just as all our symmetries were described in terms of Hamiltonians acting via the Poisson bracket, we ask that a covariant derivative in the direction $X$ is also a Hamiltonian function.
\begin{definition}
  \label{def:excovder}
 An \uline{extended covariant derivative} $\nabla$ on a pre-N$Q$-manifold $\CM$ is a linear map from $\CCX(\CM)$ to $\CC^\infty(\CM)$ such that the image $\nabla_X$ for $X\in \CCX(\CM)$ gives rise to a map
 \begin{equation}
  \{\nabla_X,-\}:\CCX(\CM)\rightarrow \CCX(\CM)~,
 \end{equation}
 which readily generalizes to extended tensors and satisfies
  \begin{equation}
  \{\nabla_{fX},Y\}=f\{\nabla_X,Y\}\eand\{\nabla_X,fY\}=\{Q X,f\}Y+f\{\nabla_X,Y \}
  \end{equation}
  for all $f\in \CCC^\infty(\CM)$ and extended tensors $Y$. 
\end{definition}
We now restrict to the three cases $\CV_1(M)$, $\CV_2(M)$ and $\CE_2(M)$. To simplify out notation, we use $\hat\CCX(\CM)$ to denote extended vectors and covectors. For the Vinogradov algebroids, $\hat \CCX(\CM)$ are simply the functions linear in the coordinates $\zeta_\mu$ or $\xi^\mu$, cf.\ section \ref{ssec:Vinogradov}, and for $\CE_2(M)$, $\hat \CCX(\CM)$ are the functions linear in the coordinates $\theta^M$, cf.\ section \ref{ssec:E2M_from_restriction}. That is, $\hat \CCX(\CM)=\CCX(\CM)$ for $\CV_2(M)$ and $\CE_2(M)$, while $\CCX(\CM)\subsetneq \hat \CCX(\CM)$ for $\CV_1(M)$. We also use repeatedly the algebra of functions $\CC^\infty(M)=\CC^\infty(\CM_0)$.

We follow the historical development of generalized geometry and introduce the torsion tensor first. 
\begin{definition}\label{def:ext_torsion}
 Let $\CM$ be a pre-N$Q$-manifold. Given an extended connection $\nabla$, we define the \uline{extended torsion tensor} $\CT : \otimes^3 \hat \CCX(\CM) \rightarrow \CC^\infty(M)$ for $X,Y,Z \in \hat \CCX(\CM)$ by  
 \begin{equation}
   \label{extorsion}
  \begin{aligned}
    \CT(X,Y,Z):=\, 3&\Bigl((-1)^{n|X|}\,\Bigl\{X,\{\nabla_{\pi(Y)},Z\}\Bigr\}\Bigr)_{[X,Y,Z]}\\
    &+\frac{(-1)^{n(|Y|+1)}}{2}\left(\{X,\{QZ,Y\}\}-\{Z,\{QX,Y\}\}\right)~,  \end{aligned}
  \end{equation}
  where $|X|,|Y|$ denote the respective $\NN$-degrees and $\pi$ is the obvious projection $\hat \CCX(\CM)\rightarrow \CCX(\CM)$.
\end{definition}
\noindent For ordinary differential geometry, which is captured by $\CM=\CV_1(M)$, $\CT(-,-,-)$ reduces to the ordinary torsion function $T(X,Y,Z) = \langle X,\nabla_Y Z -\nabla_Z Y - [Y,Z]\rangle$, where $X$ is a one-form, $Y,Z$ are vector fields and $\langle X,Y\rangle := \iota_Y X$. It vanishes for all other combinations of forms and vectors.  In the cases of generalized geometry and double field theory, this extension of the torsion tensor boils down to the Gualtieri torsion \cite{Gualtieri:2007bq} and \cite{Hohm:2012mf}, as we shall show later. In particular, it is a tensor and therefore $\CC^\infty(M)$-linear in all its entries for the cases we consider.

In a similar way, we are able to provide a definition of an extended curvature operator which reduces to the curvature tensors of ordinary Riemannian geometry, generalized geometry and double field theory in the respective cases of $\CV_1(M)$, $\CV_2(M)$ and $\CE_2(M)$. Again it uses the Poisson brackets for $\hat\CCX(\CM)$.
\begin{definition}\label{def:ext_Riemann}
  Let $\CM$ be a pre-N$Q$-manifold. Given an extended connection $\nabla$, the \uline{extended curvature operator} $\CR : \otimes^4 \hat\CCX(\CM) \rightarrow \CC^\infty(M)$ for $X,Y,Z,W \in \hat\CCX(\CM)$ is defined by
  \begin{equation}
    \label{exCurvature}
    \begin{aligned}
      \CR(X,Y,Z,W):=\; &\frac{1}{2}\Bigl(\Bigl\{\bigl\{\{\nabla_X,\nabla_Y\} -\nabla_{\mu_2(X,Y)},Z\bigr\},W\Bigr\} -(-1)^n( Z\leftrightarrow W)   \\
& + \Bigl\{\bigl\{\{\nabla_Z,\nabla_W\}-\nabla_{\{\nabla_Z,W\} - \{\nabla_W,Z\}},X\bigr\},Y\Bigr\} - (-1)^n(X\leftrightarrow Y)\Bigr)\;.
\end{aligned}
  \end{equation}
\end{definition}
In the following, we will discuss these two tensors for the three symplectic pre-N$Q$-manifolds of interest to us. In the case of $\CE_2(M)$, we will make the remarkable observation that tensoriality of the curvature operator is established using the constraints \eqref{eq:restrictions_LM}, underlining the fact that these are the appropriate strong constraints for functions and vector fields.

Let us start with the case of Riemannian geometry, which is captured by the symplectic pre-N$Q$-manifold $\CM=\CV_1(M)$. We need to show that definitions \ref{def:excovder}, \ref{def:ext_torsion} and \ref{def:ext_Riemann} reduce correctly to the expected objects of Riemannian geometry. 

The following Hamiltonian for $X\in \CCX(\CM)$ has the properties of definition \ref{def:excovder} and reproduces the right expressions for the covariant derivative of forms and vector fields:
\begin{equation}
  \nabla_X =\,X^\mu p_\mu -X^\mu\Gamma^\rho{}_{\mu\nu}\zeta_\rho \xi^\nu\;.
\end{equation}
The coefficients $\Gamma^\rho{}_{\mu\nu}$ are indeed the usual Christoffel symbols.
\begin{proposition}
  For $\CM=\CV_1(M)$, let $X \in \CCX^*(\CM)$ and $Y,Z \in \CCX(\CM)$, then  torsion as defined in \eqref{extorsion} reduces to the torsion operator $T(X,Y,Z)=\langle X,\nabla_Y Z - \nabla_Z Y - [Y,Z]\rangle$, where the bracket is the Lie bracket of vector fields. More generally, this is true whenever we take one element of $\CCX^*(\CM)$ and the other two in $\CCX(\CM)$. In all other cases the extended torsion vanishes.
\end{proposition}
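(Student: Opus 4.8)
The plan is to specialize every object in \eqref{extorsion} to $\CM=\CV_1(M)$, where $n=1$, and to evaluate the nested Poisson brackets in the coordinates $(x^\mu,\xi^\mu,\zeta_\mu,p_\mu)$ using the explicit connection Hamiltonian $\nabla_X = X^\mu p_\mu - X^\mu\Gamma^\rho{}_{\mu\nu}\zeta_\rho\xi^\nu$. First I would record the elementary building-block brackets on $\hat\CCX(\CM)$. Because the Poisson bivector only pairs $\xi$ with $\zeta$ and $x$ with $p$, one has $\{X,Y\}=\langle X,Y\rangle$ whenever one argument is a vector ($\zeta$-linear) and the other a covector ($\xi$-linear), while $\{X,Y\}=0$ whenever both are of the same type (two vectors or two covectors). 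I would also compute $\{\nabla_{\pi(Y)},Z\}=-\nabla_Y Z$ for vectors (reproducing the covariant derivative, the overall sign being a harmless convention), the corresponding covector identity, the derived bracket $\{QZ,Y\}=[Y,Z]$, and $QX=\dd X$ with $\{QX,Y\}=-\iota_Y\dd X$ for a covector $X$.

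Next I would establish a selection rule forcing $\CT$ to vanish except for one covector and two vectors. Assign to each homogeneous function its $\NN$-degree, so vectors have degree $0$, covectors degree $1$, the Hamiltonian $\nabla_{\pi(\cdot)}$ degree $1$, the operator $Q$ raises degree by one, and every Poisson bracket lowers it by $n=1$. Demanding $\CT(X,Y,Z)\in\CC^\infty(M)$, i.e.\ total degree $0$, forces exactly one argument to be a covector: in the first term the slot acted on by $\nabla_{\pi(\cdot)}$ must carry a vector (otherwise $\pi$ annihilates it), and the remaining outer pairing is nonzero only for a vector paired with a covector; the second term reproduces the same count. This disposes of the vanishing claim, since for three vectors, three covectors, or two covectors and one vector, at some nesting the two bracketed arguments have equal type (hence the bracket is zero) or a covector occupies the connection slot.

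For the core case $X\in\CCX^*(\CM)$, $Y,Z\in\CCX(\CM)$ I would expand fully. The antisymmetrized first term contributes four nonzero permutations; keeping the Christoffel symbols explicit (rather than absorbing them via the Leibniz rule for the pairing), I expect these to reorganize into $X_\rho Y^\mu Z^\nu(\Gamma^\rho{}_{\mu\nu}-\Gamma^\rho{}_{\nu\mu})$ plus bare first-derivative terms in $X,Y,Z$, while the second term supplies $\langle X,[Y,Z]\rangle$ together with the antisymmetrized derivative of $X$ coming from $\iota_Z\iota_Y\dd X$. The statement then follows by verifying that every term containing a bare derivative of $X$, $Y$, or $Z$ cancels between the two terms, and that the surviving pieces assemble into $X_\rho Y^\mu Z^\nu T^\rho{}_{\mu\nu}$ with $T^\rho{}_{\mu\nu}=\Gamma^\rho{}_{\mu\nu}-\Gamma^\rho{}_{\nu\mu}$, which equals $\langle X,\nabla_Y Z-\nabla_Z Y-[Y,Z]\rangle$. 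For the covector in either of the other two slots I would invoke the antisymmetry of the antisymmetrized first term together with a short direct check that the explicit second term provides the matching Lie-bracket contribution, so that each placement again yields the torsion with the covector performing the contraction and the two vectors antisymmetrized.

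The main obstacle is the bookkeeping of Koszul signs from the left/right derivatives and the odd coordinates $\xi^\mu,p_\mu$, since these signs decide whether the non-tensorial first-derivative terms cancel or reinforce. The conceptual crux is precisely this cancellation: the covariant-derivative contribution (first term) and the Lie-derivative/exterior-derivative contribution (second term, entering through $\{QZ,Y\}$ and $QX=\dd X$) must combine so that only the antisymmetric part of the connection survives, converting a naively connection- and derivative-dependent expression into the genuine, $\CC^\infty(M)$-linear torsion tensor.
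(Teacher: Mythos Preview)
Your proposal is correct and follows essentially the same approach as the paper: a direct evaluation of the nested Poisson brackets in local coordinates, organized so that the derivative terms from the antisymmetrized $\nabla$-part cancel against the $\iota_Z\iota_Y\dd X$ contribution from the $Q$-part, leaving only the antisymmetric Christoffel combination. Your degree-counting selection rule for the vanishing cases is a clean repackaging of the paper's observation that Poisson brackets of two vectors or two covectors are zero; both arguments are equivalent since a bracket of negative total $\NN$-degree must vanish.
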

\begin{proof}
  The proof is done by simply writing out the Poisson brackets. Let $X\in\CCX^*(\CM)$ and $Y,Z \in \CCX(\CM)$. Then we get
  \begin{equation}
    \begin{aligned}
      3&\Bigl((-1)^{n|X|}\,\Bigl\{X,\{\nabla_Y,Z\}\Bigr\}\Bigr)_{[X,Y,Z]}\\=& \frac{1}{2}\Bigl(-\Bigl\lbrace X,\lbrace \nabla_Y Z\rbrace\Bigr\rbrace + \Bigr\lbrace X,\lbrace \nabla_Z,Y\rbrace\Bigr\rbrace +\Bigl\lbrace Y,\lbrace \nabla_Z,X\rbrace\Bigr\rbrace - \Bigl\lbrace Z,\lbrace \nabla_Y,X\rbrace\Bigr\rbrace \Bigr)\\
      =&\frac{1}{2}\Bigl(-Y^\mu Z^\nu(\partial_\mu X_\nu - \partial_\nu X_\mu) + X_\rho Y^\mu Z^\nu(\Gamma^\rho{}_{\mu \nu} - \Gamma^\rho{}_{\nu \mu}) +\langle X,\nabla_Y Z - \nabla_Z Y\rangle \Bigr)\;.
    \end{aligned}
  \end{equation}
  Similarly, expanding out the second part in the definition of extended torsion, we get
  \begin{equation}
      \frac{1}{2}\Bigl(\Bigl\lbrace X, \lbrace QZ,Y\rbrace \Bigr\rbrace  - \Bigl\lbrace Z,\lbrace QX,Y\rbrace \Bigr\rbrace\Bigr) 
      =\frac{1}{2}\Bigl(-\langle X,[Y,Z]\rangle + Y^\mu Z^\nu(\partial_\mu X_\nu - \partial_\nu X_\mu)\Bigr)\;.
  \end{equation}
  Because torsion components in Riemannian geometry are given by $\Gamma^\rho{}_{\mu \nu} - \Gamma^\rho{}_{\nu\mu}$, the first claim follows. The calculation is similar whenever there is one element in $\CCX^*(\CM)$. If all elements are in $\CCX(\CM)$, all terms are separately zero, as there are only Poisson brackets of vectors and vectors, which vanish. If there are two elements or all three elements in $\CCX^*(\CM)$, one is left with brackets of two forms, which also vanish.  
\end{proof}

We also have the expected statement for the extended curvature operator:
\begin{proposition}
  For $\CM=\CV_1(M)$, let $X,Y,Z \in \CCX(\CM)$ and $W \in \CCX^*(\CM)$. Then the extended curvature \eqref{exCurvature} reduces to the standard curvature:
  \begin{equation}
    \CR(X,Y,Z,W) = \langle W,\nabla_X\nabla_Y Z - \nabla_Y\nabla_X Z - \nabla_{\mu_2(X,Y)} Z \rangle\,.
  \end{equation}
  Furthermore, if $X,Y,Z,W \in \CCX(\CM)$ or if two, three or all of $X,Y,Z,W$ are in $\CCX^*(\CM)$, we have $\CR(X,Y,Z,W)=0$.
\end{proposition}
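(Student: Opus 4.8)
The plan is to verify both claims by direct computation in the local coordinates $(x^\mu,\xi^\mu,\zeta_\mu,p_\mu)$ of $\CV_1(M)$, using two structural facts. First, as already recorded for the connection Hamiltonian $\nabla_X = X^\mu p_\mu - X^\mu\Gamma^\rho{}_{\mu\nu}\zeta_\rho\xi^\nu$, the operator $\{\nabla_X,-\}$ reproduces the ordinary covariant derivative: $\{\nabla_X,Z\}=\nabla_X Z$ on vectors $Z=Z^\mu\zeta_\mu$ and $\{\nabla_X,W\}=\nabla_X W$ on covectors $W=W_\mu\xi^\mu$. Second, from the Riemannian example one has $\mu_2(X,Y)=[X,Y]$ on vectors and the pairing identity $\{V,\alpha\}=\langle\alpha,V\rangle$ for a vector $V$ and a covector $\alpha$. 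I would also use the graded antisymmetry $\{f,g\}=-(-1)^{(|f|+n)(|g|+n)}\{g,f\}$ and the graded Jacobi identity, both available from \eqref{eq:properties_poisson}.

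For the nonzero case $X,Y,Z\in\CCX(\CM)$, $W\in\CCX^*(\CM)$, put $A:=\{\nabla_X,\nabla_Y\}-\nabla_{\mu_2(X,Y)}$, which has $\NN$-degree $1$. Applying Jacobi and the dictionary gives $\{A,Z\}=(\nabla_X\nabla_Y-\nabla_Y\nabla_X-\nabla_{[X,Y]})Z=R(X,Y)Z$, again a vector, so the first summand of the first line is $\{\{A,Z\},W\}=\langle W,R(X,Y)Z\rangle$. The second summand is $\{\{A,W\},Z\}$; the same Jacobi computation shows $\{A,W\}=R(X,Y)W$ is the curvature acting on $T^*M$, and since $R(X,Y)$ is a derivation over the duality pairing one has $\langle R(X,Y)W,Z\rangle=-\langle W,R(X,Y)Z\rangle$. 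Tracking this sign against the graded antisymmetry and the prefactor $-(-1)^n=+1$ (here $n=1$) shows that the two summands reinforce, so the first line equals $\langle W,R(X,Y)Z\rangle$. The second line drops out for this assignment: because $\nabla$ is defined only on $\CCX(\CM)$, $\nabla_W=0$, hence $\{\nabla_Z,\nabla_W\}=\{\nabla_W,Z\}=0$, while $\{\nabla_Z,W\}$ is itself a covector and so $\nabla_{\{\nabla_Z,W\}}=0$; thus the inner bracket $A'$ of the second line vanishes. This reproduces $\CR(X,Y,Z,W)=\langle W,\nabla_X\nabla_Y Z-\nabla_Y\nabla_X Z-\nabla_{\mu_2(X,Y)}Z\rangle$.

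The vanishing statements follow from grading. Since the Poisson bracket lowers $\NN$-degree by $n=1$ and $A$ (resp.\ $A'$) has degree $1$, the nested bracket $\{\{A,Z\},W\}$ has $\NN$-degree $|Z|+|W|-1$, so landing in $\CC^\infty(M)$ forces exactly one of the last two arguments to be a covector. If all of $X,Y,Z,W$ lie in $\CCX(\CM)$, this degree is $-1$ and every term vanishes, and likewise for the second line. If two or more arguments lie in $\CCX^*(\CM)$, the surviving brackets are brackets of functions linear in $\xi$ only; these vanish because the sole nontrivial piece of the bracket pairs $\zeta$ with $\xi$ and annihilates any pair of pure forms. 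Running through the finitely many placements of the covectors, each term is killed by one of these two observations, giving $\CR=0$.

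The conceptual content is light; the main obstacle is the sign bookkeeping. One must juggle the $(-1)^n$ factors of Definition \ref{def:ext_Riemann}, the graded antisymmetry of $\{-,-\}$, and the sign in $\langle R(X,Y)W,Z\rangle=-\langle W,R(X,Y)Z\rangle$ at once, and confirm they conspire to give the coefficient $+1$ in front of $\langle W,R(X,Y)Z\rangle$, so that the $(Z\leftrightarrow W)$ terms add rather than cancel. A secondary subtlety is fixing the convention $\nabla_W=0$ for a covector $W$, so that the second line of \eqref{exCurvature} genuinely vanishes in the relevant degree assignment instead of producing spurious contributions.
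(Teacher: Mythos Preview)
Your proposal is correct and follows essentially the same route as the paper, which simply says the proof is done ``by checking the Poisson brackets'' and records the key observation that $\mu_2$ of a covector and a vector lies in $\CCX^*(\CM)$, whence $\nabla_{\mu_2(X,Y)}=0$. Your write-up is more structured: you use the graded Jacobi identity and the dictionary $\{\nabla_X,-\}=\nabla_X$ to identify $\{A,Z\}=R(X,Y)Z$ in the main case, and for the vanishing statements you replace brute-force case checking by the two observations that the nested bracket has $\NN$-degree $|Z|+|W|-1$ and that the Poisson bracket of two pure forms (functions linear only in $\xi$) vanishes. These are genuine streamlinings rather than a different argument; the paper's case-by-case check implicitly relies on the same two facts. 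Your sign analysis for the $(Z\leftrightarrow W)$ term is also correct: with $n=1$ the prefactor $-(-1)^n=+1$, the graded antisymmetry gives $\{R(X,Y)W,Z\}=-\langle R(X,Y)W,Z\rangle$, and the duality $\langle R(X,Y)W,Z\rangle=-\langle W,R(X,Y)Z\rangle$ then makes the two terms add.
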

\begin{proof}
  Similar to the previous proposition, the proof is done by checking the Poisson brackets. For the last statement, we recall that the binary product $\mu_2$ was defined as
  \begin{equation}
    \mu_2(X,Y):= \,\tfrac{1}{2}(\lbrace QX,Y\rbrace - \lbrace QY,X\rbrace)\,.
  \end{equation}
  If $X \in \CCX^*(\CM)$ and $Y\in \CCX(\CM)$, $\mu_2(X,Y)$ is in $\CCX^*(\CM)$, so in this case $\nabla_{\mu_2(X,Y)}=0$. The statement follows by using the latter and checking the different cases.
\end{proof}

So far, we have seen that for the right Vinogradov algebroid $\CM=\CV_1(M)$, we recover the defining objects of ordinary Riemannian geometry. In this sense, we \emph{extend} ordinary Riemannian geometry. In the following, we describe the remaining two cases, yielding essential aspects of torsion and curvature in generalized geometry and double field theory. 

Since generalized geometry is readily obtained by reduction of double field theory, we focus now on the case $\CM=\CE_2(M)$. Generalized vector fields are given by extended vector fields, which we denote by $\CCX(\CM) \ni X= X^M \theta_M$ and as noted above, we here have $\hat\CCX(\CM)=\CCX(\CM)$. The extended covariant derivative with the properties of definition \ref{def:excovder} now reads
\begin{equation}\label{Covder}
  \nabla_X =\, X^Mp_M - \tfrac{1}{2}\,X^M\Gamma_{MNK}\theta^N\theta^K\;.
\end{equation}
Already from the covariant constancy of the bilinear form $\eta$, we see the antisymmetry of the extended connection coefficients:
\begin{equation}
  0=\,\partial_M\eta_{NK} = \,\Gamma_{MNK}+\Gamma_{MKN}\,.
\end{equation}
As was shown in detail in \cite{Hohm:2012mf}, the appropriate notion of torsion is the Gualtieri torsion, which in generalized geometry reads
\begin{equation}
  \textrm{GT}(X,Y,Z)=\,\langle \nabla_X Y - \nabla_Y X - \nu_2(X,Y),Z\rangle + \langle Y,\nabla_Z X\rangle\,,
\end{equation}
where $\nu_2(-,-)$ denotes the Dorfman bracket of generalized geometry, and $X,Y,Z$ are generalized vectors. Furthermore, it was shown in \cite{Hohm:2012mf}, that a Riemann curvature operator can be defined, see also \cite{Jurco:2015xra} for related results. We will now reformulate these objects in our language and investigate their $\CC^\infty(M)$-linearity properties. We begin with the torsion tensors.
\begin{proposition}
  For extended vector fields $X,Y,Z \in \CCX(\CM)$, the extended torsion tensor equals the Gualtieri-torsion:
\begin{equation}
  \CT(X,Y,Z) =\,\textrm{GT}(X,Y,Z)\;.
\end{equation} 
\end{proposition}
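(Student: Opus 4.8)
The plan is to verify the identity by unfolding both sides into explicit component expressions in the coordinates $(x^M,\theta^M,p_M)$ on $\CE_2(M)$ and matching them term by term. First I would substitute the explicit extended covariant derivative \eqref{Covder}, namely $\nabla_X = X^Mp_M - \tfrac12 X^M\Gamma_{MNK}\theta^N\theta^K$, into the defining formula \eqref{extorsion} for $\CT(X,Y,Z)$. Since $n=2$ here, the sign prefactors $(-1)^{n|X|}$ and $(-1)^{n(|Y|+1)}$ are all $+1$, so the torsion reads
\begin{equation}
 \CT(X,Y,Z)=3\Bigl(\bigl\{X,\{\nabla_Y,Z\}\bigr\}\Bigr)_{[X,Y,Z]}
 +\tfrac12\bigl(\{X,\{QZ,Y\}\}-\{Z,\{QX,Y\}\}\bigr)~.
\end{equation}
The inner bracket $\{\nabla_Y,Z\}$ produces the covariant derivative of $Z$ along $Y$, and the outer bracket with $X$ contracts with $\eta$, realizing the pairing $\langle -,-\rangle=\{-,-\}$ between extended vectors noted in section~\ref{ssec:E2M_from_restriction}. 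I would then collect the three antisymmetrized terms and recognize $\langle\nabla_XY-\nabla_YX,Z\rangle$ together with the ``extra'' $\langle Y,\nabla_ZX\rangle$ piece characteristic of the Gualtieri torsion.

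Next I would expand the second line, which involves only $Q$ and not $\nabla$. Using $QX=\theta^M\dpar_M X$ and the Poisson bracket \eqref{eq:E2_Poisson}, the combination $\{X,\{QZ,Y\}\}-\{Z,\{QX,Y\}\}$ should reproduce exactly $-\langle\nu_2(X,Y),Z\rangle$, where $\nu_2$ is the D-bracket (Dorfman bracket) identified in the symmetries section as $\nu_2(X,Y)=\{QX,Y\}$. Here the key algebraic input is the relation $\{X,f\}=0$ from Lemma~\ref{lem:PB_X_f_trivial} and the fact that the $\Gamma$-dependent contributions from the connection, which are antisymmetric in their lower indices by $\Gamma_{MNK}+\Gamma_{MKN}=0$, organize precisely into the connection terms of $\langle\nabla_XY-\nabla_YX,Z\rangle$. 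Assembling the pieces, I expect to obtain
\begin{equation}
 \CT(X,Y,Z)=\langle\nabla_XY-\nabla_YX-\nu_2(X,Y),Z\rangle+\langle Y,\nabla_ZX\rangle
 =\textrm{GT}(X,Y,Z)~.
\end{equation}

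The main obstacle will be bookkeeping rather than conceptual: keeping the weighted antisymmetrization $(\,\cdot\,)_{[X,Y,Z]}$ straight while distributing the Poisson brackets, and confirming that the apparently non-tensorial derivative-of-$X$ contributions (the $\dpar_M X^N$ terms coming from $QX$ inside the brackets) cancel between the $\nabla$-part and the $Q$-part, leaving a genuinely $\CC^\infty(M)$-linear expression. I would verify this cancellation by isolating all terms proportional to $\dpar_M(\text{components of }X)$ that are \emph{not} contracted into a connection coefficient and checking they sum to zero; this is where the antisymmetry of $\Gamma$ and the symmetry of the $\eta$-contraction do the essential work. Once tensoriality is confirmed, the matching with the stated form of $\textrm{GT}(X,Y,Z)$ is immediate.
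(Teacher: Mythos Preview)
Your approach is essentially the paper's: expand both sides in the coordinates $(x^M,\theta^M,p_M)$, plug in the explicit $\nabla_X$ from \eqref{Covder}, and verify that the non-tensorial derivative terms cancel. The paper's proof is even terser than yours --- it simply says ``explicit calculation'' and records the surviving component expression
\[
\CT(X,Y,Z)=X^MY^NZ^K\bigl(\Gamma_{MNK}-\Gamma_{NMK}+\Gamma_{KMN}\bigr)~,
\]
which it then identifies with the Gualtieri torsion in its component form.

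One small caution on your intermediate organization: the $Q$-part $\tfrac12\bigl(\{X,\{QZ,Y\}\}-\{Z,\{QX,Y\}\}\bigr)$ is \emph{not} exactly $-\langle\nu_2(X,Y),Z\rangle$. The latter is $-\{\{QX,Y\},Z\}$, while your expression has an additional $\tfrac12\{X,\{QZ,Y\}\}$ and an overall factor of $\tfrac12$. The clean split into ``$\nabla$-part $=\langle\nabla_XY-\nabla_YX,Z\rangle+\langle Y,\nabla_ZX\rangle$'' and ``$Q$-part $=-\langle\nu_2(X,Y),Z\rangle$'' does not hold term-by-term; derivative pieces have to be shuffled between the two before the GT structure emerges. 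This is precisely the bookkeeping obstacle you flag, so the calculation will still go through, but you should not expect the abstract matching to happen before the component expansion. The paper sidesteps this entirely by working directly at the level of components.
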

The proof is done by explicit calculation. Non-tensorial derivative terms cancel out and the remaining terms are
\begin{equation}
\label{gualtieri2}
\CT(X,Y,Z) =\,X^M Y^N Z^K(\Gamma_{MNK}-\Gamma_{NMK}+\Gamma_{KMN})\;,
\end{equation}
coinciding with the original torsion found by Gualtieri and used in \cite{Hohm:2012mf} in the context of double field theory.

Now we turn to the extended curvature tensor in the case of $\CM=\CE_2(M)$. We discuss the result of \cite{Hohm:2012mf} in terms of the language set up in this work. As a simplification, we restrict ourselves to the case of vanishing extended torsion \eqref{gualtieri2}. First, we observe that the Poisson-bracket of two extended covariant derivatives contains the tensorial part of the standard Riemann tensor, but on the doubled space\footnote{Here we use the abbreviation $[X,Y]^K$ for the Lie-derivative part, i.e.\ $[X,Y]^K = \,X^M\partial_M Y^K - Y^M\partial_M X^K$.}:
\begin{equation}
\label{standardRiem}
\begin{aligned}
\bigl\{\{\nabla_X,&\nabla_Y\},Z\bigr\} = \Bigl([X,Y]^M\partial_M Z^R +[X,Y]^K\,\Gamma_{KL}{}^R\,Z^L\Bigr)\theta_R \\ 
&+X^MY^NZ^K\bigl(\partial_M\Gamma_{NK}{}^R - \partial_N \Gamma_{MK}{}^R + \Gamma_M{}^{RQ}\Gamma_{NQK} - \Gamma_N{}^{RQ}\Gamma_{MQK}\bigr)\theta_R\;.
\end{aligned}
\end{equation}
In standard Riemannian geometry, the Lie-derivative terms are canceled by the covariant derivative with respect to the Lie derivative of the corresponding vector fields. In double field theory, infinitesimal symmetries are determined by the C-bracket $\mu_2$, so we have to replace the Lie derivative by the latter. The resulting expression is not tensorial, and we have to add further terms to correct for this, as done in \eqref{exCurvature}. We arrive at the following result, written in terms of Poisson brackets:
\begin{proposition}
  For $\CM=\CE_2(M)$, the extended curvature operator $\CR$, given by \eqref{exCurvature} is tensorial up to terms that vanish after imposing the constraints \eqref{eq:restrictions_LM}.
  \end{proposition}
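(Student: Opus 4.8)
The plan is to prove tensoriality by checking $\CC^\infty(M)$-linearity of $\CR(X,Y,Z,W)$ in each argument, modulo terms proportional to the constraints \eqref{eq:restrictions_LM}. Since $n=2$ for $\CE_2(M)$, the signs $(-1)^n$ in \eqref{exCurvature} are trivial, so $\CR$ is genuinely antisymmetric under $Z\leftrightarrow W$ and under $X\leftrightarrow Y$; moreover, in the torsion-free case one has $\{\nabla_Z,W\}-\{\nabla_W,Z\}=\mu_2(Z,W)$, so the second line of \eqref{exCurvature} is obtained from the first by the exchange $(X,Y)\leftrightarrow(Z,W)$. It therefore suffices to analyze the first line, with $\CO_{XY}:=\{\nabla_X,\nabla_Y\}-\nabla_{\mu_2(X,Y)}$ inserted into $\{\{\CO_{XY},Z\},W\}$, and within it to verify linearity in one inner slot ($Z$, the other following by the $Z\leftrightarrow W$ antisymmetry) and in one connection slot ($X$, with $Y$ following by the $X\leftrightarrow Y$ antisymmetry).

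For the inner slot I would compute $\{\{\CO_{XY},fZ\},W\}$ for $f\in\CC^\infty(M)$. The graded Leibniz rule gives $\{\{\CO_{XY},fZ\},W\}=f\{\{\CO_{XY},Z\},W\}+\{\CO_{XY},f\}\{Z,W\}$, where Lemma \ref{lem:PB_X_f_trivial} has been used twice to drop the Poisson brackets of degree-$0$ functions with extended vectors. The only anomaly is thus the scalar $\{\CO_{XY},f\}$, which I evaluate by the graded Jacobi identity: $\{\{\nabla_X,\nabla_Y\},f\}=\{\nabla_X,\{\nabla_Y,f\}\}-\{\nabla_Y,\{\nabla_X,f\}\}$ collapses (since $\{\nabla_Y,f\}=Y^M\dpar_M f$) to $[X,Y]^M_{\mathrm{Lie}}\dpar_M f$, while $\{\nabla_{\mu_2(X,Y)},f\}=\mu_2(X,Y)^M\dpar_M f$. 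Their difference is exactly the $Q^2$-discrepancy $-\tfrac12\big(Y^K\dpar^M X_K-X^K\dpar^M Y_K\big)\dpar_M f$, which vanishes once the second line of \eqref{eq:restrictions_LM}, namely $(\dpar^M X_K)(\dpar_M f)=0$, is contracted with $Y^K$ (and likewise with $X,Y$ interchanged). Hence the inner-slot anomaly is constraint-trivial.

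For the connection slot I would use $\nabla_{fX}=f\nabla_X$ together with the failure of $\mu_2$ to be $\CC^\infty(M)$-linear, $\mu_2(fX,Y)=f\,\mu_2(X,Y)+(\text{terms linear in }\dpar f)$. Expanding $\{\nabla_{fX},\nabla_Y\}=f\{\nabla_X,\nabla_Y\}-\{\nabla_Y,f\}\nabla_X$ and $\nabla_{\mu_2(fX,Y)}$, the genuinely $f$-proportional part reproduces $f\,\CO_{XY}$, while the remainder assembles into combinations built from $\{QY,f\}=Y^M\dpar_M f$, $Qf=\theta^M\dpar_M f$ and $\{X,Y\}$. Inserting these into the nested brackets $\{\{-,Z\},W\}$ and collecting, I expect every surviving non-tensorial term to be expressible through the left-hand sides of \eqref{eq:restrictions_LM}, i.e.\ through the contractions $(\dpar^M\,\cdot\,)(\dpar_M\,\cdot\,)$ and the antisymmetrized $\{\{Q^2\,\cdot\,,\cdot\},\cdot\}_{[\cdots]}$, and hence to vanish on the chosen $L_\infty$-structure. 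The main obstacle is precisely this bookkeeping: unlike the torsion computation, where the non-tensorial pieces cancel identically, here the cancellation holds only modulo the constraints, so the terms must be grouped into the exact shapes appearing in \eqref{eq:restrictions_LM} rather than made to vanish one by one. Organizing the computation so that $Q^2=p_M\eta^{MN}\dpar_N$ always appears sandwiched inside Poisson brackets, as in the proofs of Lemma \ref{lem:nu_2:Leibniz} and Theorem \ref{thm:algebra_structure_of_brackets}, is the natural way to make the constraint structure manifest; a final computer-algebra verification of the collected expression, as used elsewhere in the paper, would then complete the argument.
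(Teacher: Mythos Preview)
Your reduction to the first line of \eqref{exCurvature} rests on the claim that, in the torsion-free case, $\{\nabla_Z,W\}-\{\nabla_W,Z\}=\mu_2(Z,W)$. This is false. In components one finds
\[
\big(\{\nabla_Z,W\}-\{\nabla_W,Z\}\big)_N-\mu_2(Z,W)_N
=Z^M W^K\,\Gamma_{NMK}-\tfrac12\big(W^K\dpar_N Z_K-Z^K\dpar_N W_K\big),
\]
after using $\Gamma_{M[NK]}=\Gamma_{MNK}$ and the torsion-free relation $\Gamma_{[MK]N}=-\tfrac12\Gamma_{NMK}$; neither piece cancels. Hence the second line of \eqref{exCurvature} is \emph{not} the first line with $(X,Y)\leftrightarrow(Z,W)$.

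For the inner slot this gap is harmless: the operator $\tilde\CO_{ZW}:=\{\nabla_Z,\nabla_W\}-\nabla_{\{\nabla_Z,W\}-\{\nabla_W,Z\}}$ satisfies $\tilde\CO_{fZ,W}=f\,\tilde\CO_{ZW}$ identically (the $-(W^M\dpar_Mf)\nabla_Z$ from $\{f\nabla_Z,\nabla_Y\}$ is cancelled by the same term in $\nabla_{\{\nabla_{fZ},W\}-\{\nabla_W,fZ\}}$), so the second line is genuinely $\CC^\infty(M)$-linear in $Z$ and $W$, and your computation of $\{\CO_{XY},f\}$ then gives the full $Z$-anomaly. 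But for the connection slot $X$ your strategy fails: the second line is \emph{not} $f$-linear in $X$, since
\[
\{\{\tilde\CO_{ZW},fX\},Y\}=f\{\{\tilde\CO_{ZW},X\},Y\}+\{\tilde\CO_{ZW},f\}\{X,Y\}
\eand
\{\tilde\CO_{ZW},f\}=-Z^M W^K\,\Gamma^L{}_{MK}\,\dpar_L f,
\]
which is not zero and not by itself a constraint term. You therefore cannot determine the $X$-anomaly from the first line alone; the second line must be carried along, and the cancellation pattern mixes the two.

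The paper bypasses this by a direct component computation: it expands \eqref{exCurvature}, uses the torsion-free condition to simplify the $\Gamma$-combinations, and obtains the closed form
\[
\CR(X,Y,Z,W)=X^MY^NZ^KW^R(R_{MNKR}+R_{KRMN}+\Gamma^Q{}_{MN}\Gamma_{QKR})
+\tfrac14(X_M\dpar^K Y^M-Y_M\dpar^K X^M)(W_N\dpar_K Z^N-Z_N\dpar_K W^N).
\]
From this expression the failure of linearity in any slot is manifestly a sum of terms $\dpar^K f\,\dpar_K(\cdots)$, which vanish by the first two lines of \eqref{eq:restrictions_LM}. Your Poisson-bracket approach could be made to work, but you must either treat the full four-slot expression (both lines), or first establish the pair symmetry $(X,Y)\leftrightarrow(Z,W)$ of $\CR$ by a separate argument.
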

\begin{proof}
First, note that for $n=2$ the sign factors in \eqref{exCurvature} drop out. Denoting the standard Riemannian curvature combination by $R_{MNKR}$, i.e.\ 
\begin{equation}
R_{MNKR} = \partial_M\Gamma_{NKR} - \partial_N \Gamma_{MKR} + \Gamma_{MR}{}^{Q}\Gamma_{NQK} - \Gamma_{NR}{}^{Q}\Gamma_{MQK}\;,
\end{equation}
and writing out the expression \eqref{exCurvature} in terms of components, we get
\begin{equation}
\begin{aligned}
{\cal R}(X,Y,&Z,W) =\; X^MY^NZ^KW^R(R_{MNKR}+R_{KRMN}) \\
&+\frac{1}{4}(X_M\partial^K Y^M - Y_M\partial^K X^M)(W_N\partial_K Z^N - Z_N\partial_K W^N) \\
&+\frac{1}{4}(X_M\partial^K Y^M - Y_M\partial^K X^M)\Gamma_{KPQ}(Z^PW^Q - W^P Z^Q) \\
&+\frac{1}{2}\Bigl( Z^N W^K(-\Gamma_{NK}{}^M + \Gamma_{KN}{}^M)(Y_P\partial_M X^P - X_P\partial_M Y^P)\Bigr) \\
&+\frac{1}{2}\Bigl(Z^N W^K(\Gamma_{NK}{}^M - \Gamma_{KN}{}^M)\Gamma_{MPQ}(-X^P Y^Q + Y^PX^Q)\Bigr)\;.
\end{aligned}
\end{equation}
We now use the vanishing of the Gualtieri torsion \eqref{gualtieri2} to simplify the combinations of connection coefficients, which yields
\begin{equation}
\begin{aligned}
{\cal R}(X,Y,Z,W)=\;&X^MY^NZ^KW^R(R_{MNKR}+R_{KRMN} + \Gamma^Q{}_{MN}\Gamma_{QKR})\\
&+\tfrac{1}{4}(X_M\partial^K Y^M - Y_M\partial^K X^M)(W_N\partial_K Z^N - Z_N\partial_K W^N)\;.
\end{aligned}
\end{equation}
From the last expression, we see, that all terms in the first line are manifestly tensorial, whereas the term in the second line is not. However, the terms destroying tensoriality vanish due to the constraints \eqref{eq:restrictions_LM}. For example, the failure of $\CC^\infty(M)$-linearity in $X$ reads as
\begin{equation}
{\cal R}(fX,Y,Z,W)-f{\cal R}(X,Y,Z,W)  = -\tfrac{1}{4}\,X_MY^M\,\partial^Kf(W_N\partial_K Z^N - Z_N\partial_K W^N)\;,
\end{equation}
and the terms of the form $\{\partial^Kf\, \partial_K Z,W\}$ and $\{\partial^Kf \,\partial_K W,Z\}$ vanish due to the strong constraint in the form of the first two lines of \eqref{eq:restrictions_LM}. 
\end{proof}
Finally we comment on the Ricci and scalar curvatures in the case of $\CM=\CE_2(M)$. By definition, the Ricci scalar is the trace of an endomorphism of the doubled tangent bundle, more precisely, we have 
\begin{equation}
\textrm{Ric}(X,Y)=\; \textrm{tr}(Z\mapsto {\cal R}(X,Z)Y)\;.
\end{equation}
In our case, this means that $\textrm{Ric}(X,Y) =\,{\cal R}(\theta^I,X,\theta_I,Y)$. Using the result \eqref{exCurvature}, we see that the derivative terms which are only tensorial up to the constraints \eqref{eq:restrictions_LM} cancel. Furthermore, denoting the standard combination for the Ricci tensor using \eqref{standardRiem} by $\textrm{Ric}_0$, we arrive at
\begin{equation}
\textrm{Ric}(X,Y)=\; \textrm{Ric}_0(X,Y) + \textrm{Ric}_0(Y,X) +X^M Y^N\,\Gamma^{QK}{}_M\Gamma_{QKN}\;.
\end{equation}
Finally, the Ricci-scalar is computed by contracting the Ricci-tensor. We note that this is done by the bilinear form $\eta$, in contrast to the generalized metric $\CH$ (cf.\ e.g.\ \cite{Hohm:2011si}). Thus we obtain the Ricci scalar
\begin{equation}
\label{scal}
{\cal R} = \eta^{MN}\,\textrm{Ric}(E_M,E_N) = 2R_0 + \Gamma^{MNK}\Gamma_{MNK}\;.
\end{equation}
Note that here and in the following, we raise and lower indices with the constant form $\eta$. Contractions with the generalized metric $\CH$ are written out explicitly to emphasize the appearance of the generalized metric as the dynamical field.

To write down actions, it is necessary to express the connection coefficients $\Gamma_{MNK}$ of \eqref{Covder} in terms of the the generalized metric $\CH$, which is the dynamical field of the theory. This is done in a similar way as in \cite{Hohm:2011si}. We review some details and state the result. In contrast to ordinary Riemannian geometry, where vanishing torsion and metricity uniquely determine the Levi-Civita connection via the Koszul formula, one has three different constraints on the connection coefficients in double field theory:
\begin{itemize}
\item $\textrm{GT}=0 \quad \leftrightarrow \quad \Gamma_{KMN}-\Gamma_{MKN} + \Gamma_{NKM} = 0\;,$
\item $\nabla \eta = 0\quad \leftrightarrow \quad \Gamma_{KMN}+\Gamma_{KNM}=0\;,$
\item $\nabla \CH = 0 \quad \leftrightarrow \quad \partial_K \CH_{MN} = \Gamma_{KMP} \CH^P_N + \Gamma_{KNP}\CH^P_M \;.$
\end{itemize}
In \cite{Hohm:2011si}, it was shown that these conditions do not determine the connection coefficients uniquely. However, different choices of connection coefficients give rise to the same DFT actions, so we will use the most common solution to write down actions in our language. Before stating the solution according to \cite{Hohm:2011si}, we remark that in case of a non-trivial dilaton there is a further constraint involving the trace of the connection coefficients:
\begin{itemize}
\item $\Gamma_{NMK}\eta^{NK} =\,-2\partial_M d\,$.
\end{itemize}
It was shown that the four constraints give a solution to the connection coefficients in terms of the generalized metric which is non-unique but different solutions lead to the same Ricci scalar:
\begin{proposition}
\cite{Hohm:2011si}.
In case of vanishing Gualtieri-torsion, $\CH$-metricity, covariant constancy of the form $\eta$ and the dilaton constraint, a possible solution for the connection coefficients is given by
\begin{equation}
\label{connectioncoef}
\begin{aligned}
  \Gamma_{MNK} =&\,\tfrac{1}{2}\CH_{KQ}\partial_M\CH^Q{}_N + \tfrac{1}{2}(\delta_{[N}{}^P\CH_{K]}{}^Q + \CH_{[N}{}^P\delta_{K]}{}^Q)\partial_P\CH_{QM} \\
  &+\frac{2}{D-1}(\eta_{M[N}\delta_{K]}{}^Q + \CH_{M[N}\CH_{K]}{}^Q)(\partial_Q d + \tfrac{1}{4}\CH^{PM}\partial_M\CH_{PQ})\;,
\end{aligned}
\end{equation}
where $D$ is the dimension of the manifold underlying the doubled manifold. 
\end{proposition}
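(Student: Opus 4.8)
The plan is to verify directly that the coefficients \eqref{connectioncoef} satisfy each of the four listed constraints; since the proposition asserts only that this is \emph{a} solution, no uniqueness statement is required and verification suffices. The one algebraic input used throughout is that the generalized metric is an $\eta$-involution, $\CH^{-1}=\eta^{-1}\CH\eta^{-1}$, equivalently $\CH_M{}^P\CH_P{}^N=\delta_M{}^N$ and $\CH_{MP}\eta^{PQ}\CH_{QN}=\eta_{MN}$, together with $\eta^{MN}\CH_{MN}=\tr(\eta^{-1}\CH)=0$, which holds because the two eigenbundles of $\eta^{-1}\CH$ have equal rank $D$. I would organize the computation through the projectors $P_\pm=\tfrac12(\unit\pm\eta^{-1}\CH)$ and split \eqref{connectioncoef} as $\Gamma_{MNK}=\Gamma^{(0)}_{MNK}+\Delta\Gamma_{MNK}$, where $\Gamma^{(0)}$ collects the first two terms and $\Delta\Gamma_{MNK}=\tfrac{2}{D-1}\big(\eta_{M[N}\delta_{K]}{}^Q+\CH_{M[N}\CH_{K]}{}^Q\big)v_Q$ is the dilaton term with $v_Q=\partial_Qd+\tfrac14\CH^{PL}\partial_L\CH_{PQ}$.

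First I would dispose of $\nabla\eta=0$, i.e.\ the antisymmetry of $\Gamma_{MNK}$ in its last two indices. In $\Gamma^{(0)}$ the first term $\tfrac12\CH_{KQ}\partial_M\CH^Q{}_N=\tfrac12\CH_K{}^P\partial_M\CH_{PN}$ is antisymmetric under $N\leftrightarrow K$: since $\CH_K{}^P\CH_{PN}=\eta_{KN}$ is constant, differentiating gives $\CH_K{}^P\partial_M\CH_{PN}=-(\partial_M\CH_K{}^P)\CH_{PN}=-\CH_N{}^P\partial_M\CH_{PK}$, where the last step uses the symmetry of $\CH_{MN}$. The second term of $\Gamma^{(0)}$ and all of $\Delta\Gamma$ carry the explicit antisymmetriser $[NK]$, so the property is manifest there.

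The core of the argument consists of $\CH$-metricity, $\partial_K\CH_{MN}=\Gamma_{KMP}\CH^P{}_N+\Gamma_{KNP}\CH^P{}_M$, and the vanishing of the Gualtieri torsion, $\Gamma_{KMN}-\Gamma_{MKN}+\Gamma_{NKM}=0$. For $\Gamma^{(0)}$ I would check both by direct substitution, repeatedly collapsing $\CH_M{}^P\CH_P{}^N=\delta_M{}^N$ and using the constancy of $\eta$: these two terms are exactly the combination that makes the derivative terms recombine into $\partial_K\CH_{MN}$ on the metricity side and cancel cyclically on the torsion side. For the correction $\Delta\Gamma$ I would argue structurally that the two tensors $\eta_{M[N}\delta_{K]}{}^Q$ and $\CH_{M[N}\CH_{K]}{}^Q$, contracted with an arbitrary covector $v_Q$, lie in the common kernel of the metricity and torsion maps: contracting $\Delta\Gamma_{KMP}\CH^P{}_N$ and symmetrising in $M\leftrightarrow N$ vanishes after applying $\CH_M{}^P\CH_P{}^N=\delta_M{}^N$, and the cyclic torsion combination of $\Delta\Gamma$ is identically zero. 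Hence $\Delta\Gamma$ may be added without disturbing metricity or torsion, which is precisely the non-uniqueness mentioned in the statement.

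Finally I would pin down the coefficient $\tfrac{2}{D-1}$ and the precise form of $v_Q$ from the dilaton trace $\Gamma_{NMK}\eta^{NK}=-2\partial_Md$. Contracting $\Delta\Gamma_{NMK}$ with $\eta^{NK}$ and using $\eta^{NK}\eta_{NK}=2D$ and $\tr(\eta^{-1}\CH)=0$ gives $\eta^{NK}\Delta\Gamma_{NMK}=\tfrac{2(1-D)}{D-1}\,v_M=-2v_M$, so the trace of $\Gamma^{(0)}$ must supply the rest; a short computation yields $\eta^{NK}\Gamma^{(0)}_{NMK}=\tfrac12\CH^{PL}\partial_L\CH_{PM}$, whence the choice $v_Q=\partial_Qd+\tfrac14\CH^{PL}\partial_L\CH_{PQ}$ reproduces $-2\partial_Md$ exactly. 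The main obstacle I anticipate is purely the index bookkeeping in the metricity and Gualtieri-torsion checks for $\Gamma^{(0)}$, where a derivative may land on $\CH$ inside products of two or three factors of $\CH$ and the required cancellations hinge delicately on the involution identity and on the symmetry of $\CH_{MN}$; reading $\Gamma^{(0)}$ off as the chiral components of the ordinary metric connection on the two $P_\pm$-eigenbundles is the cleanest way I see to keep this manageable.
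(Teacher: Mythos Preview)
The paper does not prove this proposition at all; it is quoted from \cite{Hohm:2011si} and merely stated, so there is no proof in the paper to compare against. Your strategy of direct verification---checking antisymmetry in the last two indices, $\CH$-metricity, vanishing Gualtieri torsion, and finally fixing the trace via the dilaton constraint---is exactly the route taken in the original reference and is the natural one here.

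Your structural split $\Gamma=\Gamma^{(0)}+\Delta\Gamma$ is the standard decomposition in the DFT literature: $\Gamma^{(0)}$ is designed to saturate the metricity and torsion conditions, while $\Delta\Gamma$ lies in their common kernel and is used to adjust the trace. Two small remarks. First, in the displayed formula \eqref{connectioncoef} the repeated index $M$ in the final parenthesis $\CH^{PM}\partial_M\CH_{PQ}$ is a dummy clashing with the free index of $\Gamma_{MNK}$; you correctly handle this by renaming it to $L$ in your $v_Q$. Second, your ``short computation'' for $\eta^{NK}\Gamma^{(0)}_{NMK}$ hides several uses of the involution identity $\CH\eta^{-1}\CH=\eta$ together with $\tr(\eta^{-1}\CH)=0$ and $\det\CH=1$ (so that $\CH^{NQ}\partial_M\CH_{QN}=0$); since this is precisely the step where non-obvious cancellations occur, it would strengthen the write-up to display it explicitly rather than assert the result.
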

Using these connection coefficients, an action for double field theory was formulated in \cite{Hohm:2011si} using the extended Ricci scalar \eqref{scal}.

\subsection{Comments on a global description}\label{ssec:global}

There is now an obvious procedure for turning our local description of double field theory into a global one. The first step consists of integrating the infinitesimal symmetries described by the Lie 2-algebra structure on $\CE_2(M)$ to finite symmetries. The second step is then to use the resulting finite symmetries to patch together local descriptions.

In principle, it is possible to integrate any Lie 2-algebra, using the various techniques in the literature \cite{Getzler:0404003,Henriques:2006aa,Severa:1506.04898}. These techniques, however, are very involved, and the outcome is often cumbersome in the sense that a categorical equivalence has to be applied to it to be useful.

Fortunately, we do not really require the finite symmetries themselves, but merely their action on extended tensors. This action is simply given by a Lie algebra, cf.\ section \ref{ssec:Symmetries_of_DFT} and equation \eqref{eq:L_infity_action}, which is readily integrated. Moreover, this has been done before in \cite{Hohm:2013bwa,Hohm:2012gk}.

A proposal of how to glue together local descriptions of double field theory with finite symmetries was made in \cite{Berman:2014jba}. In \cite{Papadopoulos:2014mxa}, however, it was shown that this procedure works at most for gerbes with trivial or purely torsion characteristic class. That is, the 3-form curvature $H$ of the $B$-field is (globally) exact and thus does not exhibit any topological non-triviality. This is clearly too restrictive for a general application in string theory. 

From our perspective, this is not very surprising, as non-trivial background fluxes require to work with the twisted symmetries introduced in section \ref{ssec:twisted_extended_symmetries}. It is the Lie algebra of actions of the twisted infinitesimal symmetries that should be integrated and used in the patching of local descriptions of double field theories with non-trivial background fluxes. We hope to report on progress in this direction in future work.

One can, however, adopt a different standpoint, which also explains that a topologically non-trivial 3-form flux is precisely where things break down. As shown in \cite{Bouwknegt:2003vb}, global T-duality with non-trivial $H$-flux induces a change in topology. In particular, the $H$-flux will encode the topology of the T-dual circle fibration. Even worse, performing a T-duality along the fibers of the torus bundle $T^2\embd T^3\rightarrow S^1$ with $H$-flux the volume form, it was observed in \cite{Kachru:2002sk} that a circle disappears under T-duality. The T-dual was then interpreted in \cite{Mathai:2004qq} as a continuous field of stabilized noncommutative tori fibered over $S^1$. It is very hard to imagine that this situation can be captured by gluing together the current local description of double field theory. Inversely, however, there may be some room to turn pre-N$Q$-manifolds into ordinary noncommutative and/or non-associative N$Q$-manifolds.

Finally, let us note that in generalized geometry, one might be tempted to integrate the underlying Courant algebroid $\CV_2(M)$, which is a symplectic Lie 2-algebroid, to a symplectic Lie 2-groupoid, using the methods of \cite{LiBland:2011aa,Severa:1506.04898}. It would be natural to expect that the associated Lie $2$-algebra of infinitesimal symmetries survives this integration, but this is not clear at all. 

Moreover, when trying to apply this integration method to the case of double field theory, we face the problem that the underlying geometric structure is a pre-N$Q$-manifold instead of an N$Q$-manifold.

\section*{Acknowledgements}

We would like to thank David Berman, Ralph Blumenhagen, André Coimbra, Dieter L{\"u}st and Jim Stasheff for discussions. The work of CS was partially supported by the Consolidated Grant ST/L000334/1 from the UK Science and Technology Facilities Council. AD and CS want to thank the Department of Mathematics of Heriot-Watt University and the Institut f\"ur Theoretische Physik in Hannover for hospitality, respectively.

\bibliography{bigone}

\bibliographystyle{latexeu}

\end{document}